\documentclass[11pt,reqno]{amsproc}
\linespread{1.1}
\allowdisplaybreaks
\usepackage{fullpage}
\usepackage[semicolon,square,authoryear]{natbib}
\numberwithin{equation}{section}
\usepackage{cite}
\usepackage{enumerate}
\usepackage{enumitem}
\usepackage[framed,numbered]{mcode}
\usepackage{caption}
\usepackage{color}
\usepackage{graphicx}
\usepackage{psfrag,epsfig}
\usepackage{epstopdf}
\usepackage{subfig}
\usepackage{multirow}
\usepackage{listings}
\usepackage{longtable}
\usepackage{booktabs,tabu}
\usepackage[debug=false, colorlinks=true, pdfstartview=FitV, linkcolor=blue, citecolor=blue, urlcolor=blue]{hyperref}
\newtheorem{remark}{Remark}
\newtheorem{theorem}{Theorem}
\newtheorem{corollary}{Corollary}
\usepackage{morefloats}

\newlength{\drop}
\definecolor{amethyst}{rgb}{0.6, 0.4, 0.8}
\definecolor{burgundy}{rgb}{0.5, 0.0, 0.13}

\title{\textbf{Variational inequality approach 
to enforcing the non-negative constraint for advection-diffusion equations}}
\author{\textbf{J.~Chang}, and \textbf{K.~B.~Nakshatrala} \\
{\small Department of Civil and Environmental Engineering, 
  University of Houston. \\
  \textbf{Correspondence to:}~\textsf{knakshatrala@uh.edu}}}

\date{\today}
\begin{document}

\begin{titlepage}
  \drop=0.1\textheight
  \centering
  \vspace*{\baselineskip}
  \rule{\textwidth}{1.6pt}\vspace*{-\baselineskip}\vspace*{2pt}
  \rule{\textwidth}{0.4pt}\\[\baselineskip]
  {\LARGE \textbf{\color{burgundy}
    Variational inequality approach to enforcing the\\[0.4\baselineskip] 
    non-negative constraint for advection-diffusion\\[0.4\baselineskip]equations}}\\[0.4\baselineskip]
    \rule{\textwidth}{0.4pt}\vspace*{-\baselineskip}\vspace{3.2pt}
    \rule{\textwidth}{1.6pt}\\[\baselineskip]
    \scshape
    An e-print of the paper is available 
    on arXiv:~1611.08758. \par
    \vspace*{0.25\baselineskip}
    Authored by \\[0.25\baselineskip]
    {\Large J.~Chang \par}
    {\itshape Graduate Student, University of Houston. \par}
    \vspace*{0.3\baselineskip}
    {\Large K.~B.~Nakshatrala\par}
    {\itshape Department of Civil \& Environmental Engineering \\
    University of Houston, Houston, Texas 77204--4003. \\ 
    \textbf{phone:} +1-713-743-4418, \textbf{e-mail:} knakshatrala@uh.edu \\
    \textbf{website:} http://www.cive.uh.edu/faculty/nakshatrala\par}
    \vspace*{0.3\baselineskip}
    \begin{figure}[h]
    \includegraphics[scale=0.45,clip]{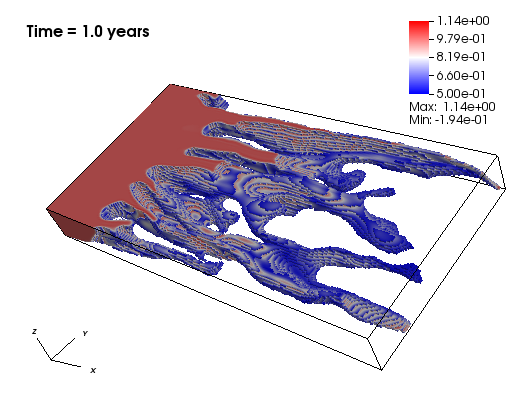}
    \includegraphics[scale=0.45,clip]{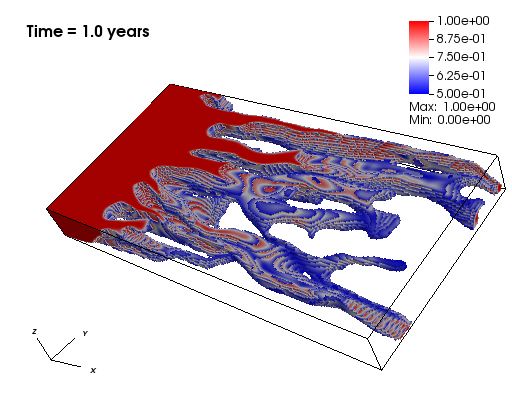}
    \caption*{\emph{These figures depict the concentration profiles of the 
    advection-diffusion equation under the Discontinuous Galerkin formulation
    (left) and the Discontinuous Galerkin formulation with bounded constraints
    enforced through a variational inequality (right). Only
    the regions where concentrations meet the threshold of 0.5 and above 
    are shown.}}
\end{figure}
    \vfill
    {\scshape 2016} \\
    {\small Computational \& Applied Mechanics Laboratory} \par
\end{titlepage}

\begin{abstract} 
Predictive simulations are crucial for the success of many 
subsurface applications, and it is highly desirable to obtain 
accurate non-negative solutions for transport equations in 
these numerical simulations. 
To this end, optimization-based methodologies based on quadratic programming 
(QP) have been shown to be a viable approach to ensuring discrete 
maximum principles and the non-negative constraint for anisotropic 
\emph{diffusion} equations. 
In this paper, we propose a computational framework 
based on the variational inequality (VI) which can also be used to 
enforce important mathematical properties (e.g., maximum principles) 
and physical constraints (e.g., the non-negative constraint). We demonstrate 
that this framework is not only applicable to diffusion equations but 
also to non-symmetric advection-diffusion equations. 
An attractive feature of the proposed framework is that it works with 
with any weak formulation for the advection-diffusion equations, 
including single-field formulations, which are computationally 
attractive. 
A particular emphasis is placed on the parallel and algorithmic performance 
of the VI approach across 
\emph{large-scale} and \emph{heterogeneous} problems. It is also 
shown that QP and VI are equivalent under certain conditions. 
State-of-the-art QP and VI solvers available from 
the PETSc library are used on a variety of steady-state
2D and 3D benchmarks, and a comparative study on the scalability 
between the QP and VI solvers is presented. We then extend the proposed 
framework to transient problems by simulating the miscible 
displacement of fluids in a heterogeneous porous medium and 
illustrate the importance of enforcing maximum 
principles for these types of coupled problems. 
Our numerical experiments indicate that VIs are indeed a viable approach for 
enforcing the maximum principles and the non-negative constraint 
in a large-scale computing environment. Also provided are Firedrake
project files as well as a discussion on the computer implementation
to help facilitate readers in understanding the proposed framework.
\end{abstract}
\keywords{anisotropy; variational inequalities; quadratic
  programming; non-negative solutions; maximum principles;
  parallel computing; advection-diffusion equations; miscible 
  displacement}
\maketitle


\section*{A list of abbreviations} 
\label{Sec:S0_Notation}
\begin{longtable*}{|p{.15\textwidth} | p{.75\textwidth}|} \hline
  \small
  ABC & Arnold-Beltrami-Childress \\
  CG & Conjugate Gradient method \\
  DG & Discontinuous Galerkin \\
  GAL & (Continuous) Galerkin \\
  GMRES & Generalized Minimal Residual method \\
  KSP & Krylov subspace iterative solver\\
  MCP & Mixed complementarity problem \\
  MP & Minimization problem \\
  MPI & Message Passing Interface \\
  PETSc & Portable Extensible Toolkit for Scientific Computation \citep{petsc-user-ref} \\
  QP & Quadratic programming \\
  QP - TRON & Trust region Newton method \\
  SP & Strong problem \\
  SUPG & Streamlined Upwind Petrov-Galerkin \\
  TAO & Toolkit for Advanced Optimization \citep{tao-user-ref}\\
  VI & Variational Inequality \\
  VI - SS & Semi-smooth method \\
  VI - RS & Reduced-space active-set method \\
  WF & Weak formulation \\
  \hline
\end{longtable*}

\section{INTRODUCTION AND MOTIVATION}
\label{Sec:S1_VI_Intro}
This paper presents a numerical methodology based on
\emph{variational inequalities} for \emph{anisotropic}
diffusion and advection-diffusion equations that satisfies
discrete maximum principles, meets the non-negative constraint,
and is well-suited for solving large-scale problems
using \emph{parallel computing}. We now provide a motivation
behind the current work, a discussion on prior works,
and an outline of our approach highlighting the
significance of the work presented herein.

The diffusion and advection-diffusion equations are 
important partial differential equations which are
commonly used to model flow and transport of chemical
species in porous media. Some of the applications
include subsurface remediation \citep{uscleaning_EPA_2004,
Harp_SERR_2013,Heikoop_CG_2014} and transport of
radionuclides \citep{Hammond_WRR_2010,Genty_TPM_2011}. 
Since these important problems are not analytically
tractable, one needs to rely on predictive 
numerical simulations. An important aspect in 
a predictive simulation of these equations is 
to satisfy the non-negative constraint of
concentration of chemical species.

Research efforts over the years have successfully created numerical 
models and discretization for these
equations, but they are not without their setbacks. For example, 
non-monotone discretizations like the finite element method may result 
in spurious oscillations with high P\'eclet numbers. Other common issues 
that may occur within highly heterogeneous and anisotropic diffusion-type
equations are violations of the maximum principle and the non-negative constraint 
\citep{Ciarlet_CMAME_1973,Lipnikov_JCP_2007,Liska_CiCP_2008,Genty_TPM_2011}. 
Such numerical setbacks can result in algorithmic failures or sharp 
fronts that may result in erroneous approximations of reactive 
transport. Moreover, several important applications which require
accurate predictive capabilities of transport solvers are often
large-scale and cannot be solved on a single computer. It is important
for numerical algorithms to not only ensure maximum principle but scale 
well with respect to both problem size and computing concurrency. Obtaining 
numerical solutions within a reasonable amount of time is the ultimate 
goal when selecting or designing algorithms that are robust 
and can ensure non-negative concentrations for a 
wide range of subsurface transport applications.

\subsection{Prior works on non-negative formulations}
The prior non-negative formulations can be broadly classified 
into the following five categories:
\begin{enumerate}[leftmargin=*,label=(\alph*)]
\item \emph{Reporting the violations}: In \citep{Payette_IJNME_2012},
several cases of violations of the maximum principle and the non-negative 
constraint have been showcased for different anisotropic 
diffusivity tensors. This paper also demonstrates that $h$- and $p$-refinements
do not eliminate these violations. The adverse effects due to violations
of the non-negative constraint for non-linear ecological models
and chemically reacting flows have been illustrated in \citep{mudunuru2015local}.
Neither of these papers have provided any fix to overcome these violations.
\item \emph{Mesh restrictions}: The first work on  maximum 
principles under the finite element method can be traced back to 
the seminal paper by \citep{Ciarlet_CMAME_1973}. This paper considered
\emph{isotropic} diffusion, and has shown that an acute-angled triangular 
mesh (which is a restriction on the mesh) will satisfy the maximum principle 
under the finite element method. Anisotropic diffusion equations have been 
addressed in \citep{huang2015discrete}, wherein they developed 
an algorithm to generate metric-based meshes to satisfy the maximum principle
for such equations. \citep{mudunuru2016mesh} have addressed various versions
of maximum principles for diffusion and advection-diffusion equations and 
studied the performance of metric-based meshes for these equations. This
paper highlighted the main deficiency of metric-based meshes, which is the
need to alter the mesh for different diffusivity tensors.
A comprehensive list and discussion of other prior works 
related to enforcing mesh restrictions to meet the maximum principle and the 
non-negative constraint can also be found in \citep{mudunuru2016mesh}.
\item \emph{Developing or altering formulations in the continuum setting}: 
Two works that fall under this category are \citep{harari2004stability,
Pal_IJNME_2016}, both of which addressed transient transport problems.
\citep{harari2004stability} utilized a stabilized method that
is available for Helmholtz-type equations to construct a stabilized formulation
for transient \emph{isotropic} diffusion equations to meet the maximum principle.
This approach, as presented in \citep{harari2004stability}, is applicable to 
one-dimensional setting. \citep{Pal_IJNME_2016} meets maximum principles 
for transient transport equations 
by employing two techniques. They rewrote transient transport 
equations, which are parabolic in nature, into modified Maxwell-Cattaneo 
equations, which are hyperbolic in nature, and employed the
\emph{space-time} Discontinuous Galerkin approach.
\item \emph{Non-finite element approaches}: A finite volume-based approach, 
to enforce the non-negative constraint, as proposed in \citep{LePotier_CRM_2005}, 
involves a non-linear iterative procedure to select 
appropriate collocation points for cell concentrations. This technique 
has been refined by several others including \citep{Lipnikov_JCP_2007} 
and \citep{sheng2016new}. Other similar approaches include the 
mimetic finite difference method \citep{da2014mimetic}, which ensures
monotonicity and positivity. Since neither the finite difference
nor finite volume methods are based on weak formulations, these mentioned works 
cannot be trivially extended to the finite element method.
\item \emph{Optimization-based techniques at the discrete level}: 
Several studies over the years
\citep{Nakshatrala_JCP_2009,Nagarajan_IJNMF_2011,
Nakshatrala_JCP_2013,Nakshatrala_CiCP_2016}
have focused on the development of optimization-based 
methodologies that enforce the maximum principle and the non-negative 
constraint for diffusion problems. An optimization-based methodology based on 
the work of the aforementioned studies has been applied to 
enforce maximum principles advection-diffusion equations 
\citep{Mudunuru_JCP_2016}. By reformulating the advection-diffusion
problem as a mixed finite element formulation under the least-squares
formalism, one introduces flux variables into the problem. The discrete
formulation is also symmetric and positive-definite, so one can easily
apply both bounded constraints and equality constraints to ensure
non-negative solutions and local mass conservation respectively.
It should be noted that one may also employ normal equations or the 
least-squares approach to ensure that the minimization problem for 
non-symmetric problems is convex \citep{Demmel_SIAM_1997,Burdakov_JCP_2012,
Pal_IJNME_2016,Chang_JPM_2016}. All of these studies have employed 
\textsf{quadratic programming (QP)} techniques to enforce the maximum principle
on 2D academic problems, but the problems studied are small-scale and 
do not require state-of-the-art \textsf{Krylov subspace (KSP)} 
iterative solvers and preconditioners. 
Moreover, it is extremely difficult to find solvers for least-squares or
penalty-type problems. To this end, we are interested in numerical formulations 
and solvers that are suitable for \emph{large-scale} applications. 
It has been shown recently that parallel optimization-based solvers 
can handle large-scale heterogeneous and anisotropic 
diffusion in \citep{Chang_JOMP_2016}, so we want to 
extend the work presented in that study to advection-diffusion
equations.
\end{enumerate}

\subsection{Our approach and its salient features}
The main contribution of this paper is 
to present a finite element computational framework 
applicable to both diffusion and advection-diffusion 
equations that meets the maximum principle and satisfies the 
non-negative constraint. The 
framework is built by rewriting the \textsf{weak 
formulation (WF)} as a \textsf{variational inequality (VI)}
\citep{Ulbrich_SIAM_2011}. 

The field of VIs grew from a problem posed by Antonio Signorini
\citep{signorini1933sopra,zbMATH03149633}.
This problem was later coined as ``Signorini
problem'' by Gaetano Fichera, who was a
student of Signorini. Fichera posed the problem more
precisely and obtained a variational inequality
corresponding to the problem using which he
established existence and uniqueness of
solutions \citep{fichera1964problemi,fichera1965linear}.
VIs have been employed to study contact
problems \citep{kikuchi1988contact,hlavacek2012solution},
obstacle problems \citep{rodrigues1987obstacle}, elastoplastic
problems \citep{hlavacek2012solution,han2012plasticity} and
other problems arising in mechanics and mathematics
\citep{Kinderlehrer_VI_2000}.
If the bilinear form under the WF 
is symmetric, one can rewrite the WF 
as a QP, which is a special 
case of VIs \citep{Chipot_elliptic_2009}. 
Most of the existing \emph{single-field} WFs 
for advection-diffusion equations do not have
symmetric bilinear forms, and hence one cannot construct
equivalent problems under QP. To the best of the authors' 
knowledge, VIs have not be employed to develop 
numerical formulations to satisfy maximum principles 
and the non-negative constraint for anisotropic 
advection-diffusion equations. 

The framework is particularly suited for large-scale 
problems, which is the case with many practical subsurface 
applications. The proposed framework enjoys the following
salient features: 
\begin{enumerate}[label=(\Roman*)]
\item One can enforce bounded constraints for any 
  transport problems that may be non-symmetric or 
  nonlinear.
\item One can employ any numerical formulation, 
  even a single-field formulation, for solving 
  advection-diffusion equations.
\item One can leverage on existing high performance 
  computing libraries and toolkits (e.g., solvers 
  and preconditioners). 
\item The framework is amenable for parallel computing, 
  which will be illustrated using both strong and weak 
  scaling studies. 
\end{enumerate}

The rest of the paper is organized as follows. In Section
\ref{Sec:S2_VI_Continuous}, we present the boundary value
problem for steady-state diffusion and advection-diffusion
equations. In Section \ref{Sec:S3_VI_Weak}, we present the 
variational inequality (VI) and the various single-field weak 
formulations (WF) in the continuous setting. In Section 
\ref{Sec:S4_VI_Discrete}, we propose the computational 
framework in a discrete setting and discuss in detail the
specific solvers and implementation procedure.
In Section \ref{Sec:S5_VI_NR}, numerical results for the 
steady-state governing equations under the proposed framework 
are shown, and we conduct a thorough 
strong and weak-scaling study to demonstrate the parallel
performance. In Section \ref{Sec:S6_VI_TR}, we provide an 
extension of the proposed framework to transient problems 
and illustrate the performance of miscible displacement in porous media, 
which is a coupled non-linear phenomenon. Conclusions are 
drawn in Section \ref{Sec:S7_VI_CR}. To facilitate the reader 
to be able to reproduce the results given in this paper, sample 
Firedrake project files along with the discussion on the solution 
strategy for large-scale Darcy equations
are provided in Appendices \ref{A1:code} and \ref{A2:darcy}.

On the notational front, we denote all the continuum vectors
by lowercase boldface unitalicized letters, and the vectors
in the discrete setting are denoted by lowercase boldface
italic letters. We denote all the continuum second-order
tensors by boldface uppercase unitalicized letters, and all
the finite element matrices are denoted by uppercase boldface
italicized letters. In this paper, repeated indices do not
imply summation. Other notational conventions are introduced
as needed.

\section{GOVERNING EQUATIONS IN THE CONTINUOUS SETTING}
\label{Sec:S2_VI_Continuous}
Let $\Omega \subset \mathbb{R}^{d}$ be a bounded open
domain, where `$d$' denotes the number of spatial
dimensions. The boundary is denoted by $\partial
\Omega = \overline{\Omega} - \Omega$, where a 
superposed bar denotes the set closure. We denote
the set of all $k$-times continuously differentiable
functions on $\Omega$ by $C^{k}(\Omega)$. We denote
the set of all functions in $C^{0}(\Omega)$ that are
continuous to the boundary by $C^{0}(\overline{\Omega})$.
A spatial point is denoted by $\mathbf{x} \in
\overline{\Omega}$. The gradient and divergence
operators with respect to $\mathbf{x}$ are, respectively,
denoted by $\mathrm{grad}[\cdot]$ and $\mathrm{div}[\cdot]$.
The unit outward normal to boundary is denoted by
$\widehat{\mathbf{n}}(\mathbf{x})$. 

Let $c(\mathbf{x})$ denote the concentration field. The
boundary is divided into two parts:~$\Gamma^{\mathrm{D}}$
and $\Gamma^{\mathrm{N}}$, such that $\Gamma^{\mathrm{D}} \cup
\Gamma^{\mathrm{N}} = \partial \Omega$ and $\Gamma^{\mathrm{D}}
\cap \Gamma^{\mathrm{N}} = \emptyset$. $\Gamma^{\mathrm{D}}$
is that part of the boundary on which Dirichlet boundary
conditions are enforced (i.e., concentration is prescribed).
$\Gamma^{\mathrm{N}}$ is the part of the boundary on which
Neumann boundary conditions are enforced (i.e., flux is
prescribed). When advection is present, the Neumann
boundary is further divided into inflow and outflow regions,
which are defined as follows:
\begin{subequations}
\begin{align}
  \label{Eqn:S2_inflow}
  \Gamma^{\mathrm{N}}_{\mbox{\small inflow}} &:= 
  \left\{\mathbf{x}\in\Gamma^{\mathrm{N}}\;\Bigm|\;
  \mathbf{v}(\mathbf{x})\cdot\widehat{\mathbf{n}}(\mathbf{x})
  < 0\right\}\\
  \label{Eqn:S2_outflow}
  \Gamma^{\mathrm{N}}_{\mbox{\small outflow}} &:= 
  \left\{\mathbf{x}\in\Gamma^{\mathrm{N}}\;\Bigm|\;
  \mathbf{v}(\mathbf{x})\cdot\widehat{\mathbf{n}}(\mathbf{x})
  \ge 0\right\}
\end{align}
\end{subequations}
For uniqueness of the solution under a steady-state response, we
assume that concentration is prescribed on a non-zero part of
the boundary (i.e., $\mathrm{meas}\left(\Gamma^{\mathrm{D}} \right) > 0$). 

\subsection{Strong problems (SP)}
The \textsf{strong problem (SP)} for steady-state diffusion 
reads:~Find $c(\mathbf{x}) \in C^{2}(\Omega) \cap C^{0}
(\overline{\Omega})$ such that we have 
\begin{subequations}
  \label{Eqn:S2_D_GE}
  \begin{alignat}{2}
    -&\mathrm{div}
    [\mathbf{D}(\mathbf{x})\mathrm{grad}[c(\mathbf{x})]] = f(\mathbf{x})
    &&\quad \mathrm{in} \; \Omega \\
    &c(\mathbf{x}) = c^{\mathrm{p}}(\mathbf{x})
    &&\quad \mathrm{on} \; \Gamma^{\mathrm{D}} \\
    -&\widehat{\mathbf{n}}(\mathbf{x}) \cdot \mathbf{D}(\mathbf{x})
    \mathrm{grad}[c(\mathbf{x})] = q^{\mathrm{p}}(\mathbf{x})
    &&\quad \mathrm{on} \; \Gamma^{\mathrm{N}}
  \end{alignat}
\end{subequations}
and the SP for steady-state advection-diffusion
reads:~Find $c(\mathbf{x}) \in C^{2}(\Omega) \cap C^{0}(\overline{\Omega})$ 
such that we have
\begin{subequations}
  \label{Eqn:S2_AD_GE}
  \begin{alignat}{2}
    &\mathbf{v}(\mathbf{x})\cdot\mathrm{grad}[c(\mathbf{x})] - 
    \mathrm{div}[\mathbf{D}(\mathbf{x})\mathrm{grad}[c(\mathbf{x})]] = f(\mathbf{x})
    &&\quad \mathrm{in} \; \Omega \\
    &c(\mathbf{x}) = c^{\mathrm{p}}(\mathbf{x})
    &&\quad \mathrm{on} \; \Gamma^{\mathrm{D}} \\
    &\widehat{\mathbf{n}}(\mathbf{x}) \cdot \left(\mathbf{v}(\mathbf{x})
    c(\mathbf{x}) - \mathbf{D}(\mathbf{x})\mathrm{grad}[c(\mathbf{x})]\right) = 
    q^{\mathrm{p}}(\mathbf{x})
    &&\quad \mathrm{on} \; \Gamma^{\mathrm{N}}_{\mbox{\small inflow}} \\
    -&\widehat{\mathbf{n}}(\mathbf{x}) \cdot \mathbf{D}(\mathbf{x})\mathrm{grad}[c(\mathbf{x})] = 
    q^{\mathrm{p}}(\mathbf{x})
    &&\quad \mathrm{on} \; \Gamma^{\mathrm{N}}_{\mbox{\small outflow}}
  \end{alignat}
\end{subequations}
where $\mathbf{v}(\mathbf{x})$ is the advective velocity,
$f(\mathbf{x})$ is the prescribed volumetric source/sink,
$c^{\mathrm{p}}(\mathbf{x})$ is the prescribed concentration on
the boundary, $q^{\mathrm{p}}(\mathbf{x})$ is the prescribed
flux on the boundary, and $\mathbf{D}(\mathbf{x})$ is the
second-order diffusivity tensor. The diffusivity tensor
is assumed to be bounded and uniformly elliptic. That is,
there exist two constants $0 < \xi_{1} \leq \xi_{2} <
+\infty$ such that
\begin{align}
  \xi_{1} \mathbf{y} \cdot \mathbf{y} \leq
  \mathbf{y} \cdot \mathbf{D}(\mathbf{x}) \mathbf{y} \leq
  \xi_{2} \mathbf{y} \cdot \mathbf{y}
  \quad \forall \mathbf{y} \in \mathbb{R}^{d}
\end{align}
Moreover, the diffusivity tensor is
assumed to be symmetric. That is,
\begin{align}
  \mathbf{D}(\mathbf{x}) = \mathbf{D}^{\mathrm{T}}(\mathbf{x})
  \quad \forall \mathbf{x} \in \Omega 
\end{align}
A solution to SP is commonly referred 
to as a \emph{classical} solution.

\subsection{Maximum principle and the non-negative constraint}
From the theory of partial differential equations, 
it is well-known that a 
classical solution for the above mentioned 
SPs satisfies maximum principles. For completeness, 
we provide below the statement of the classical 
maximum principle of second-order elliptic 
partial differential equations with Dirichlet 
boundary conditions on the entire boundary.
\begin{theorem}{(Classical maximum principle)}
  If $\Gamma^{\mathrm{D}} = \partial \Omega$, 
  $c(\mathbf{x}) \in C^{2}(\Omega) \cap 
  C^{0}(\overline{\Omega})$ and $f(\mathbf{x}) 
  \leq 0$, then
  \begin{align}
    \mathop{\mathrm{max}}_{\mathbf{x} \in \Omega} \; c(\mathbf{x}) \leq 
    \mathop{\mathrm{max}}_{\mathbf{x} \in \partial \Omega} \; 
    c^{\mathrm{p}}(\mathbf{x}) 
  \end{align}
\end{theorem}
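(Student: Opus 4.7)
The plan is to argue by contradiction. Suppose the conclusion fails, so that $c$ attains its maximum over $\overline{\Omega}$ at some interior point $\mathbf{x}_0 \in \Omega$ with $c(\mathbf{x}_0) > \max_{\partial\Omega} c^{\mathrm{p}}$. Elementary calculus then forces $\mathrm{grad}[c](\mathbf{x}_0) = \mathbf{0}$ and the Hessian $H(\mathbf{x}_0)$ to be symmetric negative semi-definite. Substituting into either the diffusion or the advection-diffusion SP, the advective term $\mathbf{v}\cdot\mathrm{grad}[c]$ vanishes at $\mathbf{x}_0$, and the product-rule expansion reduces the diffusive term to $\mathrm{div}[\mathbf{D}\,\mathrm{grad}[c]](\mathbf{x}_0) = \mathrm{tr}(\mathbf{D}(\mathbf{x}_0) H(\mathbf{x}_0))$, since the contribution $(\mathrm{div}\,\mathbf{D})\cdot\mathrm{grad}[c]$ drops out at the critical point. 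By uniform ellipticity, $\mathbf{D}(\mathbf{x}_0)$ is symmetric positive definite, and the standard fact that $\mathrm{tr}(AB)\le 0$ for symmetric positive definite $A$ and symmetric negative semi-definite $B$ forces $f(\mathbf{x}_0) \ge 0$. This already yields a contradiction whenever $f < 0$ strictly, settling the strict case.

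To extend to the general hypothesis $f \le 0$, I would introduce a barrier perturbation $c_\epsilon(\mathbf{x}) := c(\mathbf{x}) + \epsilon\,e^{\alpha x_1}$ for small $\epsilon > 0$ and a parameter $\alpha$ to be chosen. Writing the differential operator as $L[u] := \mathbf{v}\cdot\mathrm{grad}[u] - \mathrm{div}[\mathbf{D}\,\mathrm{grad}[u]]$, a direct computation on the bounded domain $\Omega$ gives $L[e^{\alpha x_1}] = \alpha\,e^{\alpha x_1}\bigl(v_1 - (\mathrm{div}\,\mathbf{D})_{1} - \alpha\,D_{11}\bigr)$. The uniform ellipticity bound $D_{11}(\mathbf{x}) \ge \xi_1 > 0$, combined with boundedness of $\mathbf{v}$, $\mathbf{D}$, and $\mathrm{div}\,\mathbf{D}$ on $\overline{\Omega}$, ensures that $L[e^{\alpha x_1}] < 0$ uniformly on $\overline{\Omega}$ once $\alpha$ is taken large enough. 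Hence $L[c_\epsilon] = f + \epsilon\,L[e^{\alpha x_1}] < 0$; the strict-case argument applied to $c_\epsilon$ yields $\max_{\overline{\Omega}} c_\epsilon = \max_{\partial\Omega} c_\epsilon$, and letting $\epsilon \downarrow 0$, together with continuity of $c$ on $\overline{\Omega}$ and $c|_{\partial\Omega} = c^{\mathrm{p}}$, recovers the claimed inequality.

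The main obstacle is precisely this passage from the strict to the non-strict regime for $f$: the perturbation $\phi$ must retain a strict sign under $L$ after both the advective contribution and the lower-order diffusive contribution are accounted for, and this is where uniform ellipticity $\xi_1 > 0$ plays an indispensable role, via the quadratic-in-$\alpha$ term $\alpha^{2}D_{11}$ that eventually dominates all other terms. A secondary technicality is that the product-rule expansion of $\mathrm{div}[\mathbf{D}\,\mathrm{grad}[c]]$ tacitly assumes $\mathbf{D}$ smooth enough for $\mathrm{div}\,\mathbf{D}$ to make classical sense pointwise, which is standard within the $C^2(\Omega) \cap C^0(\overline{\Omega})$ framework adopted in the statement.
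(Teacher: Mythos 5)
The paper does not actually prove this theorem --- its ``proof'' is a one-line citation to \citep{Gilbarg}. Your argument is precisely the standard proof from that reference (the weak maximum principle): the interior first/second-derivative test combined with $\mathrm{tr}(\mathbf{D}H)\le 0$ handles the strict case $f<0$, and the exponential barrier $\epsilon\,e^{\alpha x_1}$ with $\alpha$ large enough that $\alpha D_{11}\ge \alpha\xi_1$ dominates the lower-order terms extends it to $f\le 0$; the computation of $L[e^{\alpha x_1}]$ and the limit $\epsilon\downarrow 0$ are both correct, and the only caveats are the ones you already flag (classical differentiability of $\mathbf{D}$ and boundedness of $\mathbf{v}$ and $\mathrm{div}\,\mathbf{D}$ on $\overline{\Omega}$, which are implicit in the paper's classical-solution setting).
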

\begin{proof}
  A proof can be found in \citep{Gilbarg}.
\end{proof}
A generalization of the classical maximum principle that is 
relevant to this paper is provided
in \citep{Mudunuru_JCP_2016}. Specifically, they have extended 
the classical maximum principle on four fronts: the regularity 
of the solution is relaxed to $C^{1}(\Omega) \cap C^{0}(\overline{\Omega})$, 
the regularity of the volumetric source $f(\mathbf{x})$ 
is relaxed to the space of square integrable functions, the 
boundary can have both Dirichlet and Neumann boundary 
conditions, and the Neumann boundary conditions are 
further divided into inflow and outflow (i.e., similar 
to equations \eqref{Eqn:S2_inflow}--\eqref{Eqn:S2_outflow}).
For the sake of brevity, we defer all interested readers 
to the suggested reference.
Another property that is relevant to this paper 
is non-negative solutions, which can be shown 
to be a special case of maximum principles. In 
particular, the above maximum principle implies 
the following result:
\begin{corollary}{(Non-negative solutions)}
  If $\Gamma^{\mathrm{D}} = \partial \Omega$, 
  $c(\mathbf{x}) \in C^{2}(\Omega) \cap 
  C^{0}(\overline{\Omega})$, $c^{\mathrm{p}}(\mathbf{x}) 
  \geq 0$, and $f(\mathbf{x}) \geq 0$, then
  \begin{align}
    0 \leq c(\mathbf{x}) \quad \forall \mathbf{x} \in \overline{\Omega}
  \end{align}
\end{corollary}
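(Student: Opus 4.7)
The plan is to reduce the non-negativity statement to the classical maximum principle that was just stated, by the standard trick of passing to the auxiliary function $w(\mathbf{x}) := -c(\mathbf{x})$. Since the differential operator $u \mapsto -\mathrm{div}[\mathbf{D}\,\mathrm{grad}[u]]$ is linear, $w$ will solve the same type of elliptic problem, but with source $-f(\mathbf{x})$ and prescribed boundary data $-c^{\mathrm{p}}(\mathbf{x})$. The regularity $w \in C^{2}(\Omega) \cap C^{0}(\overline{\Omega})$ is automatic from the regularity assumed on $c$, and the hypothesis $\Gamma^{\mathrm{D}} = \partial\Omega$ carries over verbatim.

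First, I would verify the three hypotheses of the classical maximum principle for $w$: namely the regularity, the condition on the full Dirichlet boundary, and, crucially, the sign condition on the source. The assumption $f(\mathbf{x}) \geq 0$ in the corollary translates into $-f(\mathbf{x}) \leq 0$, which is exactly what the maximum principle demands. Applying the theorem to $w$ then yields
\begin{align*}
  \max_{\mathbf{x} \in \Omega} \, w(\mathbf{x}) \;\leq\;
  \max_{\mathbf{x} \in \partial\Omega} \, \bigl(-c^{\mathrm{p}}(\mathbf{x})\bigr).
\end{align*}
Rewriting $\max(-c) = -\min(c)$ on both sides converts this to a minimum principle:
\begin{align*}
  \min_{\mathbf{x} \in \Omega} \, c(\mathbf{x}) \;\geq\;
  \min_{\mathbf{x} \in \partial\Omega} \, c^{\mathrm{p}}(\mathbf{x}).
\end{align*}

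Finally, the hypothesis $c^{\mathrm{p}}(\mathbf{x}) \geq 0$ bounds the right-hand side from below by zero, yielding $c(\mathbf{x}) \geq 0$ for every $\mathbf{x} \in \Omega$. Extending the inequality to $\overline{\Omega}$ uses the boundary continuity $c \in C^{0}(\overline{\Omega})$ together with the Dirichlet data $c|_{\partial \Omega} = c^{\mathrm{p}} \geq 0$. There is no real obstacle here; the entire argument is a one-line symmetry between maximum and minimum principles for a linear operator, and the only thing to be careful about is that the hypothesis on $f$ flips sign when one passes from $c$ to $-c$, so the direction of the inequality in the source term is precisely what makes the reduction work.
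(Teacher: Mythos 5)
Your argument is correct: the substitution $w = -c$ turns the hypothesis $f \geq 0$ into the source condition $-f \leq 0$ required by the classical maximum principle, and the resulting minimum principle combined with $c^{\mathrm{p}} \geq 0$ gives $c \geq 0$ on $\overline{\Omega}$. This is precisely the deduction the paper intends — it states the corollary as an immediate consequence of the preceding maximum principle without writing out the details — so your proof simply makes that standard reduction explicit.
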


\emph{The central aim of this paper is to obtain numerical 
solutions 
to the above governing equations (i.e., equations 
\eqref{Eqn:S2_D_GE} and \eqref{Eqn:S2_AD_GE}) that 
respect maximum principles and the non-negative 
constraint.} 

The main task will then be to find an appropriate 
setting for numerical solutions.
The finite difference method directly discretizes the SP. 
However, under the finite element method, the 
SP is rewritten as a WF, which is equivalent 
to the SP under some regularity assumptions. A solution to 
a WF is referred to as a weak solution. As 
mentioned in Section \ref{Sec:S1_VI_Intro} 
and will be shown using several examples later
in this paper, a WF does not guarantee 
non-negative solutions in the discrete setting. 
To overcome this deficiency, some non-negative 
formulations, especially for diffusion-type
equations, have rewritten the WF as an equivalent 
\textsf{minimization problem (MP)} and augmented 
with bound constraints. However, it 
needs to be emphasized that such conversion is 
not always possible, which is the case with the 
typical WF for advection-diffusion 
equations, as these formulations have non-symmetric 
bilinear forms.  
In order to handle non-self-adjoint differential
operators (e.g., advection-diffusion equation)
and WFs with non-symmetric bilinear forms, we 
rewrite a given WF as a VI. In order to satisfy 
maximum principles and the non-negative constraint, 
we restrict the feasible solution space of the VI 
formulation using bound constraints.
It needs to be mentioned that one can pose the 
VI as an equivalent MP only if the bilinear 
form is symmetric.
Figure \ref{Fig:S2_QP_vs_VI_description} illustrates
the various ways of rewriting the governing equations,
and the conditions under which one form is equivalent 
to the other.
We now present various WFs for diffusion and 
advection-diffusion equations, which will form 
the basis for our proposed VI-based formulations. 

\begin{figure}[t]
  \centering
  \subfloat{\includegraphics[scale=1.0]{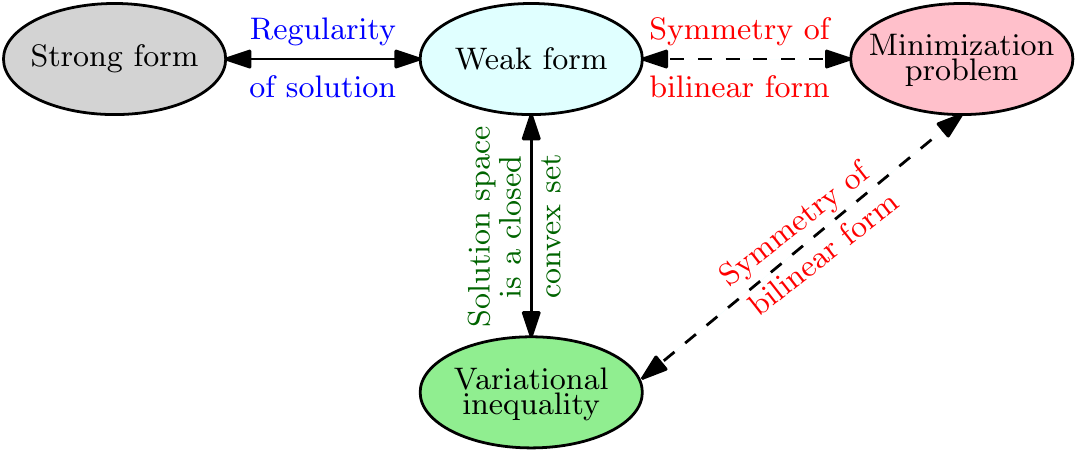}}
  \caption{Relationships between the strong problem (SP), 
    a weak formulation (WF), a variational inequality (VI), and
    the minimization problem (MP).
    \label{Fig:S2_QP_vs_VI_description}}
\end{figure}

\section{VARIATIONAL INEQUALITIES AND WEAK FORMULATIONS}
\label{Sec:S3_VI_Weak}
The non-negative constraint and maximum principles restrict 
the feasible solution space to a closed convex set. A variational 
inequality (VI) is basically a variational problem on a convex 
set, which need not be a vector space. To this end, let 
$\mathcal{C}$ denote the solution space for the concentration 
field, and  $\mathcal{K}$ be a closed convex subset of 
$\mathcal{C}$. The subset $\mathcal{K}$ is defined by 
the underlying maximum principles and the non-negative 
constraint. The formulation based on VIs 
corresponding to the mention SPs can be compactly written 
as:~Find $c(\mathbf{x}) \in \mathcal{K}$ such that we have
\begin{align}
  \mathcal{B}(w - c;c) \geq \mathcal{L}(w - c) \quad
  \forall w(\mathbf{x}) \in \mathcal{K}
\end{align}
where $\mathcal{B}(\cdot;\cdot)$ is a bilinear form 
and $\mathcal{L}(\cdot)$ is a linear functional, 
whose specific choices are provided by the associated 
weak formulation.  
A WF can be abstractly written as: Find $c(\mathbf{x}) \in 
\mathcal{C}$ such that we have
\begin{align}
  \label{Eqn:VI_Weak_formulation}
  \mathcal{B}(w;c) = \mathcal{L}(w) \quad
  \forall w(\mathbf{x}) \in \mathcal{W}
\end{align}
where $\mathcal{C}$ and $\mathcal{W}$ are appropriate 
function spaces for a given WF. 
Our intention is to illustrate the applicability of the 
proposed VI framework to a variety of WFs. 
To this end, we employ the \textsf{continuous Galerkin (GAL)}, 
\textsf{Streamlined Upwind Petrov-Galerkin (SUPG)}, and 
\textsf{Discontinuous Galerkin (DG)} formulations, which 
are documented below. 
For convenience, the
standard $L_2$ inner-product over $K$ is denoted as follows:
\begin{align}
  (a;b)_K = \int_K a(\mathbf{x})\cdot b(\mathbf{x})\;\mathrm{d}K
\end{align}

\subsection{Continuous Galerkin}
The relevant function spaces are:
\begin{align}
  \mathcal{C} &:= \left\{c(\mathbf{x}) \in H^{1}(\Omega)
  \; \Bigm\vert \; c(\mathbf{x}) = c^{\mathrm{p}}(\mathbf{x})
  \; \mathrm{on} \; \Gamma^{\mathrm{D}} \right\} \\
  \mathcal{W} &:= \left\{w(\mathbf{x}) \in H^{1}(\Omega)
  \; \Bigm\vert \; w(\mathbf{x}) = 0 
  \; \mathrm{on} \; \Gamma^{\mathrm{D}} \right\} 
\end{align}
where $H^{1}(\Omega)$ is a Sobolev space \citep{Brezis_PDE_2010}.
We assume that $f(\mathbf{x}) \in H^{-1}(\Omega)$, which
is a dual space corresponding to $H^{1}(\Omega)$. We employ the GAL 
formulation for the diffusion problem, for which the bilinear 
form and linear functional are:
\label{Eqn:S2_D_weak}
\begin{align}
  &\mathcal{B}_{\mathrm{GAL}}(w;c) := \Big(\mathrm{grad}[w(\mathbf{x})];\;
  \mathbf{D}(\mathbf{x}) \mathrm{grad}[c(\mathbf{x})]\Big)_\Omega\\
  &\mathcal{L}_{\mathrm{GAL}}(w) := \Big(w(\mathbf{x});\;f(\mathbf{x})
  \Big)_\Omega - \Big(w(\mathbf{x});\;q^{\mathrm{p}}(\mathbf{x})\Big)_{\Gamma^{\mathrm{N}}}
\end{align}
For the advection-diffusion problem, spurious oscillations may
arise under the GAL formulation for high P\'eclet numbers. Herein, 
we employ the SUPG formulation \citep{Brooks_CMAME_1982}, and the corresponding
bilinear form and linear functional are:
\begin{align}
\label{Eqn:S2_AD_weak}
  &\mathcal{B}_{\mathrm{SUPG}}(w;c) := \mathcal{B}_{\mathrm{RES}}(w;c) + 
  \Big(w(\mathbf{x});\;\mathbf{v}(\mathbf{x})\cdot\mathrm{grad}[c(\mathbf{x})]
  \Big)_\Omega + \Big(\mathrm{grad}[w(\mathbf{x})];\;
  \mathbf{D}(\mathbf{x}) \mathrm{grad}[c(\mathbf{x})]\Big)_\Omega\\
  &\mathcal{L}_{\mathrm{SUPG}}(w) := \mathcal{L}_{\mathrm{RES}}(w) + \Big(
  w(\mathbf{x});\;f(\mathbf{x})\Big)_\Omega - \Big(
  w(\mathbf{x});\;q^{\mathrm{p}}(\mathbf{x})\Big)_{\Gamma^{\mathrm{N}}}
\end{align}
where the residual terms
\begin{align}
  \label{Eqn:S2_AD_bilinear_stab}
  &\mathcal{B}_{\mathrm{RES}}(w;c) := 
  \left(\frac{h}{2\|\mathbf{v}(\mathrm{x})\|}\mathbf{v}(\mathrm{x})
  \cdot\mathrm{grad}[w(\mathbf{x})];\;\mathbf{v}(\mathrm{x})\cdot
  \mathrm{grad}[c(\mathbf{x})] - \mathrm{div}\left[\mathbf{D}(\mathbf{x})
  \mathrm{grad}[c(\mathbf{x})]\right] \right)_{\Omega}\\
  \label{Eqn:S2_AD_linear_stab}
  &\mathcal{L}_{\mathrm{RES}}(w) := \left(\frac{h}{2\|
  \mathbf{v}(\mathrm{x})\|}\mathbf{v}(\mathrm{x})\cdot\mathrm{grad}
  [w(\mathbf{x})];\;f(\mathrm{x})\right)_{\Omega}
\end{align}
and $h$ denotes the element-wise diameter.
\subsection{Discontinuous Galerkin} For several transport applications, 
it is highly desirable to possess element-wise mass balance 
property, as it is an important fundamental physical law 
\citep{Turner_JH_2011}. This is 
particularly true when the transport is coupled with chemical reactions 
and biofilm growth \citep{knutson2005pore,von2009three}. The GAL 
and SUPG formulations do not possess this property without any 
further modification or enrichment to their formulations.
One way to ensure this property under a single-field 
finite element framework is through the use of the 
DG formulations (see \citep{Arnold_SIAMJNA_2002,Riviere_CNME_2002,
Cockburn_ZAMM_2003,Li_CMAME_2015,Li_CG_2015,Li_TPM_2016,Pal_IJNME_2016} 
and the references within for further details).
To present the DG formulation employed in the paper, we now 
introduce relevant notation. 

The domain $\Omega$ is divided into $S$ subdomains: 
\begin{align}
  \Omega = \bigcup_{i=1}^{S} \omega_{i}
\end{align}
The boundary of the subdomain $\omega_i$ is 
denoted by $\partial \omega_{i}$. The interior 
face between $\omega_{i}$ and $\omega_{j}$ 
is denoted by 
$\Gamma_{ij}$. That is, 
\begin{align}
  \Gamma_{ij} = \partial \omega_{i} \cap \partial \omega_{j}
\end{align}
The set of all points on the interior faces is 
denoted by $\Gamma_{\mathrm{int}}$. Mathematically, 
\begin{align}
  \Gamma_{\mathrm{int}} = \bigcup_{i=1,i<j}^{S} \Gamma_{ij}
\end{align} 
For an interior face, we denote the subdomains shared by 
this face by $\omega^{+}$ and $\omega^{-}$. The outward normals 
on this face for these subdomains are, respectively, 
denoted by $\widehat{\mathbf{n}}^{+}$ and $\widehat{\mathbf{n}}^{-}$. 
Employing Brezzi's notation \citep{Arnold_SIAMJNA_2002}, 
the average and jump operators on an interior face 
are defined as follows: 
\begin{align}
  \big\{c\big\} := \frac{c^{+} + c^{-}}{2} 
  \quad \mathrm{and} \quad 
  \big[\!\big[c\big]\!\big] := c^{+} \widehat{\mathbf{n}}^{+} 
  + c^{-} \widehat{\mathbf{n}}^{-} 
\end{align}
where 
\begin{align}
  c^{+} = c\vert_{\partial \omega^{+}}  
  \quad \mathrm{and} \quad 
  c^{-} = c\vert_{\partial \omega^{-}} 
\end{align}

One of the most popular DG formulations is the \emph{Interior 
Penalty} method, which for 
equation \eqref{Eqn:S2_D_GE} is written as: 
\begin{align}
 \mathcal{B}_{\mathrm{DG}}(w;c) &:= \Big(\mathrm{grad}[w(\mathbf{x})];\;\mathbf{D}(\mathbf{x})
 \mathrm{grad}[c(\mathbf{x})]\Big)_\Omega - \Big(\big[\!\big[w(\mathbf{x})\big]\!\big];\;\big\{\mathbf{D}(\mathbf{x})\mathrm{grad}[c(\mathbf{x})]\big\}\Big)_{\Gamma_\mathrm{int}} \nonumber \\
 &+\epsilon\Big(\big\{\mathbf{D}(\mathbf{x})\mathrm{grad}[w(\mathbf{x})]\big\};\;\big[\!\big[c(\mathbf{x})\big]\!\big]\Big)_{\Gamma_\mathrm{int}} + \frac{\gamma}{h}\Big(\big[\!\big[w(\mathbf{x})\big]\!\big];\;\big[\!\big[c(\mathbf{x})\big]\!\big]\Big)_{\Gamma_\mathrm{int}}\\
 \mathcal{L}_{\mathrm{DG}}(w) &:= \Big(w(\mathbf{x});\;f(\mathbf{x})
  \Big)_\Omega - \Big(w(\mathbf{x});\;q^{\mathrm{p}}(\mathbf{x})\Big)_{\Gamma^{\mathrm{N}}}
\end{align}
where the penalty term $\gamma = 2\frac{(d + 1)}{d}$ \citep{Shahbazi_SNE_2005} 
for first-order elements and $\epsilon\in [-1,0,1]$ denotes the 
Symmetric, Incomplete, and Non-symmetric Interior Penalty  
methods respectively. For equation \eqref{Eqn:S2_AD_GE}, 
the DG formulation can be written as:
\begin{align}
 \mathcal{B}_{\mathrm{DG}}(w;c) &:= \Big(\mathrm{grad}[w(\mathbf{x})];\;\mathbf{D}(\mathbf{x})
 \mathrm{grad}[c(\mathbf{x})]\Big)_\Omega - \Big(\big[\!\big[w(\mathbf{x})\big]\!\big];\;\big\{\mathbf{D}(\mathbf{x})\mathrm{grad}[c(\mathbf{x})]\big\}\Big)_{\Gamma_\mathrm{int}} \nonumber \\
 &+\epsilon\Big(\big\{\mathbf{D}(\mathbf{x})\mathrm{grad}[w(\mathbf{x})]\big\};\;\big[\!\big[c(\mathbf{x})\big]\!\big]\Big)_{\Gamma_\mathrm{int}} + \frac{\gamma}{h}\Big(\big[\!\big[w(\mathbf{x})\big]\!\big];\;\big[\!\big[c(\mathbf{x})\big]\!\big]\Big)_{\Gamma_\mathrm{int}} \nonumber \\
 &-\Big(w(\mathbf{x});\;\mathbf{v}(\mathbf{x})\cdot\mathrm{grad}[c(\mathbf{x})]\Big)_\Omega -\Big(\big[\!\big[w(\mathbf{x})\big]\!\big];\;c^{\mathrm{up}}(\mathbf{x})\mathbf{v}(\mathbf{x})\Big)_{\Gamma_{\mathrm{int}}}\\
 \mathcal{L}_{\mathrm{DG}}(w) &:= \Big(w(\mathbf{x});\;f(\mathbf{x})
  \Big)_\Omega - \Big(w(\mathbf{x});\;q^{\mathrm{p}}(\mathbf{x})\Big)_{\Gamma^{\mathrm{N}}}
\end{align}
where the upwinding term $c^{\mathrm{up}}(\mathbf{x})$ is defined as:
\begin{align}
c^{\mathrm{up}}(\mathbf{x})&= \left\{\begin{array}{ll}
  c^{+}(\mathbf{x}) & \mathrm{if}\quad\mathbf{v}(\mathbf{x})\cdot\widehat{\mathbf{n}}^{+}(\mathbf{x}) > 0 \\
  c^{-}(\mathbf{x}) & \mathrm{otherwise} 
  \end{array}
  \right.
\end{align}
For the remainder of this paper, we shall consider only the 
Symmetric Interior Penalty method where $\epsilon=-1$.

\subsection{A theoretical discussion}
The bilinear form is assumed to be continuous (i.e.,
bounded above). That is, there exists a constant
$\kappa_{1} > 0$ such that
\begin{align}
\mathcal{B}(w;c) \leq \kappa_1 \|c\|\|w\|
\quad \forall c(\mathbf{x}), w(\mathbf{x})
\in \mathcal{C}
\end{align}
In addition, the bilinear form is assumed to be coercive.
That is, there exists a constant $\kappa_{2} > 0$
such that
\begin{align}
\kappa_2 \|c\|^2 \leq \mathcal{B}(c;c)
\quad \forall c(\mathbf{x}) \in \mathcal{C}
\end{align}
Recall that $\mathcal{L}(\cdot)$ is assumed to be a linear continuous
functional on $\mathcal{C}$. Then, from the Lax-Milgram 
theorem \citep{Brenner_Scott}, it is known that a unique 
solution exists under the WF.
Under the same conditions on the bilinear form and the 
linear functional, a unique solution exists for the
associated VI if $\mathcal{K} \subset \mathcal{C}$ is a closed
convex subset \citep{lions1967variational}. A solution of 
the VI is a solution of the WF if $\mathcal{C}=\mathcal{K}$. 
Moreover, if the bilinear form is symmetric, that is, 
\begin{align}
&\mathcal{B}(w;c) = B(c;w)
\end{align}
then the WF and the VI are equivalent to the following 
MP:~Find 
$c(\mathbf{x}) \in \mathcal{C}$ such that
\begin{align}
\label{Eqn:VI_Minimization_problem}
\mathop{\mathrm{minimize}}_{c(\mathbf{x}) \in \mathcal{C}}
\quad \frac{1}{2}\mathcal{B}(c;c) - \mathcal{L}(c)
\end{align}
These relations are pictorially described in Figure 
\ref{Fig:S2_QP_vs_VI_description}. From a theoretical point of view, 
it is important to note that the VIs that we will be dealing with for steady-state
problems will be elliptic of first kind. For further 
details on infinite-dimensional VIs, see \citep{Glowinski_Numerical_1984,
duvaut1976inequalities}.

\section{PROPOSED COMPUTATIONAL FRAMEWORK IN A DISCRETE SETTING}
\label{Sec:S4_VI_Discrete}
We denote the total number of degrees-of-freedom by ``$ndofs$''. We
also denote the vector of ones by $\boldsymbol{1}$, whose
size will be apparent from the context in which it
is used. The component-wise inequalities are denoted
by $\preceq$ and $\succeq$. That is,
\begin{subequations}
  \begin{align}
    &\boldsymbol{a} \preceq \boldsymbol{b}
    \quad \mbox{implies that} \quad a_{i}
    \leq b_i \; \forall i \\
    &\boldsymbol{a} \succeq \boldsymbol{b}
    \quad \mbox{implies that} \quad a_{i}
    \geq b_i \; \forall i
  \end{align}
\end{subequations}
The vector of unknown nodal concentrations is denoted by
$\boldsymbol{c}$, and the corresponding nodal
source vector is denoted by $\boldsymbol{f}$.
The coefficient matrix after a finite element
discretization is denoted by $\boldsymbol{K}$.
Note that the vectors $\boldsymbol{c}$ and
$\boldsymbol{f}$ are of size $ndofs \times 1$,
and the matrix $\boldsymbol{K}$ is of size
$ndofs \times ndofs$. We denote the standard
inner-product in Euclidean spaces by $\langle
\cdot ; \cdot \rangle$. That is,
\begin{align}
  \langle \boldsymbol{a} ; \boldsymbol{b} \rangle
  = \sum_{i}^{ndofs} a_i b_i \quad \forall \boldsymbol{a},
  \boldsymbol{b} \in \mathbb{R}^{ndofs}
\end{align}

The formulation in the discrete setting will
be posed as a \textsf{mixed complementarity problem
(MCP)} \citep{Kinderlehrer_VI_2000}. For convenience, 
we define $\boldsymbol{h}\in\mathbb{R}^{ndofs}$ as:
\begin{align}
  \boldsymbol{h} := \boldsymbol{K}\boldsymbol{c} - \boldsymbol{f}
\end{align}
The corresponding MCP
reads:~Find $c_{\mathrm{min}}\boldsymbol{1} \preceq
\boldsymbol{c} \preceq c_{\mathrm{max}} \boldsymbol{1}$
such that for each $i\in\left\{1,...,ndofs\right\}$
\begin{subequations}
  \begin{alignat}{2}
    &h_{i}(\boldsymbol{c}) \geq 0 \quad
    &&\mathrm{if} \; c_{\mathrm{min}} = c_{i}  \\
    &h_{i}(\boldsymbol{c}) = 0 \quad
    &&\mathrm{if} \; c_{\mathrm{min}} \leq c_{i} \leq c_{\mathrm{max}} \\
    &h_{i}(\boldsymbol{c}) \leq 0 \quad
    &&\mathrm{if} \; c_{i} =c_{\mathrm{max}}
  \end{alignat}
\end{subequations}
where $c_{\mathrm{min}}$ and $c_{\mathrm{max}}$, respectively,
denote the minimum and maximum concentrations, which are
provided by the maximum principle or the non-negative
constraint. Simple complementarity conditions arise from the first-order 
optimality conditions in optimization. For bound-constrained 
optimization, $\boldsymbol{h}$ corresponds to the gradient 
of the objective functional. If one has only the non-negative 
constraints (i.e., $c_{\mathrm{min}} = 0$ and 
$c_{\mathrm{max}} = +\infty$), then the problem reduces 
to a non-linear complementarity problem, 
which is a special case of MCP. For details 
on non-linear complementarity problems, 
see \citep{Facchinei_FDVI_2003}.
Note that the feasible region, which is restricted by
the bound constraints, form a parallelepiped, which
is a convex set \citep{Boyd_CO_2004}. 

Let the feasible region $\mathcal{K}$ be a convex subset
of $\mathbb{R}^{ndofs}$. In our case, the feasible region
is restricted by constraints which are in the form of
finite number of linear equalities and inequalities. This
makes the feasible region to be a polyhedron, which is a
convex set \citep{Boyd_CO_2004}. It should be
noted that bound constraints are a special case of linear
inequalities. With this machinery at our disposal, one can
pose the second formulation based on variational inequalities,
which reads:~Find $\boldsymbol{c} \in \mathcal{K}$ such that
we have
\begin{align}
  \label{Eqn:S5_vi_formulation}
  \langle \boldsymbol{K}\boldsymbol{c};\boldsymbol{v} - \boldsymbol{c}\rangle
  \geq \langle\boldsymbol{f};\boldsymbol{v}-\boldsymbol{c}\rangle
  \quad \forall \boldsymbol{v} \in \mathcal{K}
\end{align}
  %
\begin{figure}[t]
  \centering
  \subfloat{\includegraphics[scale=0.8]{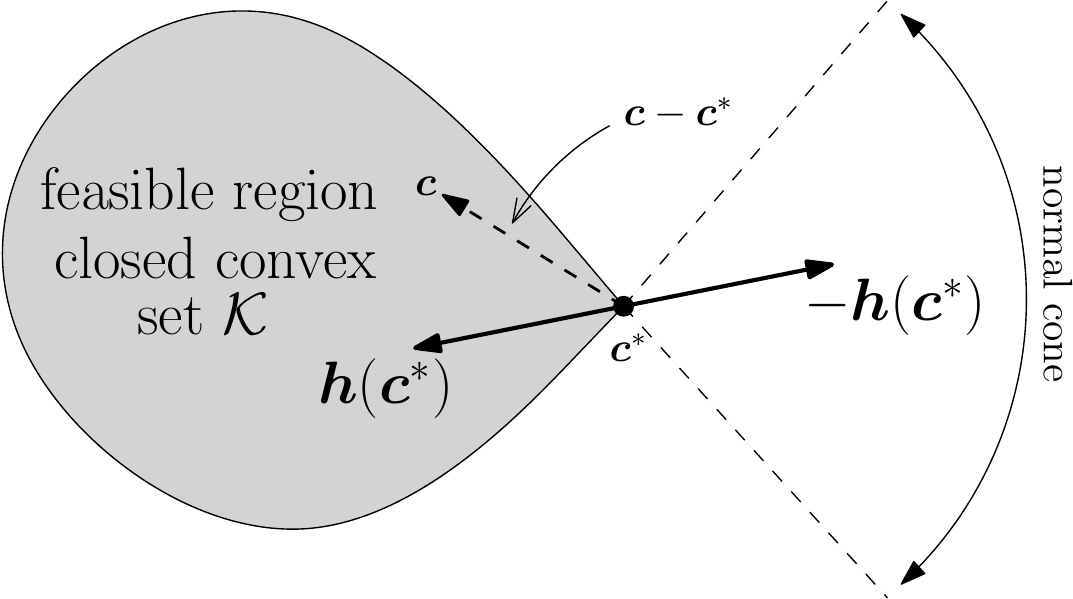}}
  \caption{ This figure illustrates the condition under which a
    solution exists for
    a variational inequality of the form $\langle\boldsymbol{h}(\boldsymbol{c}),
    \tilde{\boldsymbol{c}} - \boldsymbol{c}\rangle\geq 0\;\forall
    \tilde{\boldsymbol{c}}\in\mathcal{K}$. Here,
    $\boldsymbol{c}^{*}$ denotes a solution of the VI. The normal cone 
    of $\mathcal{K}$ at $\boldsymbol{c}^{*}$ is defined as 
    $\mathcal{N}(\boldsymbol{c}^{*}) := \left\{\boldsymbol{w}\in\mathbb{R}^{ndofs}\;
    \Bigm\vert\;\langle\boldsymbol{w};\;\boldsymbol{c}-\boldsymbol{c}^{*}\rangle\leq
    \boldsymbol{0}\;\forall\boldsymbol{c}\in\mathcal{K}\right\}$.
    \label{Fig:S3_VI_description}}
\end{figure}
Note that MCP is a special case of VIs in which the feasible 
region is a parallelepiped (i.e., one has only bound constraints). 
The conditions under which a solution exists 
for the finite-dimensional VI given in equation 
\eqref{Eqn:S5_vi_formulation} is pictorially described
in Figure \ref{Fig:S3_VI_description}.

If the coefficient matrix $\boldsymbol{K}$ is symmetric, one can
alternatively enforce maximum principles and the non-negative constraint
using QP, which has been illustrated in 
\citep{Nagarajan_IJNMF_2011,Nakshatrala_CiCP_2016}
for small-scale problems, and in \citep{Chang_JOMP_2016}
for large-scale problems in parallel environments.
Therefore, this approach is only applicable for formally
self-adjoint differential operators. The formulation can be posed as follows:
\begin{subequations}
  \begin{alignat}{2}
    &\mathop{\mathrm{minimize}}_{\boldsymbol{c} \in \mathbb{R}^{ndofs}}
    &&\quad \frac{1}{2} \langle \boldsymbol{c} ; \boldsymbol{K}
    \boldsymbol{c} \rangle - \langle \boldsymbol{c} ;
    \boldsymbol{f} \rangle \\
    &\mbox{subject to} &&\quad c_{\mathrm{min}} \boldsymbol{1}
    \preceq \boldsymbol{c} \preceq c_{\mathrm{max}}\boldsymbol{1}
  \end{alignat}
\end{subequations}
In addition, if $\boldsymbol{K}$ is
positive-definite the objective function becomes convex.
The resulting optimization problem then belongs to the 
special case of convex quadratic programming for which
sophisticated solvers exist.
%
\begin{remark}
It should be mentioned that a quick fix to eliminate negative
violations is through the so-called clipping procedure. However,
this procedure is rather \emph{ad hoc} and, more importantly, it
is not variationally consistent. On the other hand, the proposed
VI-based computational framework not only ensures non-negative 
solutions but also has a firm variational basis. We will also 
illustrate that the solutions under the proposed framework need 
not necessarily match the solution under the clipping procedure.
\end{remark}

\subsection{Theoretical results in the discrete setting}
In this paper, we are interested in problems with two different 
cases of bound constraints. In the first case, we have both lower 
and upper bounds. In the second case, we have only the lower 
bound. The lower bound typically comes the non-negative 
constraint, and the upper bound comes from maximum principles. 
We now discuss existence results for finite-dimensional VIs 
under the mentioned two cases of bound constraints. 

We begin by noting that the feasible set 
$\mathcal{K}$ will be convex and closed for both 
the sets of bound constraints. In the first case, 
the feasible set will also be bounded, which makes 
the feasible set to be compact (which, in the context 
of Euclidean spaces, is equivalent to closed and bounded). 
We therefore deal with both the cases separately.
\begin{figure}[t]
  \centering
  \subfloat{\includegraphics[scale=0.8]{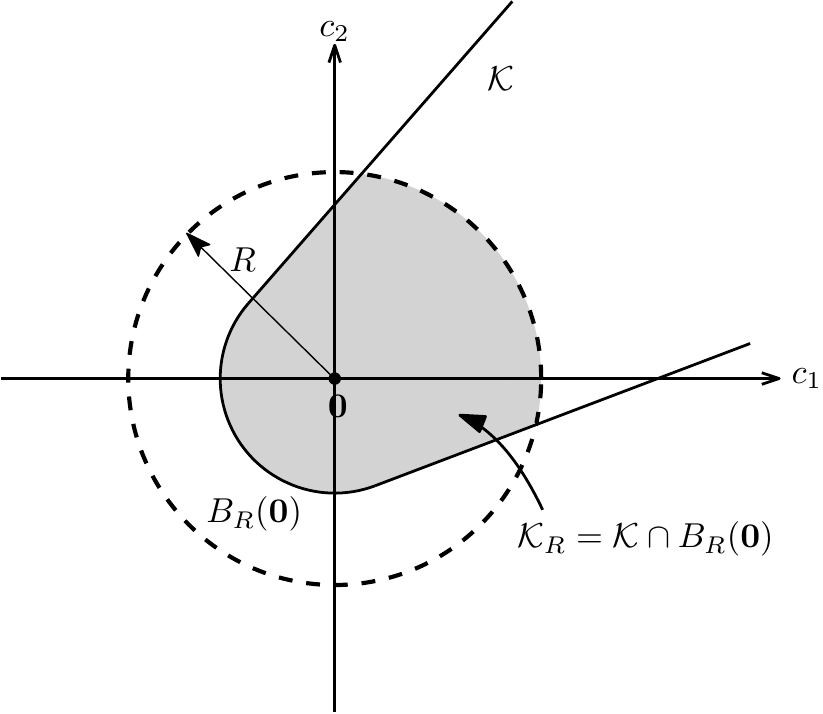}}
  \caption{A pictorial description of $B_{R}(\mathbf{0})$ and
    $\mathcal{K}_{R}$. These sets are used in Theorem
    \ref{Thm:Existence_K_R}. \label{Fig:S4_VI_lowerbound}}
\end{figure} 

\begin{theorem}{(Existence based on compactness of $\mathcal{K}$)}
  \label{Theorem:existence}
  If $\mathcal{K}$ is compact and convex, then 
  a solution exists to the finite-dimensional VI 
  \eqref{Eqn:S5_vi_formulation}.
\end{theorem}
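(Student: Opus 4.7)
The plan is to recast the VI as a fixed-point problem for the metric projection onto $\mathcal{K}$ and then invoke Brouwer's fixed-point theorem, which applies precisely because $\mathcal{K}$ is compact and convex in the finite-dimensional space $\mathbb{R}^{ndofs}$.

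First I would introduce the residual map $F:\mathbb{R}^{ndofs}\to\mathbb{R}^{ndofs}$ defined by $F(\boldsymbol{c}) := \boldsymbol{K}\boldsymbol{c} - \boldsymbol{f}$, so that the VI \eqref{Eqn:S5_vi_formulation} can be written compactly as: find $\boldsymbol{c}\in\mathcal{K}$ with $\langle F(\boldsymbol{c});\boldsymbol{v}-\boldsymbol{c}\rangle \geq 0$ for all $\boldsymbol{v}\in\mathcal{K}$. Note that $F$ is affine, hence continuous. Next I would recall the projection operator $P_{\mathcal{K}}:\mathbb{R}^{ndofs}\to\mathcal{K}$, which sends each $\boldsymbol{z}$ to the unique minimizer of $\|\boldsymbol{z}-\boldsymbol{y}\|$ over $\boldsymbol{y}\in\mathcal{K}$. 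Existence and uniqueness of this projection rely on $\mathcal{K}$ being nonempty, closed, and convex (all satisfied here), and the standard variational characterization gives $\boldsymbol{p}=P_{\mathcal{K}}(\boldsymbol{z})$ iff $\langle \boldsymbol{z}-\boldsymbol{p};\boldsymbol{v}-\boldsymbol{p}\rangle \leq 0$ for all $\boldsymbol{v}\in\mathcal{K}$. Moreover, $P_{\mathcal{K}}$ is non-expansive and therefore continuous.

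The key equivalence is then: for any fixed $\alpha>0$, $\boldsymbol{c}\in\mathcal{K}$ solves the VI if and only if $\boldsymbol{c}=P_{\mathcal{K}}(\boldsymbol{c}-\alpha F(\boldsymbol{c}))$. I would verify this by applying the projection characterization to $\boldsymbol{z}=\boldsymbol{c}-\alpha F(\boldsymbol{c})$ and $\boldsymbol{p}=\boldsymbol{c}$, which yields exactly the VI after dividing by $\alpha>0$. Define the map
\begin{align*}
T:\mathcal{K}\to\mathcal{K}, \qquad T(\boldsymbol{c}) := P_{\mathcal{K}}\bigl(\boldsymbol{c}-\alpha F(\boldsymbol{c})\bigr),
\end{align*}
which takes $\mathcal{K}$ into itself by construction, and is continuous as a composition of continuous maps.

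Finally, since $\mathcal{K}\subset\mathbb{R}^{ndofs}$ is nonempty, compact, and convex, Brouwer's fixed-point theorem guarantees that the continuous self-map $T$ has a fixed point $\boldsymbol{c}^{*}\in\mathcal{K}$, and by the equivalence above this $\boldsymbol{c}^{*}$ is a solution of the VI. The main subtlety is setting up the projection--VI equivalence rigorously; once that is in hand, the Brouwer step is immediate. Compactness is used only in this last step (it is what allows Brouwer to apply); if compactness were dropped, one would have to supply a coercivity or growth hypothesis on $F$ and argue via a truncation-and-limit procedure (as in the classical Hartman--Stampacchia argument), which is precisely the device needed for the unbounded-$\mathcal{K}$ case treated separately in Theorem \ref{Thm:Existence_K_R}.
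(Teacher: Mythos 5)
Your proof is correct and follows essentially the same route the paper indicates: the paper's proof is just a citation stating that the result follows from Brouwer's fixed-point theorem (per Facchinei and Pang), and your projection-based fixed-point construction $T(\boldsymbol{c})=P_{\mathcal{K}}(\boldsymbol{c}-\alpha F(\boldsymbol{c}))$ is precisely the standard argument behind that citation. You have simply written out the details the paper delegates to the reference.
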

\begin{proof}
  A proof can be constructed using the Brouwer's
  fixed point theorem and can be found in
  \citep{Facchinei_FDVI_2003}.
\end{proof}

\begin{theorem}{(Existence based on positive-definiteness
    of $\mathrm{sym}[\boldsymbol{K}]$)}
  \label{Theorem:existence_PD}
  If the symmetric part of the coefficient matrix $\boldsymbol{K}$ 
  (i.e., $\mathrm{sym}[\boldsymbol{K}]$) is positive-definite, a 
  solution to the finite-dimensional VI \eqref{Eqn:S5_vi_formulation} 
  exists. Note that the feasible set $\mathcal{K}$ need not be compact. 
\end{theorem}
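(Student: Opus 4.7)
The plan is to reduce the unbounded case to the compact case already handled by Theorem \ref{Theorem:existence}, exploiting the coercivity furnished by the positive-definiteness of $\mathrm{sym}[\boldsymbol{K}]$ to obtain a uniform a priori bound on the truncated solutions.

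First, I would introduce the truncated feasible set $\mathcal{K}_{R} := \mathcal{K} \cap \overline{B_{R}(\boldsymbol{0})}$, where $\overline{B_{R}(\boldsymbol{0})}$ is the closed Euclidean ball of radius $R$ centered at the origin (cf.\ Figure \ref{Fig:S4_VI_lowerbound}). Since $\mathcal{K}$ is closed and convex and the ball is compact and convex, $\mathcal{K}_{R}$ is compact and convex, and non-empty for $R$ large enough. Applying Theorem \ref{Theorem:existence} to $\mathcal{K}_{R}$ yields a vector $\boldsymbol{c}_{R} \in \mathcal{K}_{R}$ with
\begin{align*}
\langle \boldsymbol{K}\boldsymbol{c}_{R} - \boldsymbol{f};\; \boldsymbol{v} - \boldsymbol{c}_{R}\rangle \geq 0 \quad \forall \boldsymbol{v} \in \mathcal{K}_{R}.
\end{align*}

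The heart of the argument is an a priori bound on $\|\boldsymbol{c}_{R}\|$ independent of $R$. Fix any $\boldsymbol{c}_{0} \in \mathcal{K}$ and take $R$ large enough that $\boldsymbol{c}_{0} \in \mathcal{K}_{R}$. Choosing $\boldsymbol{v} = \boldsymbol{c}_{0}$ in the truncated VI and rearranging gives
\begin{align*}
\langle \boldsymbol{K}\boldsymbol{c}_{R};\; \boldsymbol{c}_{R}\rangle \leq \langle \boldsymbol{K}\boldsymbol{c}_{R};\; \boldsymbol{c}_{0}\rangle + \langle \boldsymbol{f};\; \boldsymbol{c}_{R} - \boldsymbol{c}_{0}\rangle.
\end{align*}
The skew-symmetric part of $\boldsymbol{K}$ does not contribute to $\langle \boldsymbol{K}\boldsymbol{c}_{R};\; \boldsymbol{c}_{R}\rangle$, so
\begin{align*}
\langle \boldsymbol{K}\boldsymbol{c}_{R};\; \boldsymbol{c}_{R}\rangle = \langle \mathrm{sym}[\boldsymbol{K}]\boldsymbol{c}_{R};\; \boldsymbol{c}_{R}\rangle \geq \alpha \|\boldsymbol{c}_{R}\|^{2},
\end{align*}
where $\alpha > 0$ is the smallest eigenvalue of $\mathrm{sym}[\boldsymbol{K}]$. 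Combining these with Cauchy-Schwarz yields
\begin{align*}
\alpha \|\boldsymbol{c}_{R}\|^{2} \leq \|\boldsymbol{K}\|\,\|\boldsymbol{c}_{0}\|\,\|\boldsymbol{c}_{R}\| + \|\boldsymbol{f}\|\,\|\boldsymbol{c}_{R}\| + \|\boldsymbol{f}\|\,\|\boldsymbol{c}_{0}\|,
\end{align*}
which forces $\|\boldsymbol{c}_{R}\| \leq M$ for some constant $M$ independent of $R$.

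Finally, I would pick $R > M$; then $\boldsymbol{c}_{R}$ lies in the interior of $\overline{B_{R}(\boldsymbol{0})}$ relative to $\mathcal{K}$. Given any $\boldsymbol{v} \in \mathcal{K}$, for sufficiently small $t \in (0,1]$ the convex combination $\boldsymbol{v}_{t} := \boldsymbol{c}_{R} + t(\boldsymbol{v}-\boldsymbol{c}_{R})$ lies in $\mathcal{K}_{R}$ (by convexity of $\mathcal{K}$ together with $\|\boldsymbol{c}_{R}\| < R$). Applying the truncated VI with test $\boldsymbol{v}_{t}$, dividing by $t$, and letting $t \downarrow 0$, one recovers $\langle \boldsymbol{K}\boldsymbol{c}_{R} - \boldsymbol{f};\; \boldsymbol{v} - \boldsymbol{c}_{R}\rangle \geq 0$, so $\boldsymbol{c}_{R}$ solves the original VI \eqref{Eqn:S5_vi_formulation}.

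The main obstacle is the a priori estimate: it is essential to recognize that, although $\boldsymbol{K}$ itself need not be symmetric (the formulation covers advection-diffusion), the quadratic form $\langle \boldsymbol{K}\boldsymbol{c};\boldsymbol{c}\rangle$ depends only on $\mathrm{sym}[\boldsymbol{K}]$, and this is exactly what lets positive-definiteness of the symmetric part supply the coercivity needed to compensate for the lack of compactness of $\mathcal{K}$.
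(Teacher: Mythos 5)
Your proof is correct, but it follows a different route from the paper's. The paper's proof sets $\boldsymbol{g}(\boldsymbol{c}) := \boldsymbol{K}\boldsymbol{c} - \boldsymbol{f}$, verifies that $\boldsymbol{g}$ is continuous and satisfies the coercivity condition $\langle \boldsymbol{g}(\boldsymbol{c}) - \boldsymbol{g}(\widetilde{\boldsymbol{c}}); \boldsymbol{c} - \widetilde{\boldsymbol{c}}\rangle / \|\boldsymbol{c} - \widetilde{\boldsymbol{c}}\| \geq \lambda_{\mathrm{min}}\|\boldsymbol{c} - \widetilde{\boldsymbol{c}}\| \to \infty$ via the minimum eigenvalue of $\mathrm{sym}[\boldsymbol{K}]$, and then invokes a general existence theorem for continuous, coercive finite-dimensional VIs cited from the literature (Nagurney). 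You instead give a self-contained argument: truncate to the compact convex set $\mathcal{K}_R$, apply Theorem \ref{Theorem:existence} there, extract a uniform a priori bound $\|\boldsymbol{c}_R\| \leq M$ from the estimate $\alpha\|\boldsymbol{c}_R\|^2 \leq \langle \mathrm{sym}[\boldsymbol{K}]\boldsymbol{c}_R;\boldsymbol{c}_R\rangle = \langle \boldsymbol{K}\boldsymbol{c}_R;\boldsymbol{c}_R\rangle$, and then use the interior-point/convex-combination argument to pass from $\mathcal{K}_R$ back to $\mathcal{K}$ --- in effect you reprove the cited general theorem rather than quoting it, and your last step is essentially the content of the paper's Theorem \ref{Thm:Existence_K_R}. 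Both arguments hinge on the same key observation, namely that the skew part of $\boldsymbol{K}$ drops out of the quadratic form so that positive-definiteness of $\mathrm{sym}[\boldsymbol{K}]$ supplies coercivity. What your version buys is self-containedness and an explicit bound $M$ on the solution in terms of $\lambda_{\mathrm{min}}(\mathrm{sym}[\boldsymbol{K}])$, $\|\boldsymbol{K}\|$, $\|\boldsymbol{f}\|$, and any fixed $\boldsymbol{c}_0 \in \mathcal{K}$; what the paper's version buys is brevity, at the cost of outsourcing the truncation machinery to a reference. (Two minor polish points: you should remark that $\mathcal{K}$ is nonempty so that $\boldsymbol{c}_0$ exists, and in the final step the limit $t \downarrow 0$ is unnecessary --- dividing the inequality by any admissible $t > 0$ already yields the conclusion.)
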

\begin{proof}
  Let 
  \begin{align}
    \boldsymbol{g}(\boldsymbol{c}) := \boldsymbol{K}\boldsymbol{c} 
- \boldsymbol{f}
  \end{align}
  The VI then becomes 
  \begin{align}
    \label{Eqn:VI_modified_VI}
    \langle\boldsymbol{g}(\boldsymbol{c});\widetilde{\boldsymbol{c}} 
    - \boldsymbol{c}\rangle \quad \forall \widetilde{\boldsymbol{c}} 
    \in \mathcal{K}
  \end{align}
  Clearly, the function $\boldsymbol{g}(\boldsymbol{c})$
  is continuous. Moreover, the function $\boldsymbol{g}
  (\boldsymbol{c})$ satisfies the following coercive
  condition:
  \begin{align}
    \frac{\langle \boldsymbol{g}(\boldsymbol{c}) - 
    \boldsymbol{g}(\widetilde{\boldsymbol{c}});\boldsymbol{c} - 
    \widetilde{\boldsymbol{c}}\rangle}{\|\boldsymbol{c} - 
      \widetilde{\boldsymbol{c}}\|} \rightarrow 
    \infty 
  \end{align}
  as $\|\boldsymbol{c}\| \rightarrow \infty$. To wit, 
  since $\mathrm{sym}[\boldsymbol{K}]$ is positive-definite 
  and symmetric, the minimum eigenvalue $\lambda_{\mathrm{min}}$ 
  is real and positive. One can then write:
  \begin{align}
    \lambda_{\mathrm{min}} \|\boldsymbol{c} - \widetilde{\boldsymbol{c}}\|^2 
    \leq \left(\boldsymbol{c} - \widetilde{\boldsymbol{c}}\right) \cdot 
    \mathrm{sym}[\boldsymbol{K}] \left(\boldsymbol{c} - 
    \widetilde{\boldsymbol{c}} \right) 
    = \left(\boldsymbol{c} - \widetilde{\boldsymbol{c}}\right) \cdot 
    \boldsymbol{K} \left(\boldsymbol{c} - 
    \widetilde{\boldsymbol{c}} \right) 
    = \langle \boldsymbol{g}(\boldsymbol{c}) - 
    \boldsymbol{g}(\widetilde{\boldsymbol{c}});
    \boldsymbol{c} - \widetilde{\boldsymbol{c}} \rangle 
  \end{align}
  That is,
  \begin{align}
    \lambda_{\mathrm{min}} \|\boldsymbol{c} - \widetilde{\boldsymbol{c}}\| 
    \leq \frac{\langle \boldsymbol{g}(\boldsymbol{c}) - 
    \boldsymbol{g}(\widetilde{\boldsymbol{c}});
    \boldsymbol{c} - \widetilde{\boldsymbol{c}} \rangle}{\|\boldsymbol{c} 
      - \widetilde{\boldsymbol{c}}\|}
  \end{align}
  We thus have shown that the function $\boldsymbol{g}(\boldsymbol{c})$ 
  is continuous and coercive. Under such conditions, a 
  solution exists to the VI \eqref{Eqn:VI_modified_VI}
  (e.g., see \citep{Nagurney_2002,nagurney2012projected}).
\end{proof}

We next present another existence theorem which is
particularly useful when the feasible set is unbounded
(for example, when we have only one of the bounds --
either lower or upper bounds). Let $B_{R}(\mathbf{0})$
is a hypersphere of radius $R$ centered at $\mathbf{0}$,
and let $\mathcal{K}_{R} = \mathcal{K} \cap B_{R}(\mathbf{0})$
(see Figure \ref{Fig:S4_VI_lowerbound}). Clearly,
$\mathcal{K}_{R}$ is bounded. 
\begin{theorem}{(Existence based on $\mathcal{K}_{R}$)}
  \label{Thm:Existence_K_R}
  A solution exists to the VI \eqref{Eqn:S5_vi_formulation}
  on $\mathcal{K}$ (which need not be bounded) if and only
  if there exists $R > 0$ and a solution $\boldsymbol{c}^{*}
  \in \mathcal{K}_{R}$ that satisfies the following
  VI: 
  \begin{align}
    \langle \boldsymbol{K} \boldsymbol{c}^{*};
    \widetilde{\boldsymbol{c}}  - \boldsymbol{c}^{*}
    \rangle \quad \forall \widetilde{\boldsymbol{c}}
    \in \mathcal{K}_{R}
  \end{align}
  which is defined on a bounded set. 
\end{theorem}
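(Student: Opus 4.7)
The plan is to prove the biconditional by handling the two directions separately, with the forward direction being a straightforward restriction argument and the reverse direction requiring a line-segment perturbation. Throughout, I will work with the standard interpretation of the theorem, namely that the solution $\boldsymbol{c}^{*}\in\mathcal{K}_{R}$ in the ``if'' direction is assumed to lie in the \emph{interior} of $B_{R}(\mathbf{0})$, i.e., $\|\boldsymbol{c}^{*}\|<R$. This interior condition is necessary; otherwise a solution on the truncated set need not extend to a solution on the full set, as perturbations could be blocked by the artificial constraint $\|\widetilde{\boldsymbol{c}}\|\le R$.

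For the forward direction ($\Rightarrow$), suppose $\boldsymbol{c}^{*}\in\mathcal{K}$ solves the VI \eqref{Eqn:S5_vi_formulation}. I would simply choose any $R>\|\boldsymbol{c}^{*}\|$. Then $\boldsymbol{c}^{*}\in\mathcal{K}_{R}=\mathcal{K}\cap B_{R}(\mathbf{0})$, and since $\mathcal{K}_{R}\subseteq \mathcal{K}$, every admissible test vector $\widetilde{\boldsymbol{c}}\in\mathcal{K}_{R}$ is in particular an admissible test vector in the original VI. The required inequality $\langle \boldsymbol{K}\boldsymbol{c}^{*};\widetilde{\boldsymbol{c}}-\boldsymbol{c}^{*}\rangle\geq \langle \boldsymbol{f};\widetilde{\boldsymbol{c}}-\boldsymbol{c}^{*}\rangle$ thus transfers verbatim to the truncated problem.

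For the reverse direction ($\Leftarrow$), suppose $\boldsymbol{c}^{*}\in\mathcal{K}_{R}$ solves the truncated VI with $\|\boldsymbol{c}^{*}\|<R$. Fix an arbitrary $\boldsymbol{y}\in\mathcal{K}$ and define the segment $\boldsymbol{z}_{t}:=\boldsymbol{c}^{*}+t(\boldsymbol{y}-\boldsymbol{c}^{*})$ for $t\in(0,1]$. Convexity of $\mathcal{K}$ gives $\boldsymbol{z}_{t}\in\mathcal{K}$. By continuity of the norm and the strict inequality $\|\boldsymbol{c}^{*}\|<R$, there exists $t_{0}>0$ such that $\|\boldsymbol{z}_{t}\|\le R$ for all $t\in(0,t_{0})$; hence $\boldsymbol{z}_{t}\in\mathcal{K}_{R}$ on this interval. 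Using $\boldsymbol{z}_{t}$ as the test vector in the truncated VI yields
\begin{align*}
t\,\langle \boldsymbol{K}\boldsymbol{c}^{*}-\boldsymbol{f};\,\boldsymbol{y}-\boldsymbol{c}^{*}\rangle
=\langle \boldsymbol{K}\boldsymbol{c}^{*}-\boldsymbol{f};\,\boldsymbol{z}_{t}-\boldsymbol{c}^{*}\rangle \geq 0,
\end{align*}
and dividing by $t>0$ recovers the VI inequality on the full set $\mathcal{K}$. Since $\boldsymbol{y}\in\mathcal{K}$ was arbitrary, $\boldsymbol{c}^{*}$ solves \eqref{Eqn:S5_vi_formulation}.

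The main obstacle is bookkeeping around the interior condition $\|\boldsymbol{c}^{*}\|<R$. Without it, the line $\boldsymbol{z}_{t}$ may immediately leave $B_{R}(\mathbf{0})$ and one cannot use the truncated VI to test in directions $\boldsymbol{y}-\boldsymbol{c}^{*}$ pointing outward from the ball. Consequently, the cleanest way to organize the proof is to state this condition explicitly at the start of the ``if'' direction, cite it as the standard hypothesis used, for instance, in Proposition~2.2.3 of \citep{Facchinei_FDVI_2003}, and then proceed with the segment construction as above. The remainder of the argument is routine linear algebra and continuity.
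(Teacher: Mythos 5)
Your proof is correct, and it is genuinely more informative than what the paper provides: the paper's ``proof'' of Theorem~\ref{Thm:Existence_K_R} is only a pointer to \citep{Nagurney_2002}, whereas you supply the standard self-contained argument (restriction for the forward direction, the convexity/line-segment perturbation $\boldsymbol{z}_{t}=\boldsymbol{c}^{*}+t(\boldsymbol{y}-\boldsymbol{c}^{*})$ with small $t$ for the reverse direction). More importantly, you correctly identify that the theorem \emph{as stated in the paper} is missing the hypothesis $\|\boldsymbol{c}^{*}\|<R$ in the ``if'' direction, and that without it the claim is false: since $\mathcal{K}_{R}$ is compact and convex, Theorem~\ref{Theorem:existence} guarantees a solution on $\mathcal{K}_{R}$ \emph{unconditionally}, so the uncorrected biconditional would imply that every VI on a closed convex set is solvable, which fails already for $\boldsymbol{K}=\boldsymbol{0}$, $\boldsymbol{f}\neq\boldsymbol{0}$, $\mathcal{K}=\mathbb{R}^{ndofs}$. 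Restoring the interior condition recovers the statement as it appears in the cited references (and in \citep{Facchinei_FDVI_2003}), and your segment argument then goes through exactly as written. The only cosmetic remarks: the displayed inequality in the theorem is itself garbled in the paper (the right-hand side $\geq\langle\boldsymbol{f};\widetilde{\boldsymbol{c}}-\boldsymbol{c}^{*}\rangle$ is missing), which you silently and correctly repair, and in the forward direction it is worth stating explicitly that your choice $R>\|\boldsymbol{c}^{*}\|$ also delivers the interior condition, so the two directions are consistent with the corrected statement.
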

\begin{proof}
  See \citep{Nagurney_2002}.
\end{proof}

\begin{theorem}{(Uniqueness)}\label{Theorem:uniqueness}
If the symmetric part of the coefficient matrix $\boldsymbol{K}$ 
is positive-definite, then the finite-dimensional VI \eqref{Eqn:S5_vi_formulation} 
has a unique solution if it exists.
\end{theorem}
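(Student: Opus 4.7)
The plan is to argue by contradiction using the standard monotonicity trick for variational inequalities. Suppose $\boldsymbol{c}_1, \boldsymbol{c}_2 \in \mathcal{K}$ are two solutions of \eqref{Eqn:S5_vi_formulation}. Since each solution satisfies the VI for every test vector in $\mathcal{K}$, and both $\boldsymbol{c}_1$ and $\boldsymbol{c}_2$ themselves lie in $\mathcal{K}$, I would substitute $\boldsymbol{v} = \boldsymbol{c}_2$ into the inequality for $\boldsymbol{c}_1$ and $\boldsymbol{v} = \boldsymbol{c}_1$ into the inequality for $\boldsymbol{c}_2$. This yields two inequalities of opposite ``direction'' in the difference $\boldsymbol{c}_2 - \boldsymbol{c}_1$.

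Next I would add these two inequalities. The right-hand sides, which both involve $\boldsymbol{f}$, cancel exactly, and the left-hand sides combine to give
\begin{align}
-\,\langle \boldsymbol{K}(\boldsymbol{c}_1 - \boldsymbol{c}_2);\; \boldsymbol{c}_1 - \boldsymbol{c}_2\rangle \;\geq\; 0,
\end{align}
i.e., $\langle \boldsymbol{K}\boldsymbol{w}; \boldsymbol{w}\rangle \leq 0$ where $\boldsymbol{w} := \boldsymbol{c}_1 - \boldsymbol{c}_2$.

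The concluding step is to exploit the fact that only the symmetric part of $\boldsymbol{K}$ contributes to the quadratic form: the antisymmetric part $\tfrac{1}{2}(\boldsymbol{K} - \boldsymbol{K}^{\mathrm{T}})$ annihilates any vector in the pairing $\langle \cdot; \cdot\rangle$ with itself, so
\begin{align}
\langle \boldsymbol{K}\boldsymbol{w}; \boldsymbol{w}\rangle \;=\; \langle \mathrm{sym}[\boldsymbol{K}]\,\boldsymbol{w}; \boldsymbol{w}\rangle \;\geq\; \lambda_{\mathrm{min}}\,\|\boldsymbol{w}\|^2,
\end{align}
with $\lambda_{\mathrm{min}} > 0$ by positive-definiteness of $\mathrm{sym}[\boldsymbol{K}]$. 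Combining with the previous inequality forces $\|\boldsymbol{w}\| = 0$, hence $\boldsymbol{c}_1 = \boldsymbol{c}_2$.

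There is no real obstacle here, since every ingredient is purely algebraic; the only point that needs a brief word of justification is the reduction from $\boldsymbol{K}$ to $\mathrm{sym}[\boldsymbol{K}]$ in the quadratic form, which is exactly why the hypothesis is stated on the symmetric part rather than on $\boldsymbol{K}$ itself. Note that existence is not asserted here (it was established under stronger hypotheses in Theorems~\ref{Theorem:existence} and \ref{Theorem:existence_PD}); the proof only needs to show that any two solutions coincide.
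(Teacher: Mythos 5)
Your proposal is correct and follows essentially the same route as the paper's own proof: cross-substituting $\boldsymbol{v}=\boldsymbol{c}_2$ and $\boldsymbol{v}=\boldsymbol{c}_1$ into the two inequalities, summing to obtain $\langle \boldsymbol{K}(\boldsymbol{c}_1-\boldsymbol{c}_2);\boldsymbol{c}_1-\boldsymbol{c}_2\rangle\leq 0$, and then invoking positive-definiteness of $\mathrm{sym}[\boldsymbol{K}]$ to force $\boldsymbol{c}_1=\boldsymbol{c}_2$. The only difference is cosmetic: the paper phrases the last step as a contradiction, while you conclude $\|\boldsymbol{w}\|=0$ directly via the eigenvalue bound.
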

\begin{proof}
On the contrary, assume that $\boldsymbol{c}_1$ and $\boldsymbol{c}_2$ are 
two different solutions of the VI \eqref{Eqn:S5_vi_formulation}. This implies that
\begin{align}
  \label{Eqn:S4_vi_c1}
  \langle \boldsymbol{K}\boldsymbol{c}_1;\boldsymbol{v} - \boldsymbol{c}_1\rangle
  \geq \langle\boldsymbol{f};\boldsymbol{v}-\boldsymbol{c}_1\rangle
  \quad \forall \boldsymbol{v} \in \mathcal{K}\\
  \label{Eqn:S4_vi_c2}
  \langle \boldsymbol{K}\boldsymbol{c}_2;\boldsymbol{v} - \boldsymbol{c}_2\rangle
  \geq \langle\boldsymbol{f};\boldsymbol{v}-\boldsymbol{c}_2\rangle
  \quad \forall \boldsymbol{v} \in \mathcal{K}
\end{align}
Since $\boldsymbol{c}_1,\boldsymbol{c}_2\in\mathcal{K}$, choose 
$\boldsymbol{v} = \boldsymbol{c}_2$ in \eqref{Eqn:S4_vi_c1} and 
$\boldsymbol{v} = \boldsymbol{c}_1$ in \eqref{Eqn:S4_vi_c2}. This
results in
\begin{align}
  \langle \boldsymbol{K}\boldsymbol{c}_1;\boldsymbol{c}_2 - 
  \boldsymbol{c}_1\rangle \geq \langle\boldsymbol{f};\boldsymbol{c}_2-\boldsymbol{c}_1\rangle
\\
  \langle \boldsymbol{K}\boldsymbol{c}_2;\boldsymbol{c}_1 - 
  \boldsymbol{c}_2\rangle \geq \langle\boldsymbol{f};\boldsymbol{c}_1-\boldsymbol{c}_2\rangle
\end{align}
Summing the above two inequalities and invoking the linearity in the second slot, 
we obtain
\begin{align}
  \langle \boldsymbol{K}(\boldsymbol{c}_1-\boldsymbol{c}_2);\boldsymbol{c}_1 - 
  \boldsymbol{c}_2\rangle \leq \boldsymbol{0}
\end{align}
which further implies that
\begin{align}
  \label{Eqn:S4_vi_contradiction}
  \langle \mathrm{sym}[\boldsymbol{K}](\boldsymbol{c}_1-\boldsymbol{c}_2);\boldsymbol{c}_1-\boldsymbol{c}_2\rangle \leq \boldsymbol{0}
\end{align}
On the other hand, the positive-definiteness of 
$\mathrm{sym}[\boldsymbol{K}]$ and our assumption
$\boldsymbol{c}_1-\boldsymbol{c}_2 \neq \boldsymbol{0}$ imply that
\begin{align}
  \langle \mathrm{sym}[\boldsymbol{K}](\boldsymbol{c}_1-\boldsymbol{c}_2);
  \boldsymbol{c}_1-\boldsymbol{c}_2\rangle > \boldsymbol{0}
\end{align}
which contradicts the inequality given by equation 
\eqref{Eqn:S4_vi_contradiction}. Hence, $\boldsymbol{c}_1=\boldsymbol{c}_2$.
\end{proof}

Using the aforementioned general existence and uniqueness
results for VIs, we now establish the existence and
uniqueness of solutions under the proposed framework
in the discrete setting. 
\begin{theorem}{(Well-posedness of the proposed framework)}
  Unique solutions exist for the VIs from the GAL, SUPG
  and DG WFs under lower bounds (which arise from the
  non-negative constraint) and under both lower and
  upper bounds (which arise from maximum principles).
\end{theorem}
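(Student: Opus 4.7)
The plan is to reduce the well-posedness statement to two ingredients: an existence result tailored to whether the feasible set $\mathcal{K}$ is bounded, plus the generic uniqueness Theorem \ref{Theorem:uniqueness}. Both ingredients then need to be combined with a formulation-by-formulation verification that $\mathrm{sym}[\boldsymbol{K}]$ is positive-definite. Since uniqueness under Theorem \ref{Theorem:uniqueness} already requires positive-definiteness of $\mathrm{sym}[\boldsymbol{K}]$, this single algebraic property is the real workhorse; once it is available, existence in the two-sided (box) case follows from Theorem \ref{Theorem:existence} because a box in $\mathbb{R}^{ndofs}$ is compact and convex, and existence in the lower-bound-only case follows directly from Theorem \ref{Theorem:existence_PD}.

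First I would set up notation for the three discrete matrices $\boldsymbol{K}_{\mathrm{GAL}}$, $\boldsymbol{K}_{\mathrm{SUPG}}$, and $\boldsymbol{K}_{\mathrm{DG}}$ coming from assembling the corresponding bilinear forms on the finite element bases. The standard observation is that if the bilinear form $\mathcal{B}_h(\cdot;\cdot)$ satisfies a discrete coercivity estimate $\mathcal{B}_h(w_h;w_h) \geq \kappa \|w_h\|^2$ for all admissible $w_h$, then expanding $w_h$ in the basis and taking inner products shows $\langle \boldsymbol{w};\mathrm{sym}[\boldsymbol{K}]\boldsymbol{w}\rangle \geq \kappa \|\boldsymbol{w}\|_h^2 > 0$ for every nonzero coefficient vector $\boldsymbol{w}$, so $\mathrm{sym}[\boldsymbol{K}]$ is positive-definite. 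Thus the whole theorem is reduced to verifying discrete coercivity for the three formulations.

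For the GAL diffusion case this is immediate: $\mathcal{B}_{\mathrm{GAL}}$ is symmetric and coercive by uniform ellipticity of $\mathbf{D}(\mathbf{x})$ together with Poincar\'e's inequality on functions vanishing on $\Gamma^{\mathrm{D}}$, so $\boldsymbol{K}_{\mathrm{GAL}}$ is itself symmetric positive-definite. For SUPG I would show that the advection contribution and its residual-based companion combine so that $\mathcal{B}_{\mathrm{SUPG}}(c;c)$ controls $\|\mathrm{grad}[c]\|^2_{\mathbf{D}} + \tfrac{h}{2\|\mathbf{v}\|}\|\mathbf{v}\cdot\mathrm{grad}[c]\|^2$, which is the standard SUPG coercivity estimate (the pure advective term is antisymmetric modulo boundary terms and contributes nonnegatively after accounting for the inflow prescription); this yields positive-definiteness of $\mathrm{sym}[\boldsymbol{K}_{\mathrm{SUPG}}]$ on the subspace of vectors respecting the Dirichlet data. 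For the symmetric interior penalty DG ($\epsilon = -1$) I would invoke the classical Arnold--Brezzi--Cockburn--Marini style argument: with penalty $\gamma = 2(d{+}1)/d$ chosen large enough relative to a trace-inverse constant, coercivity of $\mathcal{B}_{\mathrm{DG}}$ in the broken energy norm holds, and together with the upwinded advective flux (which contributes a nonnegative jump term $\|[\![c]\!]\cdot\mathbf{v}\|^2$ on interior faces), this again delivers a positive lower bound for $\mathrm{sym}[\boldsymbol{K}_{\mathrm{DG}}]$.

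With positive-definiteness of $\mathrm{sym}[\boldsymbol{K}]$ established for each of GAL, SUPG and DG, the theorem follows by assembling the pieces: in the two-sided bound case $\mathcal{K}$ is a compact convex box so Theorem \ref{Theorem:existence} gives existence, and in the lower-bound-only case Theorem \ref{Theorem:existence_PD} gives existence on the unbounded polyhedron; uniqueness in both cases comes from Theorem \ref{Theorem:uniqueness}. The main obstacle I anticipate is the DG case, where verifying that the chosen penalty $\gamma$ is truly sufficient for coercivity (and that the non-symmetric upwinding and interior-penalty consistency terms do not spoil the positivity of $\mathrm{sym}[\boldsymbol{K}]$) requires the trace/inverse constants to be handled carefully; SUPG is less delicate but still needs the residual term to dominate its wrong-sign partners, which depends on the usual assumption that the stabilization parameter is chosen consistently with the local mesh P\'eclet number.
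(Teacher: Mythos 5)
Your overall architecture matches the paper's: reduce everything to positive-definiteness of $\mathrm{sym}[\boldsymbol{K}]$, use Theorem \ref{Theorem:existence} (compactness) for the two-sided bounds, Theorem \ref{Theorem:existence_PD} for the lower-bound-only case, and Theorem \ref{Theorem:uniqueness} for uniqueness. For GAL and DG your argument is essentially the paper's, fleshed out with the standard coercivity estimates (uniform ellipticity plus Poincar\'e, and the interior-penalty coercivity with $\gamma$ large enough), and your caveat about the trace/inverse constants for the DG penalty is a fair one.

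The genuine gap is in the SUPG step. You assert the ``standard SUPG coercivity estimate'' and conclude that $\mathrm{sym}[\boldsymbol{K}_{\mathrm{SUPG}}]$ is positive-definite outright, but that estimate is not free: the antisymmetry of the advective term requires $\mathrm{div}[\mathbf{v}]=0$ (or at least a sign condition on it), which is nowhere assumed in the paper; and the residual term $\mathcal{B}_{\mathrm{RES}}$ contains the wrong-sign contribution $\bigl(\tau\,\mathbf{v}\cdot\mathrm{grad}[w];\,-\mathrm{div}[\mathbf{D}\,\mathrm{grad}[c]]\bigr)_{\Omega}$, which must be absorbed via an inverse inequality and imposes a compatibility condition between $\tau = h/(2\|\mathbf{v}\|)$ and the local diffusivity that is not verified here. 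The paper explicitly concedes this point: it states that the SUPG stabilization does \emph{not} guarantee positive-definiteness of the symmetric part in general, and instead argues that the stabilization contribution is $\mathcal{O}(h)$, so that $\mathrm{sym}[\boldsymbol{K}]$ becomes positive-definite only once $h < h_{\mathrm{crit}}$. On coarser meshes the paper falls back on Theorem \ref{Thm:Existence_K_R} to obtain existence for SUPG with only a lower bound --- a theorem your proposal never invokes, and which you would need, since without positive-definiteness your existence claim for SUPG on an unbounded feasible set has no support. Correspondingly, uniqueness for SUPG is obtained in the paper only conditionally on adequate mesh refinement; your unconditional uniqueness claim for SUPG inherits the same unproven coercivity assumption. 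Either add the hypotheses under which the SUPG coercivity estimate genuinely holds (solenoidal velocity, low-order elements or a constraint on $\tau$), or adopt the paper's perturbation argument together with Theorem \ref{Thm:Existence_K_R} for the general-mesh case.
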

\begin{proof}
  First, it should be noted that the symmetric part
  of the coefficient matrices under the GAL and DG
  formulations are positive-definite. On the other
  hand, the stabilization term under the SUPG
  formulation does not guarantee that the symmetric
  part of the coefficient matrix to be positive-definite.
  It should also be noted that the stabilization term
  in equation \eqref{Eqn:S2_AD_bilinear_stab} is
  $\mathcal{O}(h)$, where $h$ is the characteristic
  mesh size. This means that there exist a critical
  mesh size, $h_{\mathrm{crit}}$, such that if $h <
  h_{\mathrm{crit}}$ then the contribution from the
  residual terms to the coefficient matrix will
  be small, and the resulting symmetric part of
  the coefficient matrix will be positive-definite.
  
  (\emph{Existence}.) If both the lower and upper bounds are
  present, the feasible region will be compact. For this
  case of bound constraints, Theorem \ref{Theorem:existence}
  establishes the existence of solutions for the VIs arising
  from all the three WFs (i.e., GAL, SUPG and DG). If only the
  lower bounds are present, Theorem \ref{Theorem:existence_PD}
  will ensure the existence of solutions for VIs arising from
  the GAL and DG formulations, and Theorem \ref{Thm:Existence_K_R}
  will ensure the existence of solutions for the VIs arising from
  the SUPG formulation on a general mesh. Of course, if the
  mesh is adequately refined (i.e., $h < h_{\mathrm{crit}}$)
  then Theorem \ref{Theorem:existence_PD} can also ensure
  the existence of a solution for the VIs arising under
  the SUPG formulation. 
 
  (\emph{Uniqueness}.) Theorem \ref{Theorem:uniqueness}
  provides the uniqueness of solution for the VIs arising
  from the GAL and SUPG formulations. As discussed above,
  upon an adequate mesh refinement, $\mathrm{sym}[\boldsymbol{K}]$
  will be positive-definite under the SUPG formulation. On those
  meshes, Theorem \ref{Theorem:uniqueness} will provide the
  uniqueness of solutions for the VIs arising from the SUPG
  formulation. 
\end{proof}

\subsection{Computer implementation details}
In this paper, the proposed QP and VI-based
formulations for the GAL, SUPG, and DG formulations
are implemented through the Firedrake project (see Appendix \ref{A1:code} for
further details), but one can employ any other finite element
library. The primary advantage of the Firedrake project is that it 
provides easy access to parallel solvers, specifically the
PETSc and TAO libraries \citep{petsc-user-ref,petsc-efficient,tao-user-ref}
which are built on top of \textsf{Message Passing Interface (MPI)}.
Appropriate iterative solvers and preconditioners are needed for large-scale
problems, and the PETSc library provides the necessary data structures. 
The \textsf{Conjugate Gradient (CG)} method is used for symmetric problems like 
the diffusion equation whereas the \textsf{Generalized Minimal Residual (GMRES)} 
method is used for the non-symmetric advection-diffusion equation.
\emph{HYPRE}'s algebraic multi-grid package \citep{hypre-web-page} 
is used as the preconditioner.
\subsubsection{Solvers}
The main ingredient of the proposed computational framework is to solve
finite-dimensional VIs. There are several solvers
available for solving these type of inequalities in a 
large-scale parallel environment. However, the performance of 
these solvers is problem-specific. It is, therefore, necessary to 
identify the best performing VI solver for our case, which is primarily to
enforcing maximum principles and the non-negative constraint. To this end, we 
consider the following VI and QP solvers available through the PETSc and
TAO libraries:
\begin{enumerate}[leftmargin=*,label=(\roman*)]
\item \textsf{Semi-smooth (VI - SS)}: TAO's implementation of the
semi-smooth algorithm \citep{DeLuca_MP_1996,Munson_INFORMS_2001}
reformulates the MCP as a
non-smooth system of equations using the
Fischer-Burmeister function \citep{Fischer_OPT_1992}. This function,
$\phi:\mathbb{R}^2\rightarrow\mathbb{R}$, is defined as
\begin{subequations}
\begin{align}
\phi(a,b) &:= \sqrt{a^2+b^2}-a-b \\
\phi(a,b) &= 0\quad \Leftrightarrow\quad a \geq 0,\;b\geq 0,\;ab=0
\end{align}
\end{subequations}
The reformulation of the MCP is handled component-wise, and
the system of equations $\Phi(\boldsymbol{c})=0$ where
$\Phi:\mathbb{R}^{ndofs}\rightarrow\mathbb{R}^{ndofs}$ is expressed as:
\begin{align}
\Phi_i(\boldsymbol{c}) &:= \left\{\begin{array}{ll}
\phi\left(c_i-c_{\mathrm{min}},\;h_i(\boldsymbol{c})\right) &
\mathrm{if}\;-\infty < c_{\mathrm{min}} < c_{\mathrm{max}} = \infty \\
\phi\left(c_{\mathrm{max}}-c_i,\;-h_i(\boldsymbol{c})\right) &
\mathrm{if}\;-\infty = c_{\mathrm{min}} < c_{\mathrm{max}} < \infty \\
\phi\left(c_i-c_{\mathrm{min}},\;\phi\left(c_{\mathrm{max}}-c_i,
\;-h_i(\boldsymbol{c})\right)\right) &
\mathrm{if}\;-\infty < c_{\mathrm{min}} < c_{\mathrm{max}} < \infty \\
-h_i(\boldsymbol{c}) & \mathrm{if}\;-\infty = c_{\mathrm{min}} <
c_{\mathrm{max}} = \infty \\
c_{\mathrm{min}}-c_i &  \mathrm{if}\;-\infty < c_{\mathrm{min}} =
c_{\mathrm{max}} < \infty \\
\end{array}
\right.
\end{align}
It should be noted that $\Phi(\boldsymbol{c})$ is not differentiable
everywhere but it still satisfies a semi-smoothness property
\citep{Mifflin_JCO_1977,Qi_MOR_1993,Qi_MP_1993}.
The above system of equations is used to compute a descent direction, and
the solver finishes as soon as the natural merit function $\Psi(\boldsymbol{c}):=
\frac{1}{2}\|\Phi(\boldsymbol{c})\|_2^2$ meets some level of tolerance.
We also employ TAO's feasible line-search algorithm which ensures that
the solution is within the bounds by using a projected Armijo line
search \citep{Armijo_PJM_1966}.
\item \textsf{Reduced-space active-set (VI - RS)}: The reduced-space
active-set method selects an active-set and solves a reduced linear
system of equations to calculate a direction of the gradient descent.
The active and inactive sets are, respectively, defined as:
\begin{subequations}
\begin{align}
\mathcal{A}(\boldsymbol{c}) &:= \{i\in\left\{1,...,ndofs\right\}\;\Bigm|\;c_i=0\;
\mbox{and}\;h_i(\boldsymbol{c})>0\}\\
\mathcal{I}(\boldsymbol{c}) &:= \{i\in\left\{1,...,ndofs\right\}\;\Bigm|\;c_i>0\;
\mbox{or}\;h_i(\boldsymbol{c})\leq 0\}
\end{align}
\end{subequations}
The active set $\mathcal{A}(\boldsymbol{c})$ represents regions
where the lower bound is active thus the function value can be
ignored, and the inactive set $\mathcal{I}(\boldsymbol{c})$ contains
everything else. The descent direction of the active set is set to
zero whereas the descent direction of the inactive set is approximated,
and the solution is updated using a projected line search. As far as we
know, there is little documentation on the theoretical and mathematical
convergence properties for this particular algorithm, but the computational
results from \citep{Benson_OMS_2006} demonstrate that this solver is
robust and can handle a wide range of applications. For further
implementation details of these two VI solvers, we defer all interested
readers to \citep{Benson_OMS_2006,tao-user-ref} and the references within.
\item \textsf{Trust region Newton (QP - TRON)}: Unlike the previous
two solvers, the trust region Newton method \citep{Lin_SIAMJO_1999}
is an active-set solver designed for large-scale minimization problems.
It uses the gradient projection to generate a Cauchy step and the preconditioned
CG with an incomplete Cholesky factorization to generate a descent direction. 
Each iteration of the TRON algorithm solves a reduced linear system containing 
variables that lie between the lower and upper bounds. The algorithm then 
applies a trust region to the conjugate gradients to ensure convergence. 
The algorithmic scalability and hardware performance of this solver has 
been thoroughly documented in \citep{Chang_JOMP_2016}, so the computational 
results arising from QP - TRON serves primarily as a benchmark for comparison with
the VI solvers.
\end{enumerate}
We acknowledge that there may be several other QP and VI solvers which
are not covered in this paper. Nonetheless, the computational framework that we
propose is algorithm-independent and platform-agnostic, so one is free to either
employ different solvers or modify the above implementations of the
QP and VI algorithms to cater to specific needs and applications.
\subsubsection{An outline of the algorithm}
\label{SS4:proposed_framework}
The performance of non-linear and optimization-based solvers 
depends on accurate initial guesses. To this end, we propose
the following steps for the overall implementation of 
the proposed computational framework:
\begin{enumerate}[label=\textsf{Step \arabic*}:]
\item Assemble $\boldsymbol{K}$ and $\boldsymbol{f}$
\item Solve for $\boldsymbol{c}_0$.
\item Clip $\boldsymbol{c}_0$ and obtain 
$\boldsymbol{c}_{\mathrm{CLIP}}$. Formally:
\begin{align}
\label{Eqn:S4_clipping}
\boldsymbol{c}_{\mathrm{CLIP}} = \mbox{arg min}\frac{1}{2}\|\boldsymbol{c}-\boldsymbol{c}_0\|^{2}\quad\mbox{subject to:}\;c_{\mathrm{min}} \boldsymbol{1}
\preceq \boldsymbol{c} \preceq c_{\mathrm{max}}\boldsymbol{1}
\end{align}
\item Solve the bounded constraint problem under the QP or VI framework
with $\boldsymbol{c}_{\mathrm{CLIP}}$ as the initial guess.
\end{enumerate}
It should be emphasized that one need not solve equation \eqref{Eqn:S4_clipping} 
to implement clipping procedure. Instead, one trims nodal values to meet the 
desired bounds.
Since the governing equations are linear, $\boldsymbol{K}$ and $\boldsymbol{f}$ only
need to be assembled once and are reused 
for the various QP and VI evaluation routines for \textsf{Step 4}. 
Python implementations of the VI - SS, VI - RS, and QP - TRON solvers 
leveraging \textsf{petsc4py} \citep{dalcin_AWR_2011}
capabilities are shown in Listings \ref{Code:ss}, \ref{Code:rs}, and \ref{Code:tron}
respectively of Appendix \ref{A1:code}. For the steady-state 3D benchmarks in 
the next section, the KSP relative tolerance is set to $10^{-7}$
for the solver in \textsf{Step 2}, whereas the KSP relative tolerance for
approximating the gradient descents in \textsf{Step 4} is set
to $10^{-3}$. It was shown in \citep{Chang_JOMP_2016} that 
relaxing the tolerance requires more non-linear iterations but
lessens the overall solve time. Relaxing the tolerance also lessens
the arithmetic intensity where the performance is governed by the memory
bandwidth thus making it less likely to achieve good speedup on a 
shared compute node. In other words, the parallel efficiency of the QP and 
VI solvers are likely to be worse than solving the WF with standard KSP
convergence tolerances. The absolute convergence tolerances for the QP and VI
solvers are set to $10^{-8}$ although it should be mentioned that 
the optimal values depends on the application at hand. All large-scale 
computations are conducted on Intel Xeon E5-2680v2 processors where each
MPI process is restricted to a single core.

\begin{figure}[t]
\centering
\subfloat[Diffusion problem]{\includegraphics[scale=0.45]{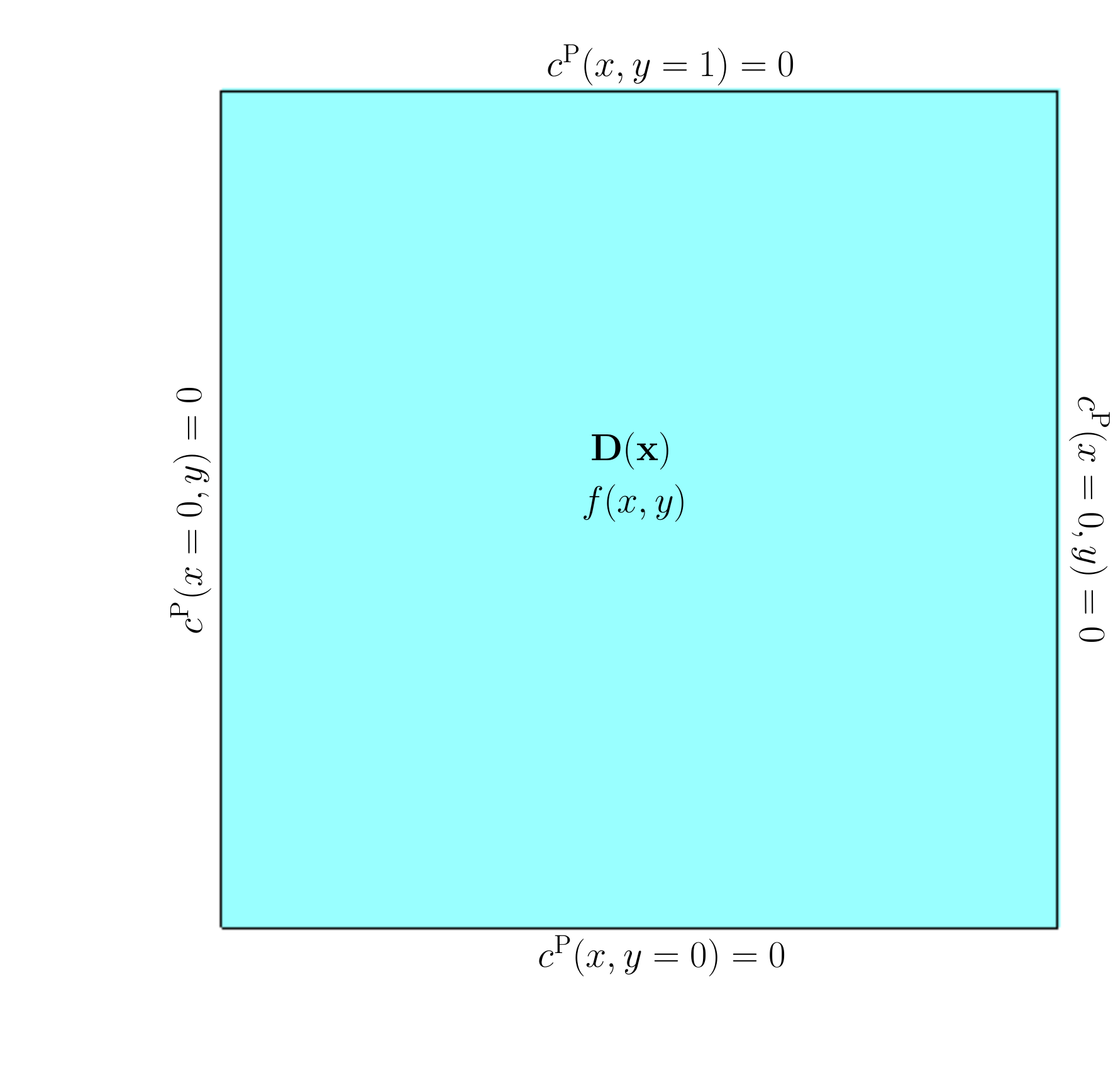}
\label{Fig:2D_D_description}}
\subfloat[Advection-diffusion problem]{\includegraphics[scale=0.45]
{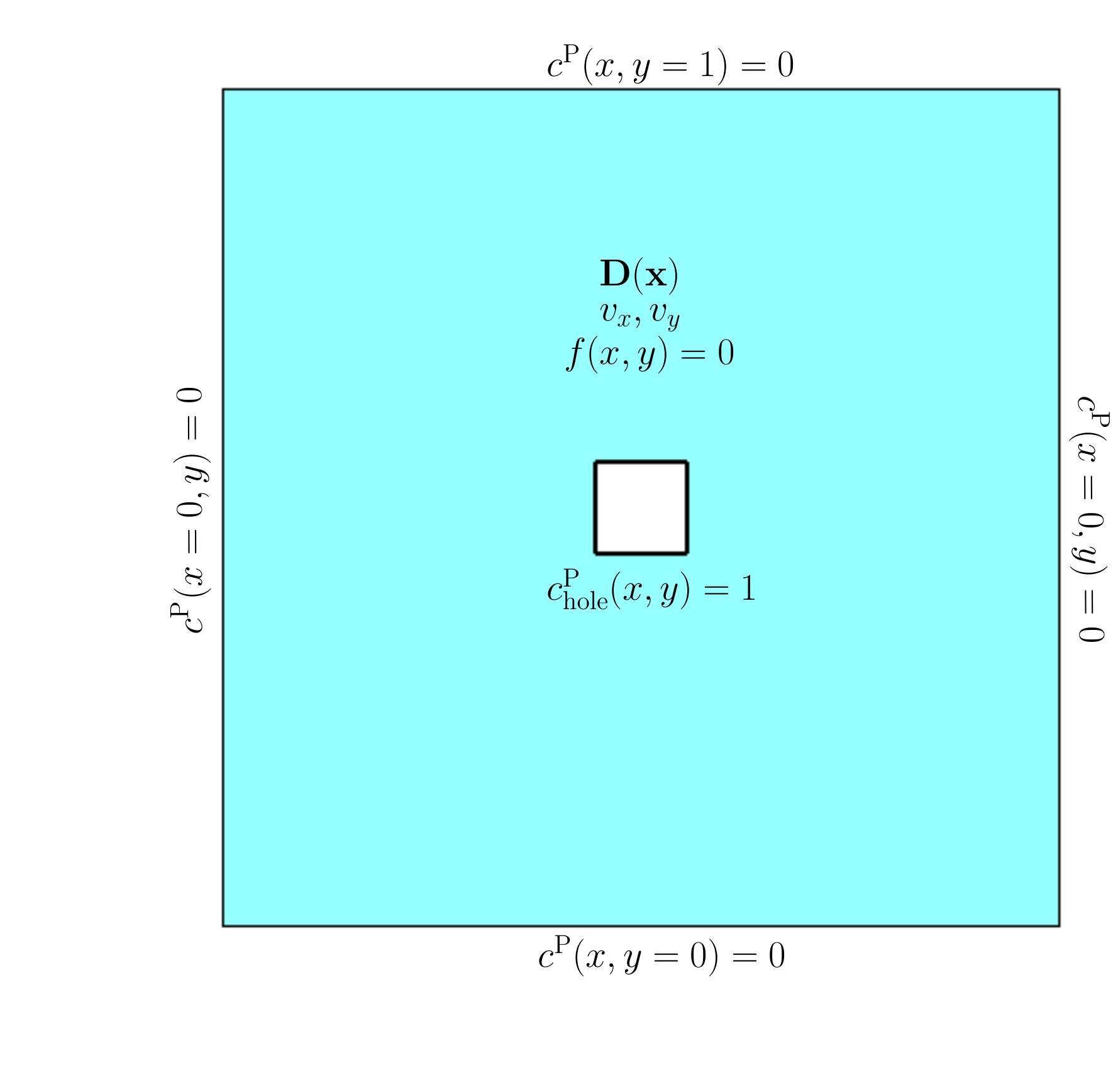}\label{Fig:2D_AD_description}}
\caption{2D benchmarks: Pictorial description of the boundary value problems 
for the diffusion and advection-diffusion examples. The finite element mesh 
for the diffusion problem contains 40,000 quadrilateral elements and 40,401 nodes. 
The mesh for the advection-diffusion problem is unstructured containing 96,430 triangular
elements and 48,663 nodes.}
\label{Fig:2D_description}
\end{figure}
\section{NUMERICAL RESULTS FOR STEADY-STATE RESPONSE}
\label{Sec:S5_VI_NR}
\subsection{2D benchmarks}
\begin{figure}[t]
\centering
\subfloat[$\boldsymbol{c}_{\mathrm{GAL}}$]{\includegraphics[scale=0.45]{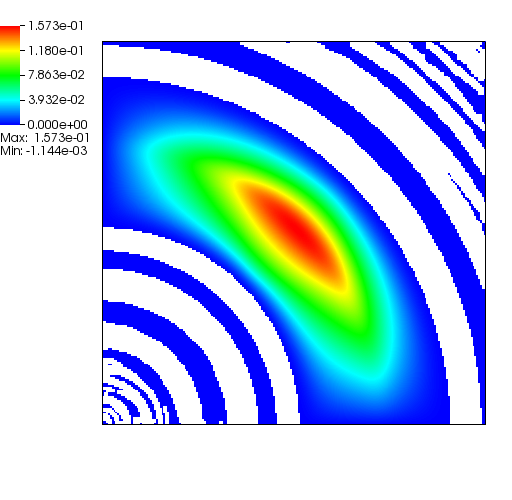}}
\subfloat[$\boldsymbol{c}_{\mathrm{TRON}}$]{\includegraphics[scale=0.45]{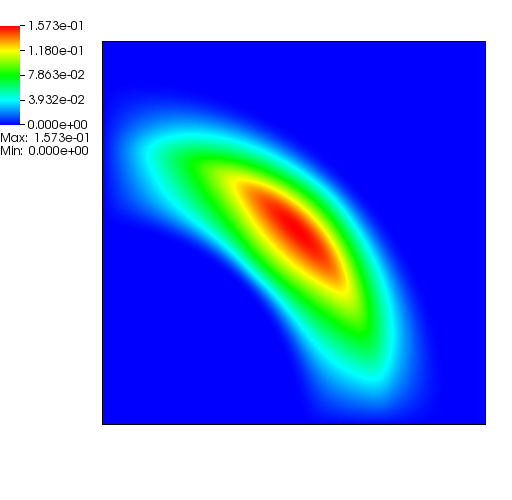}}\\
\subfloat[$\boldsymbol{c}_{\mathrm{SS}}$]{\includegraphics[scale=0.45]{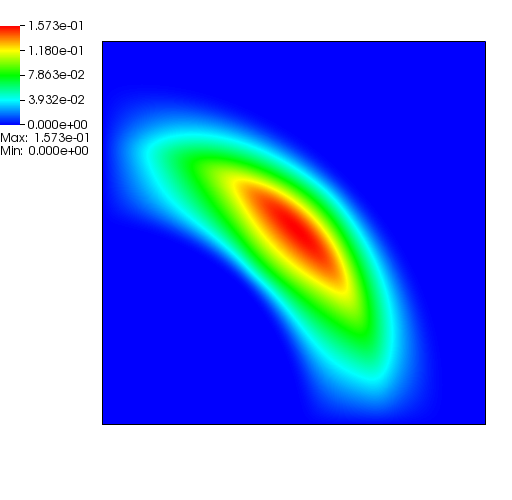}}
\subfloat[$\boldsymbol{c}_{\mathrm{RS}}$]{\includegraphics[scale=0.45]{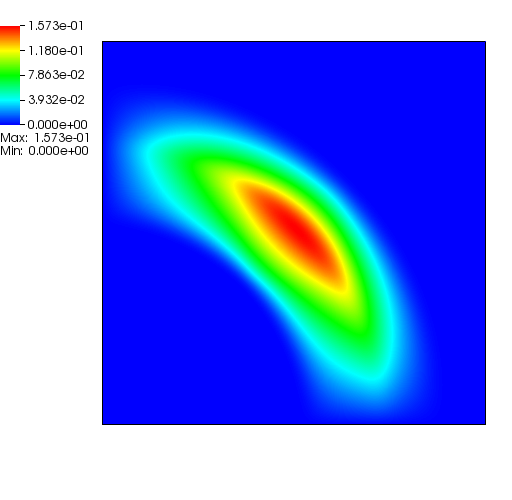}}
\caption{2D diffusion: concentrations under the Galerkin ($\boldsymbol{c}_{\mathrm{GAL}}$), TRON ($\boldsymbol{c}_{\mathrm{TRON}}$), semi-smooth ($\boldsymbol{c}_{\mathrm{SS}}$), and reduced-space active-set ($\boldsymbol{c}_{\mathrm{RS}}$) methods where the white regions represent negative concentrations.}
\label{Fig:2D_D_solutions}
\end{figure}
\begin{figure}[t]
\centering
\subfloat[$\|\boldsymbol{c}_{\mathrm{CLIP}}-\boldsymbol{c}_{\mathrm{TRON}}\|$]{\includegraphics[scale=0.45]{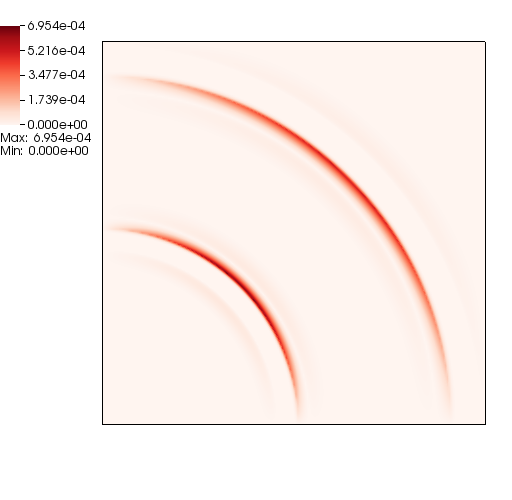}}
\subfloat[$\|\boldsymbol{c}_{\mathrm{CLIP}}-\boldsymbol{c}_{\mathrm{SS}}\|$]{\includegraphics[scale=0.45]{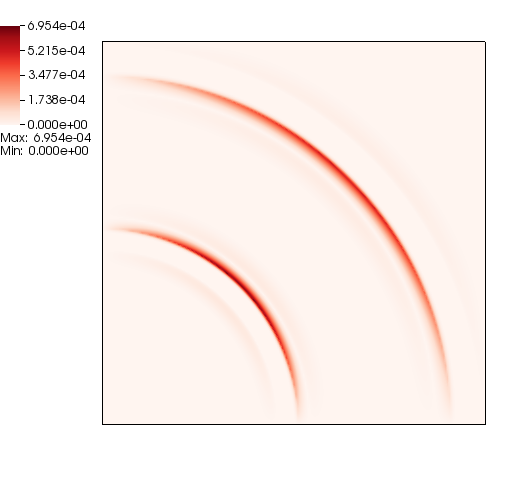}}\\
\subfloat[$\|\boldsymbol{c}_{\mathrm{CLIP}}-\boldsymbol{c}_{\mathrm{RS}}\|$]{\includegraphics[scale=0.45]{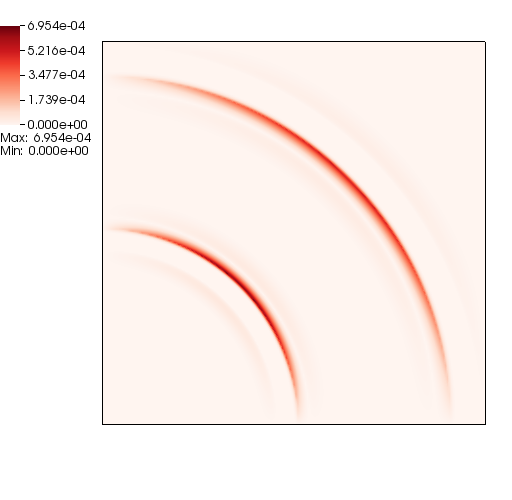}}
\caption{2D diffusion: absolute difference in concentrations between the clipped solution and the non-negative solution.}
\label{Fig:2D_D_clipped}
\end{figure}
\begin{figure}[t]
\centering
\subfloat[$\|\boldsymbol{c}_{\mathrm{TRON}}-\boldsymbol{c}_{\mathrm{SS}}\|$]{\includegraphics[scale=0.45]{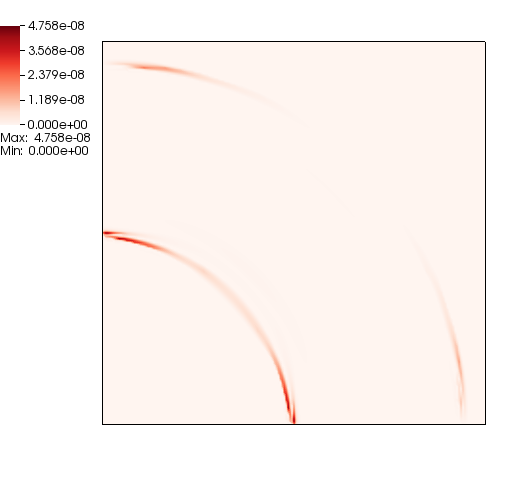}}
\subfloat[$\|\boldsymbol{c}_{\mathrm{TRON}}-\boldsymbol{c}_{\mathrm{RS}}\|$]{\includegraphics[scale=0.45]{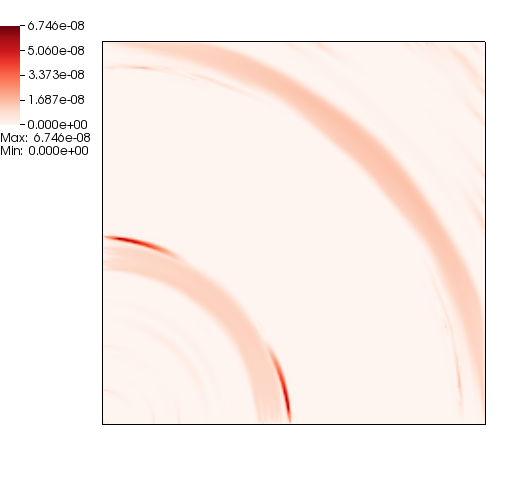}}\\
\subfloat[$\|\boldsymbol{c}_{\mathrm{SS}}-\boldsymbol{c}_{\mathrm{RS}}\|$]{\includegraphics[scale=0.45]{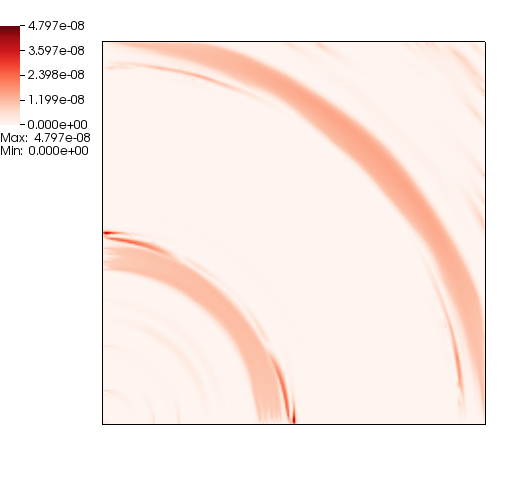}}
\caption{2D diffusion: absolute difference in concentrations between the various non-negative methodologies}
\label{Fig:2D_D_differences}
\end{figure}
We now examine 2D problems in order to demonstrate the effectiveness 
of the proposed computational algorithms for ensuring 
discrete maximum principles and the non-negative constraint. Only the 
GAL and SUPG formulations are employed in this numerical study. 
First, let us consider 
the pure diffusion equation on a bi-unit square: $\Omega := (0,1)\times(0,1)$ 
as shown in Figure \ref{Fig:2D_D_description}. The following heterogeneous 
and anisotropic diffusivity tensor similar to the one considered in 
\citep{LePotier_CRM_2005} is used:
\begin{align}
  \mathbf{D}(\mathbf{x}) &= \left(\begin{array}{cc}
   y^2+\epsilon x^2 & -(1-\epsilon)xy \\
  -(1-\epsilon)xy & x^2+\epsilon y^2 
  \end{array}
  \right)
\label{Eqn:S5_2D_D_diffusivity}
\end{align}
where $\epsilon = 10^{-4}$. The forcing function is defined as $f(x,y)=1$ if $(x,y)\in 
\left[\frac{3}{8},\frac{5}{8}\right]\times\left[\frac{3}{5},\frac{5}{8}\right]$ and 
zero elsewhere. Homogeneous boundary conditions are applied on all four sides of 
the domain. Numerical solutions under the GAL, VI - SS, VI - RS, and QP - TRON 
methods with uniform quadrilateral elements of $h$-size = 1/200 are shown in Figure 
\ref{Fig:2D_D_solutions}. All three non-negative solvers successfully 
eliminate negative concentrations, and the absolute difference plots in 
Figure \ref{Fig:2D_D_clipped} show that their results are quite different 
than from the one arising from the standard clipping procedure. 
Moreover, the absolute differences between the various QP and VI solvers, 
as seen from Figure \ref{Fig:2D_D_differences}, are extremely small and suggest 
that the QP and VI solvers have similar numerical accuracy. 

\begin{figure}[t]
\centering
\subfloat[$\boldsymbol{c}_{\mathrm{SUPG}}$]{\includegraphics[scale=0.45]{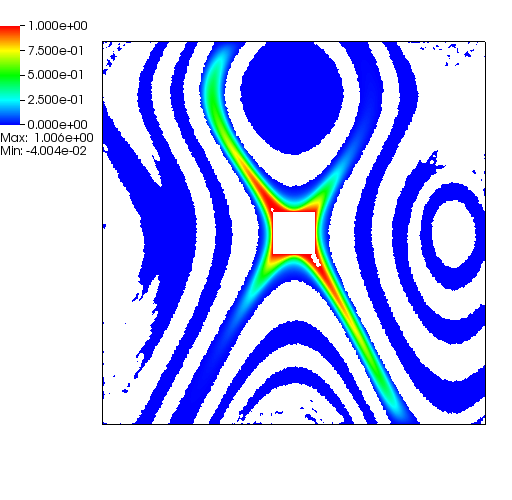}}
\subfloat[$\boldsymbol{c}_{\mathrm{SS}}$]{\includegraphics[scale=0.45]{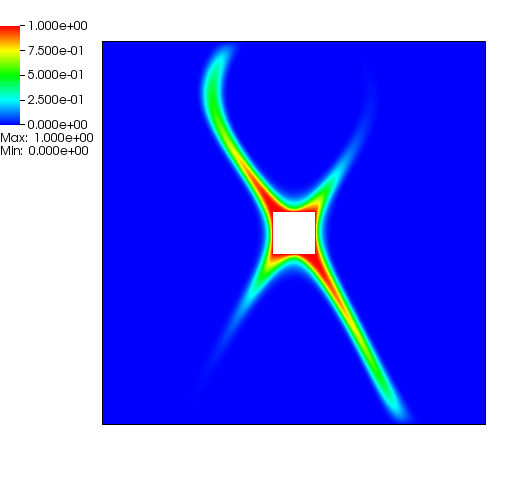}}\\
\subfloat[$\boldsymbol{c}_{\mathrm{RS}}$]{\includegraphics[scale=0.45]{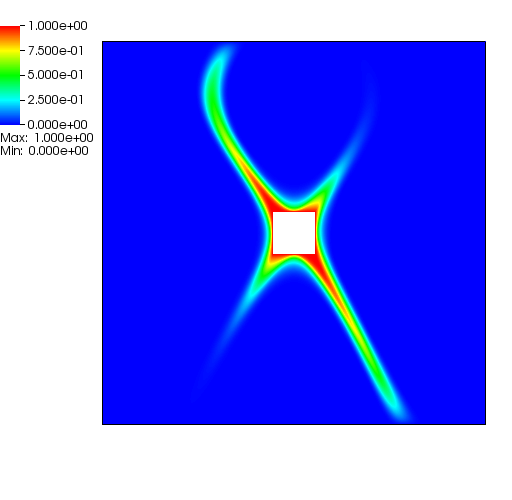}}
\caption{2D advection-diffusion: concentrations under the SUPG ($\boldsymbol{c}_{\mathrm{SUPG}}$), semi-smooth ($\boldsymbol{c}_{\mathrm{SS}}$), and reduced-space active-set ($\boldsymbol{c}_{\mathrm{RS}}$) methods where the white regions represent negative concentrations.}
\label{Fig:2D_AD_solutions}
\end{figure}
\begin{figure}[t]
\centering
\subfloat[$\|\boldsymbol{c}_{\mathrm{CLIP}}-\boldsymbol{c}_{\mathrm{SS}}\|$]{\includegraphics[scale=0.45]{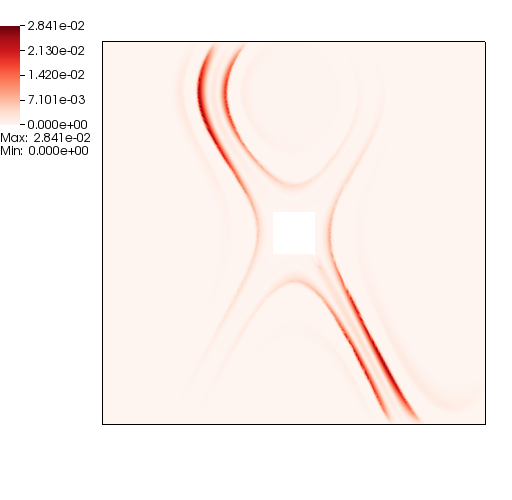}}
\subfloat[$\|\boldsymbol{c}_{\mathrm{CLIP}}-\boldsymbol{c}_{\mathrm{RS}}\|$]{\includegraphics[scale=0.45]{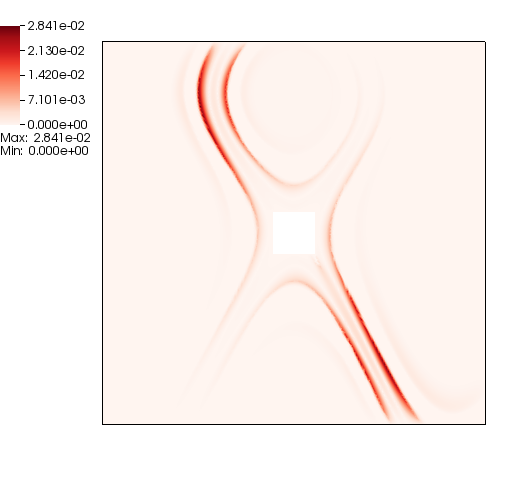}}\\
\subfloat[$\|\boldsymbol{c}_{\mathrm{SS}}-\boldsymbol{c}_{\mathrm{RS}}\|$]{\includegraphics[scale=0.45]{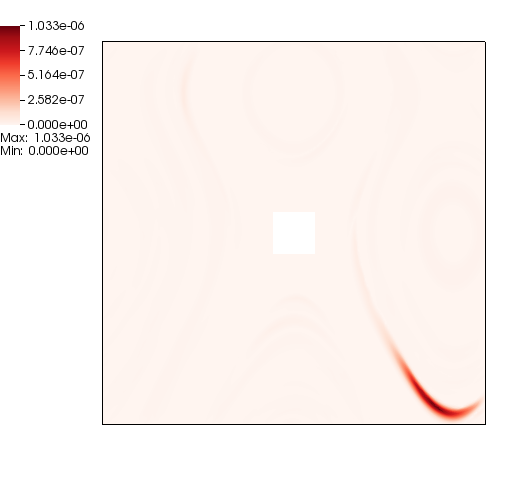}}
\caption{2D advection-diffusion: absolute difference in concentrations between the clipped and non-negative solutions.}
\label{Fig:2D_AD_clipped}
\end{figure}
Next we consider the advection-diffusion problem under the SUPG formulation 
where only VI - SS and VI - RS methods are applicable for enforcing the maximum
principle and the non-negative constraint. Consider a bi-unit 
square: $\Omega := (0,1)\times(0,1)$  with a square hole of dimension 
$\left[\frac{4}{9},\frac{5}{9}\right]\times\left[\frac{4}{9},
\frac{5}{9}\right]$ as shown in Figure \ref{Fig:2D_AD_description}. 
The mesh is discretized into 96,430 unstructured triangular elements 
and 48,663 vertices. Homogeneous boundary conditions are applied on the outside 
boundary, and a Dirichlet boundary value 
$c^{\mathrm{p}}(\mathbf{x}) = 1$ is applied on the interior boundary 
$\Gamma^{\mathrm{hole}}$. The velocity vector field $\mathbf{v}(\mathbf{x})$ 
is characterized by the following:
\begin{subequations}
\begin{align}
v_{x} &= \cos(2 \pi y^2) \\
v_{y} &= \sin(2 \pi x) + \cos(2 \pi x^2) 
\end{align}
\label{Eqn:S5_2D_flow}
\end{subequations}
and the diffusivity tensor $\mathbf{D}(\mathbf{x})$ for this
problem is the dispersion tensor:
\begin{align}
\mathbf{D}(\mathbf{x}) = \left(\alpha_T \|\mathbf{v}\|+D_{M}\right) \mathbf{I} + 
(\alpha_L - \alpha_T) \frac{\mathbf{v} \otimes \mathbf{v}}{\|\mathbf{v}\|} 
\label{Eqn:S5_dispersion_tensor}
\end{align}
where $\alpha_L = 10^{-1}$, $\alpha_T = 10^{-5}$, and $D_M = 10^{-9}$ denote the 
longitudinal dispersivity, transverse dispersivity and molecular diffusivity, 
respectively. Figure \ref{Fig:2D_AD_solutions} depicts the numerical solutions 
under the SUPG, VI - SS, and VI - RS formulations. We see that the SUPG formulation
results in negative concentrations as well as concentrations greater than the maximum prescribed boundary condition whereas the two VI solvers successfully correct these
concentrations. \emph{The absolute difference plots, as seen in Figure 
\ref{Fig:2D_AD_clipped}, indicate that the VI are also 
similar to one another and differ from the clipping procedure.} 
These 2D benchmarks suggest that the QP and VI solvers are accurate alternatives
to the clipping procedure for satisfying the discrete maximum principle and the 
non-negative constraint. Listings \ref{Code:ex1} and \ref{Code:ex2} 
contain the Firedrake project files for solving the GAL and SUPG 
formulations, respectively.
\subsection{3D benchmark}
\begin{figure}[t]
\centering
\subfloat{\includegraphics[scale=0.45]{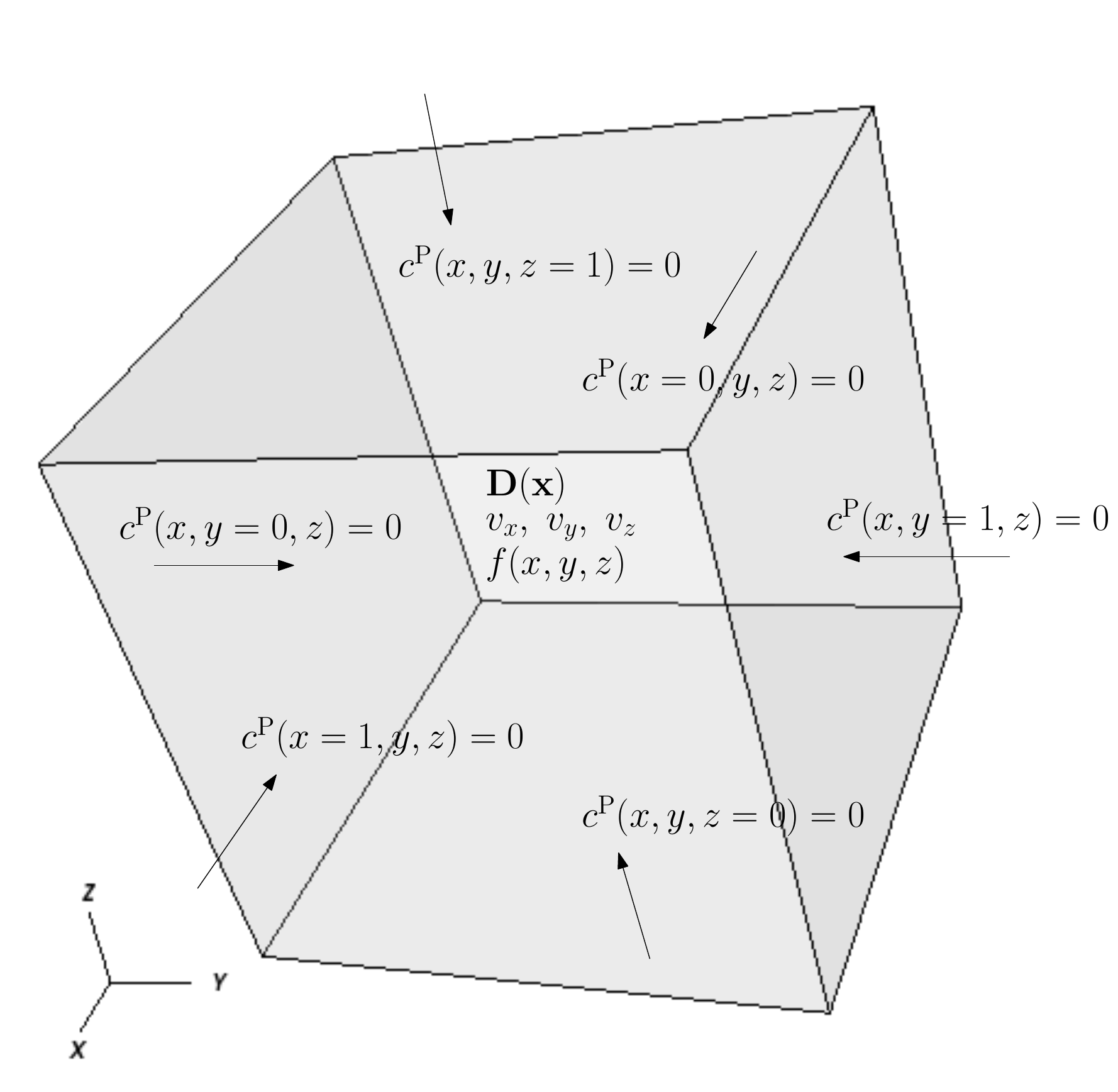}}
\subfloat{\includegraphics[scale=0.45]{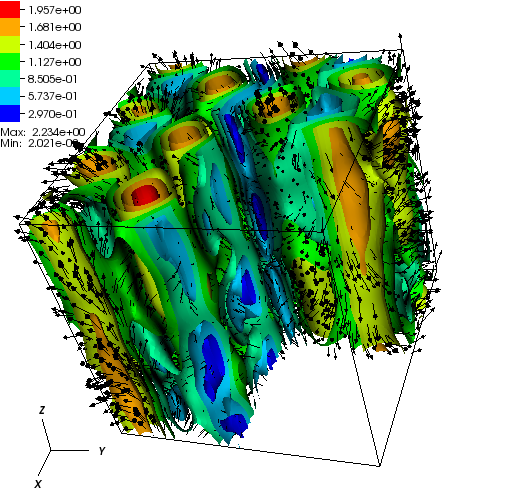}}
\caption{3D benchmarks: Left figure contains a pictorial description of the 
boundary value problem. Right figure contains the corresponding velocity contour 
and vector field for the ABC flow (right) for the steady-state diffusion 
and advection-diffusion problems.}
\label{Fig:3D_description}
\end{figure}
\begin{figure}[t]
\centering
\subfloat[GAL]{\includegraphics[scale=0.45]{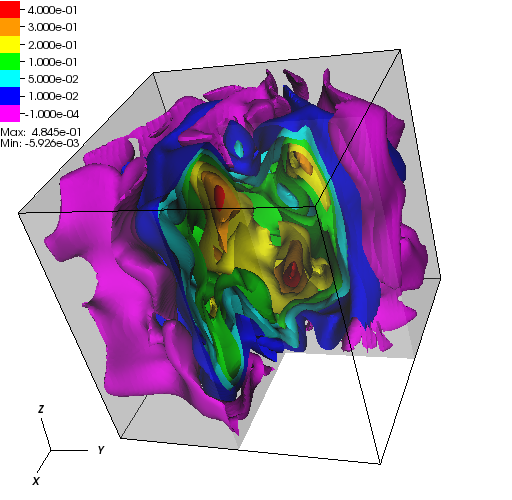}}
\subfloat[GAL with VI - SS]{\includegraphics[scale=0.45]{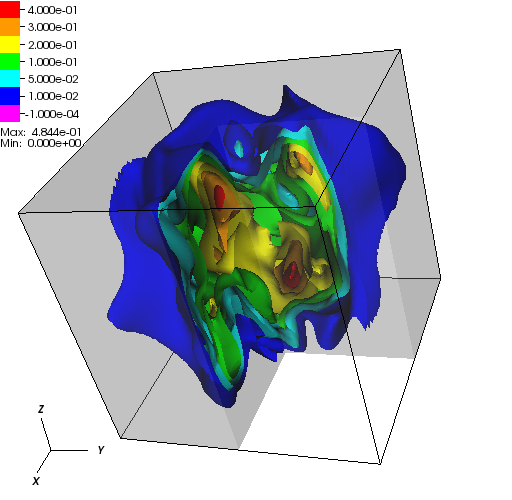}} \\
\subfloat[DG]{\includegraphics[scale=0.45]{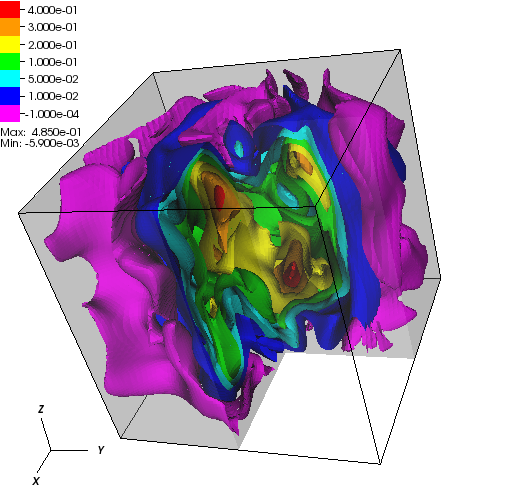}}
\subfloat[DG with VI - SS]{\includegraphics[scale=0.45]{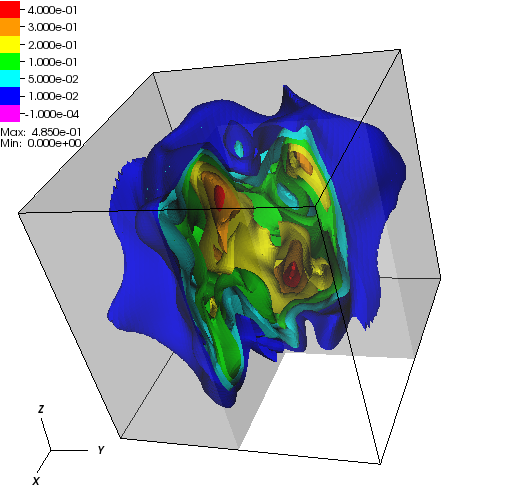}}
\caption{3D diffusion: 3D contours of the concentrations for the GAL and DG formulations with and without and VI - SS for $h$-size = 1/80, where the purple contours represent regions with negative concentrations 
(see online version for color figures.)}
\label{Fig:3D_D_contours}
\end{figure}
We now consider a 3D problem designed to capture two particular 
aspects that may arise in large-scale applications: 1) chaotic advection, 
which is pervasive in many porous media applications \citep{lester2013chaotic}, and 
2) random point sources, which in subsurface remediation problems are the sites 
where potential contaminant leaks occur. Predictive modeling involving such 
important aspects require numerical methodologies that are not accurate but 
fast and scalable in a parallel environment. Herein, our goal is to study the 
computational performance of the various QP and VI solvers under the GAL, 
SUPG, and DG formulations.

Consider a unit cube domain as shown in Figure \ref{Fig:3D_description} 
with chaotic advection flow characterized by the 
Arnold-Beltrami-Childress (ABC) flow \citep{Zhao_SIAMJAM_1993,Dombre_JFM_1986}: 
\begin{subequations}
\begin{align}
v_{x} &= 0.3 \sin(2 \pi z) + \cos(3 \pi y) \\
v_{y} &= 0.65 \sin(2 \pi x) + 0.3 \cos(5 \pi z) \\
v_{z} &= \sin(4 \pi y) + 0.65 \cos(6 \pi y) 
\end{align}
\label{Eqn:S5_abc_flow}
\end{subequations}

For this problem, we shall also let $\mathbf{D}(\mathbf{x})$ denote 
the dispersion tensor as shown in equation 
\eqref{Eqn:S5_dispersion_tensor} where $\alpha_L = 10^{-1}$, $\alpha_T = 10^{-5}$, 
and $D_M = 10^{-9}$. All six faces of the cube have homogeneous boundary conditions,
and the following forcing function consisting of 8 randomly located point sources is
used throughout the domain:
\begin{align}
  f(x,y,z) &= \left\{\begin{array}{ll}
  1 & \mathrm{if}\;(x,y,z) \in[0.4,0.2,0.1]\times[0.5,0.3,0.2] \\
  1 & \mathrm{if}\;(x,y,z) \in[0.8,0.4,0.2]\times[0.9,0.5,0.3] \\
  1 & \mathrm{if}\;(x,y,z) \in[0.5,0.7,0.3]\times[0.6,0.8,0.4] \\
  1 & \mathrm{if}\;(x,y,z) \in[0.3,0.5,0.2]\times[0.4,0.6,0.3] \\
  1 & \mathrm{if}\;(x,y,z) \in[0.5,0.2,0.6]\times[0.6,0.3,0.7] \\
  1 & \mathrm{if}\;(x,y,z) \in[0.6,0.5,0.7]\times[0.7,0.6,0.8] \\
  1 & \mathrm{if}\;(x,y,z) \in[0.4,0.7,0.8]\times[0.5,0.8,0.9] \\
  1 & \mathrm{if}\;(x,y,z) \in[0.1,0.4,0.7]\times[0.2,0.5,0.8] \\
  0 & \mathrm{otherwise}
  \end{array}
  \right.
\label{Eqn:S5_prob2_source}
\end{align}
To understand the parallel and algorithmic scalability of the QP and VI 
solvers, various levels of mesh refinement 
are considered, ranging from 1,331 to 1,030,301 degrees-of-freedom
for the GAL/SUPG formulations and ranging 
from 8,000 to 4,096,000 degrees-of-freedom for the DG formulations.
Up to 16 MPI processes are used to study the weak-scaling and strong-scaling
potential of these solvers. 

\begin{table}[t]
  \centering
  \caption{3D diffusion: minimum and maximum concentrations for various level of mesh refinement under the GAL formulation. \label{Tab:D_violations_ex1}}
  \begin{tabu} to 1.0\textwidth{@{}X[0.3]X[0.6c]X[0.6c]X[c]@{}}
    \hline
    $h$-size & Min. concentration & Max. concentration & \% degrees-of-freedom violated \\
    \hline
    1/10 & -0.0224497 & 0.368322 & 280/1,331 $\rightarrow$ 21.0\% \\
    1/20 & -0.0071611 & 0.339679 & 2,462/9,261 $\rightarrow$ 26.6\% \\
    1/30 & -0.0083804 & 0.481598 & 8,449/29,791 $\rightarrow$ 28.4\% \\
    1/40 & -0.0062918 & 0.378390 & 20,195/68,921 $\rightarrow$ 29.3\% \\
    1/50 & -0.0067679 & 0.477119 & 39,500/132,651 $\rightarrow$ 29.8\% \\
    1/60 & -0.0072030 & 0.518469 & 68,161/226,981 $\rightarrow$ 30.0\% \\
    1/70 & -0.0066007 & 0.498127 & 109,554/357,911 $\rightarrow$ 30.6\% \\
    1/80 & -0.0059264 & 0.484484 & 160,925/531,441 $\rightarrow$ 30.3\% \\
    \hline
  \end{tabu}
\end{table}
\begin{table}[t]
  \centering
  \caption{3D diffusion: minimum and maximum concentrations for various level of mesh refinement under the DG formulation. \label{Tab:D_violations_ex2}}
  \begin{tabu} to 1.0\textwidth{@{}X[0.3]X[0.6c]X[0.6c]X[c]@{}}
    \hline
    $h$-size & Min. concentration & Max. concentration & \% degrees-of-freedom violated \\
    \hline
    1/10 & -0.0226040 & 0.372831 & 3,704/8,000 $\rightarrow$ 46.3\% \\
    1/20 & -0.0071913 & 0.341955 & 27,496/64,000 $\rightarrow$ 43.0\% \\
    1/30 & -0.0082811 & 0.483264 & 91,176/216,000 $\rightarrow$ 42.2\% \\
    1/40 & -0.0062341 & 0.379389 & 213,000/512,000 $\rightarrow$ 41.6\% \\
    1/50 & -0.0067168 & 0.478146 & 410,976/1,000,000 $\rightarrow$ 41.1\% \\
    1/60 & -0.0071682 & 0.519338 & 702,504/1,728,000 $\rightarrow$ 40.7\% \\
    1/70 & -0.0065727 & 0.498775 & 1,114,856/2,744,000 $\rightarrow$ 40.6\% \\
    1/80 & -0.0058998 & 0.485012 & 1,624,496/4,096,000 $\rightarrow$ 39.7\% \\
    \hline
  \end{tabu}
\end{table}
Figure \ref{Fig:3D_D_contours} depicts the GAL and DG solutions for the diffusion
equation with and without VI - SS. It can be seen from the figures that negative
concentrations are present regardless which finite element formulation is used.
Tables \ref{Tab:D_violations_ex1} and \ref{Tab:D_violations_ex2} indicate 
that negative concentrations arise for the GAL and DG formulations, respectively, 
even as $h$-size is refined. It is interesting to note that the DG formulation not 
only has more degrees-of-freedom but has more regions with negative concentrations 
than its GAL counterpart. Using the initial guess solver from \textsf{Step 2} of
the proposed framework in \ref{SS4:proposed_framework} as a baseline for 
comparison, Tables \ref{Tab:D_solutions_ex1} and 
\ref{Tab:D_solutions_ex2} demonstrate how the wall-clock time and 
number of KSP/VI/QP solver iterations vary with $h$-refinement under a single
MPI process. It should be noted that the timings for the
QP and VI solvers consider both the assembly of the data structures 
as well as the actual solver. The heterogeneous nature of the problem 
causes the number of solvers iterations to increase with 
problem size, but the iteration counts begin to stabilize as the problem
gets bigger. The VI - RS method outperforms VI - SS in both wall-clock 
time and VI iterations but has similar performance to QP - TRON.
\begin{table}[t]
  \centering
  \caption{3D diffusion: wall-clock time and number of solver iterations 
  (KSP, VI, or QP) for various levels of mesh refinement under the GAL formulation.
   \label{Tab:D_solutions_ex1}}
  \begin{tabu} to 1.0\textwidth{@{}X|X[c]X[c]|X[c]X[c]|X[c]X[c]|X[c]X[c]@{}}
    \hline
    \multirow{2}{*}{$h$-size} & \multicolumn{2}{c|}{GAL} & \multicolumn{2}{c|}{VI - SS} & \multicolumn{2}{c|}{VI - RS} & \multicolumn{2}{c}{QP - TRON}\\
   & time (s) & iters & time (s) & iters & time (s) & iters & time (s) & iters \\
    \hline
    1/10 & 0.003 & 9 & 0.027 & 5 & 0.008 & 2 & 0.007 & 2\\
    1/20 & 0.036 & 15 & 0.477 & 12 & 0.147 & 5 & 0.135 & 5\\
    1/30 & 0.165 & 20 & 2.624 & 18 & 0.765 & 7 & 0.650 & 6\\
    1/40 & 0.525 & 24 & 7.576 & 20 & 2.246 & 8 & 1.758 & 6\\
    1/50 & 1.293 & 28 & 21.49 & 27 & 5.381 & 9 & 5.330 & 9\\
    1/60 & 2.556 & 31 & 43.72 & 30 & 12.01 & 11 & 12.20 & 11\\
    1/70 & 4.747 & 35 & 76.21 & 31 & 18.27 & 10 & 17.81 & 9\\
    1/80 & 7.962 & 39 & 140.7 & 37 & 36.40 & 13 & 38.06 & 13\\
    \hline
  \end{tabu}
\end{table}
\begin{table}[t]
  \centering
  \caption{3D diffusion: wall-clock time and number of solver iterations 
  (KSP, VI, or QP) for various levels of mesh refinement under the DG formulation.
   \label{Tab:D_solutions_ex2}}
  \begin{tabu} to 1.0\textwidth{@{}X|X[c]X[c]|X[c]X[c]|X[c]X[c]|X[c]X[c]@{}}
    \hline
    \multirow{2}{*}{$h$-size} & \multicolumn{2}{c|}{DG} & \multicolumn{2}{c|}{VI - SS} & \multicolumn{2}{c|}{VI - RS} & \multicolumn{2}{c}{QP - TRON}\\
   & time (s) & iters & time (s) & iters & time (s) & iters & time (s) & iters \\
    \hline
    1/10 & 0.030 & 10 & 0.748 & 12 & 0.221 & 5 & 0.186 & 5\\
    1/20 & 0.446 & 14 & 1.410 & 20 & 4.528 & 8 & 3.715 & 7\\
    1/30 & 2.148 & 18 & 73.52 & 27 & 23.85 & 10 & 22.64 & 10\\
    1/40 & 6.278 & 21 & 251.5 & 34 & 69.33 & 11 & 70.77 & 11\\
    1/50 & 14.29 & 24 & 623.6 & 39 & 171.7 & 13 & 170.8 & 12\\
    1/60 & 28.25 & 27 & 1290 & 45 & 360.1 & 15 & 388.6 & 15\\
    1/70 & 51.95 & 31 & 2560 & 51 & 620.2 & 16 & 639.0 & 15\\
    1/80 & 85.53 & 34 & 5049 & 54 & 1107 & 19 & 1291 & 17\\
    \hline
  \end{tabu}
\end{table}
\begin{figure}[t]
\centering
\subfloat[GAL - solve time]{\includegraphics[scale=0.5]{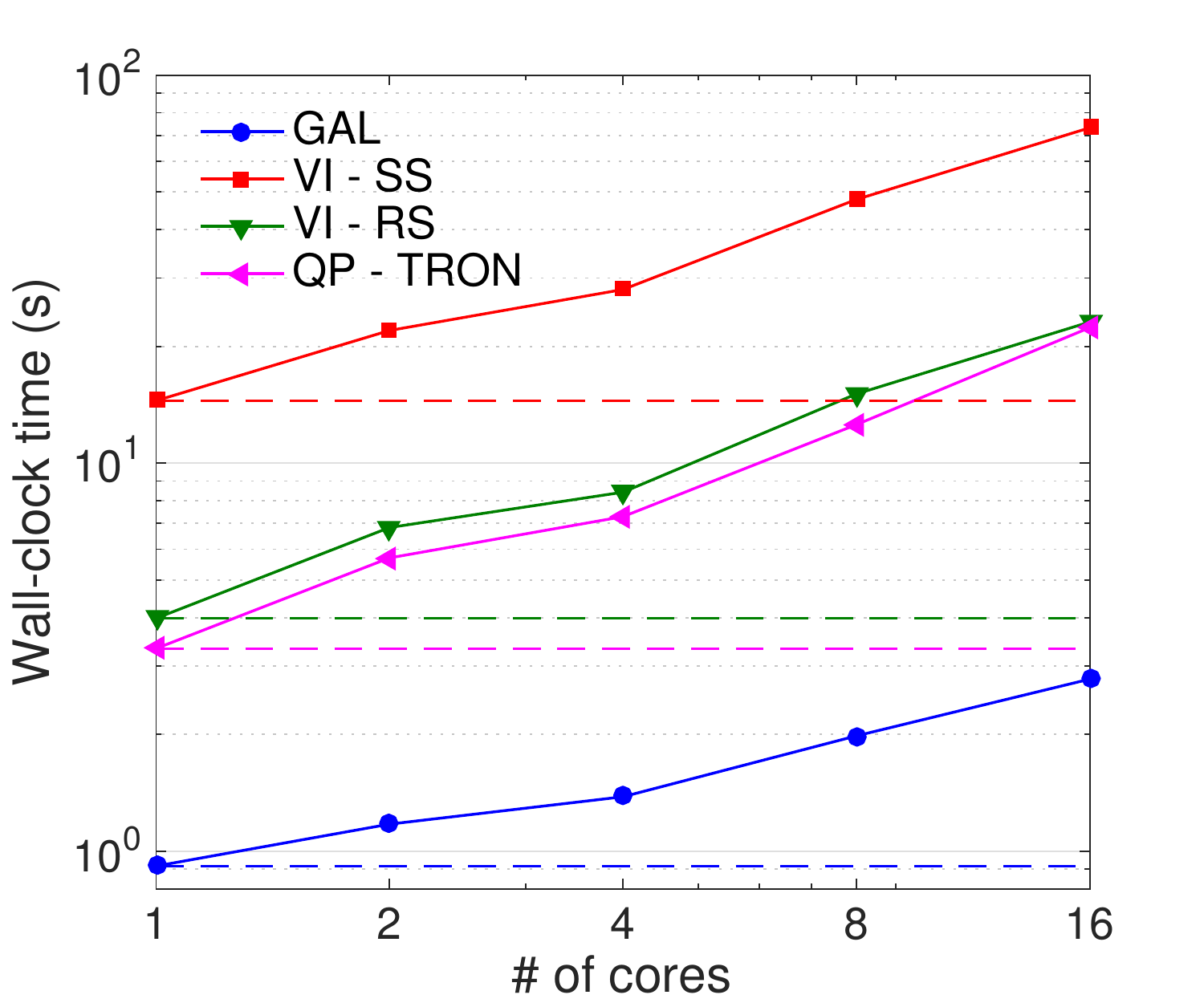}}
\subfloat[GAL - parallel efficiency]{\includegraphics[scale=0.5]{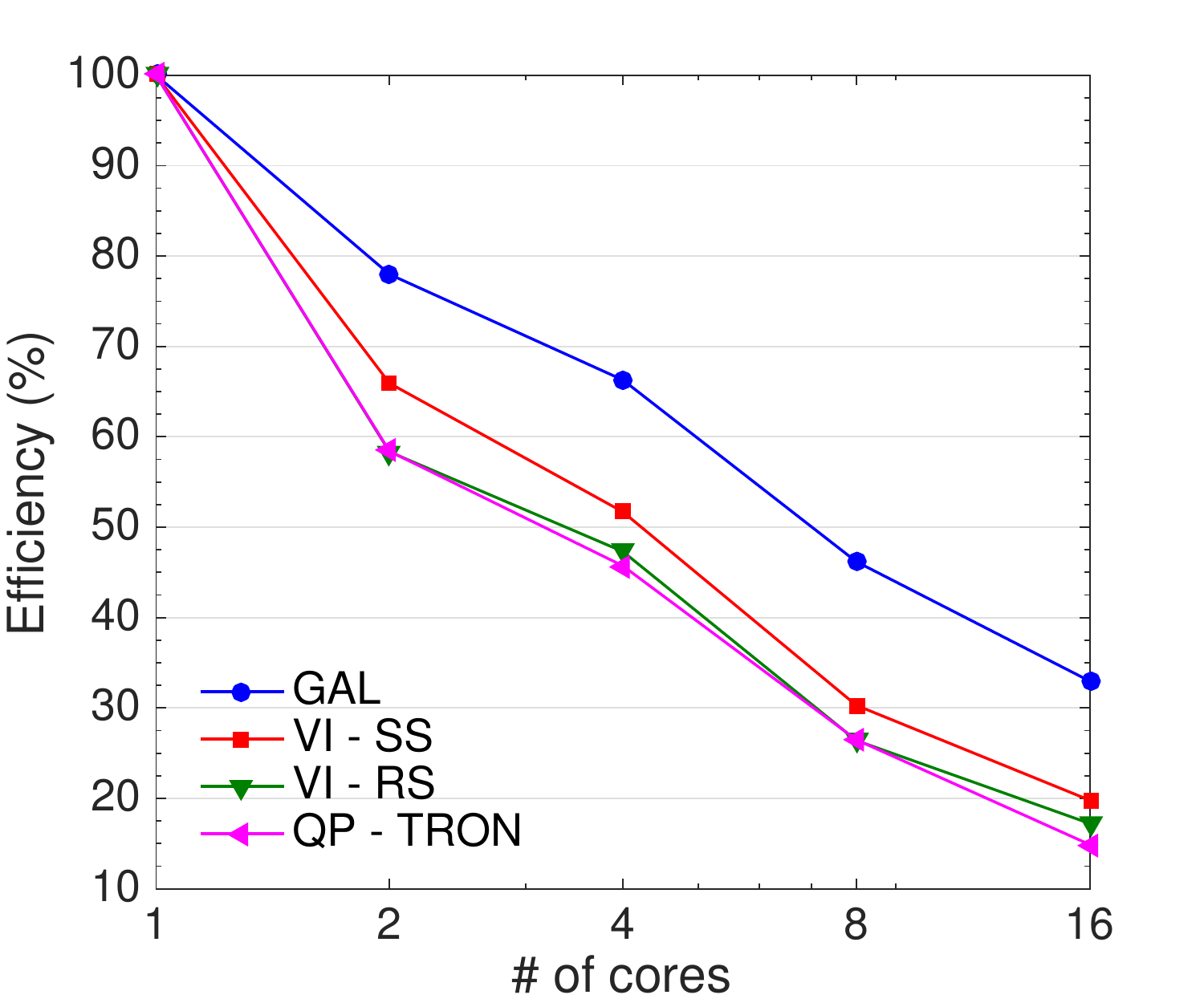}} \\
\subfloat[DG - solve time]{\includegraphics[scale=0.5]{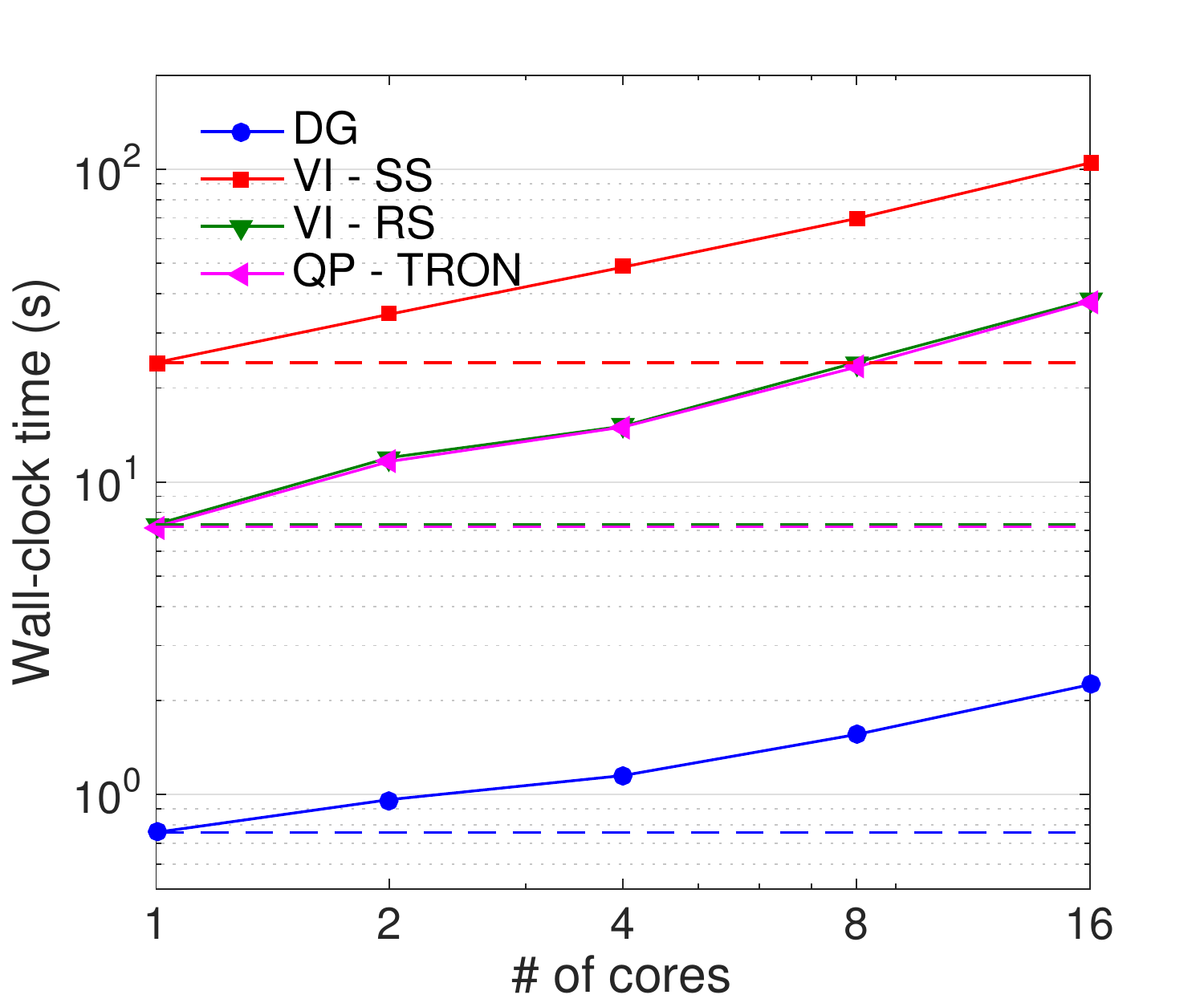}}
\subfloat[DG - parallel efficiency]{\includegraphics[scale=0.5]{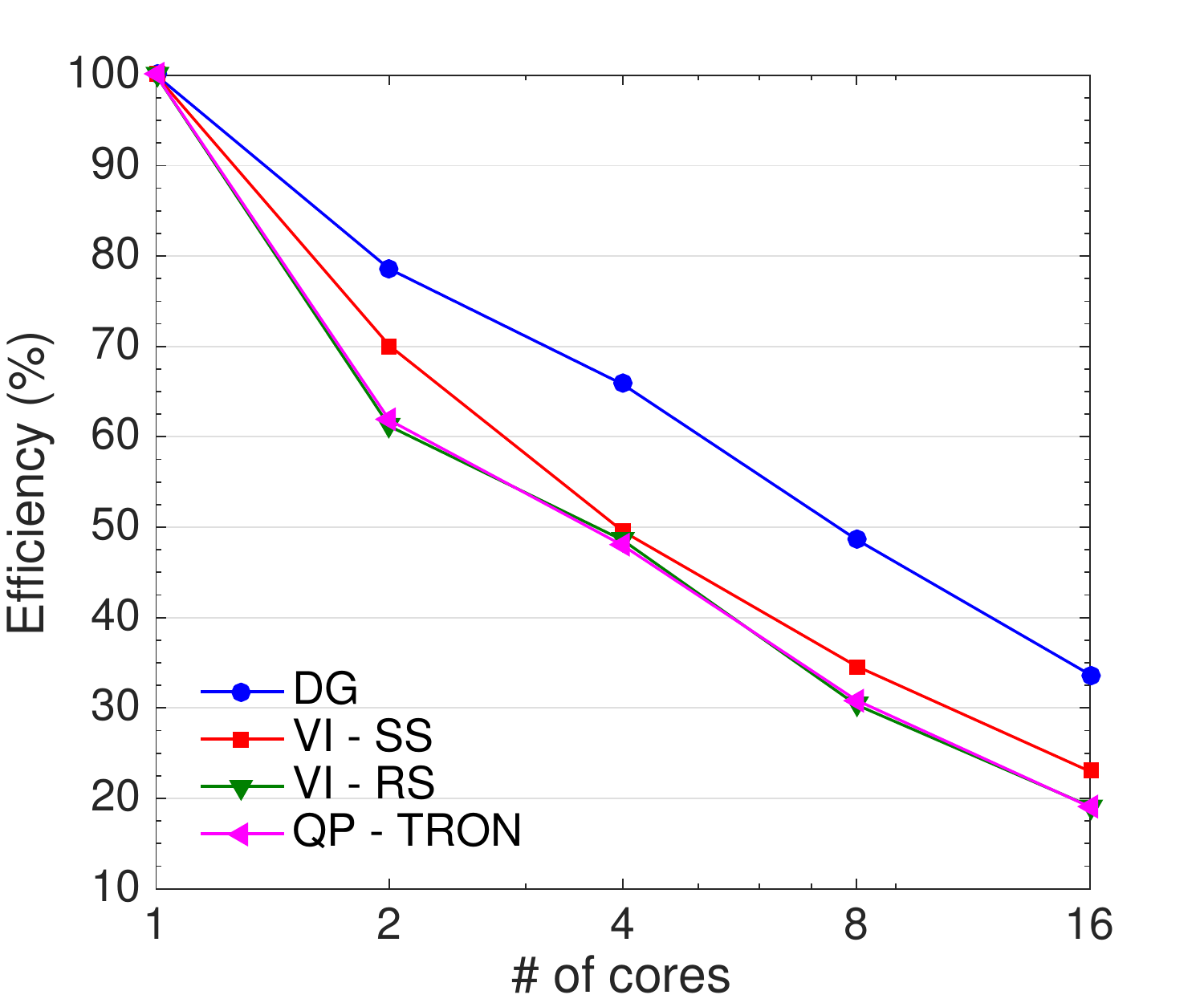}}
\caption{3D diffusion: weak-scaling plots with approximately 100k degrees-of-freedom per core and the corresponding parallel efficiencies.}
\label{Fig:3D_D_weak}
\end{figure}
\begin{figure}[t]
\centering
\subfloat[GAL - solver time]{\includegraphics[scale=0.5]{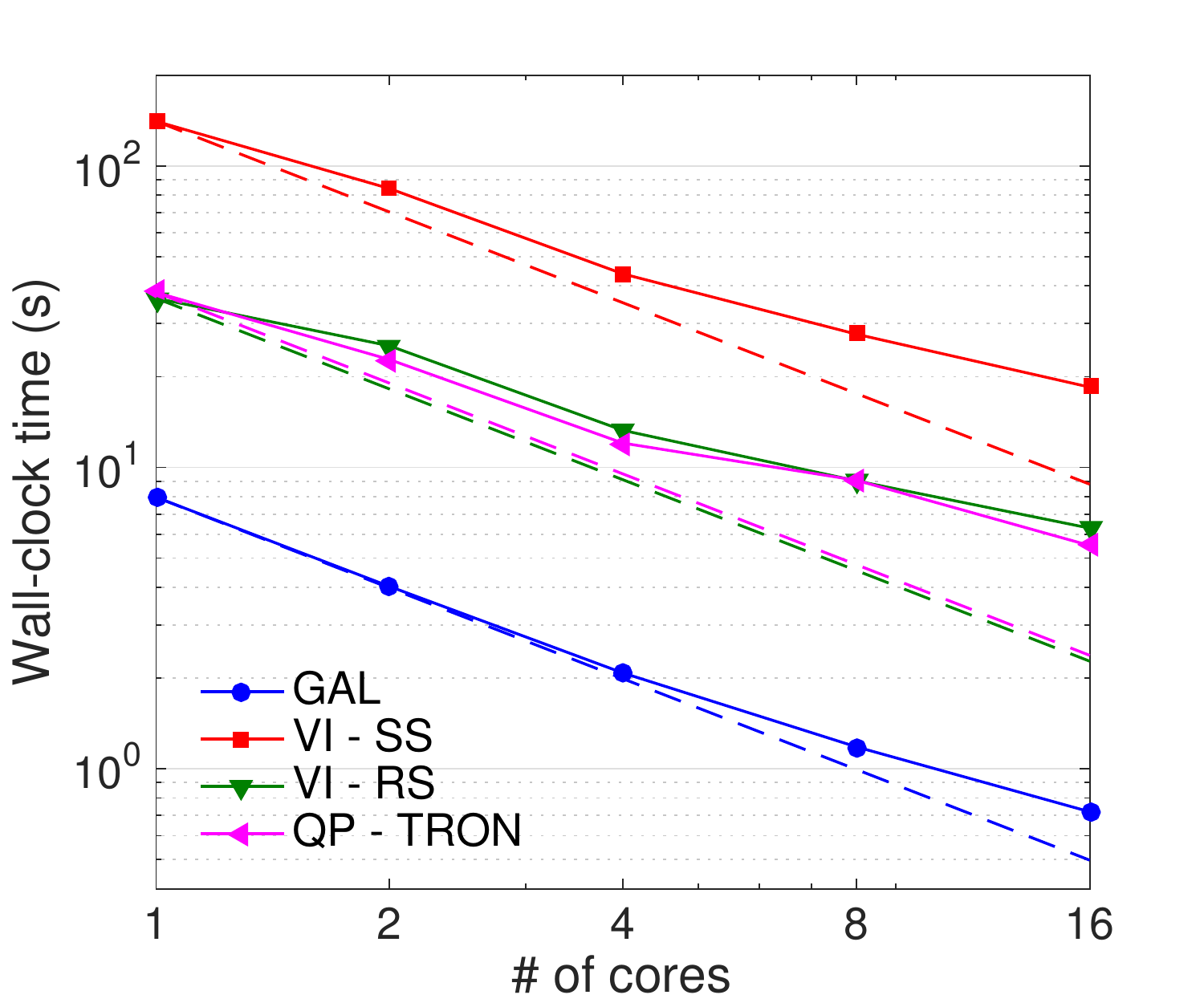}}
\subfloat[GAL - parallel efficiency]{\includegraphics[scale=0.5]{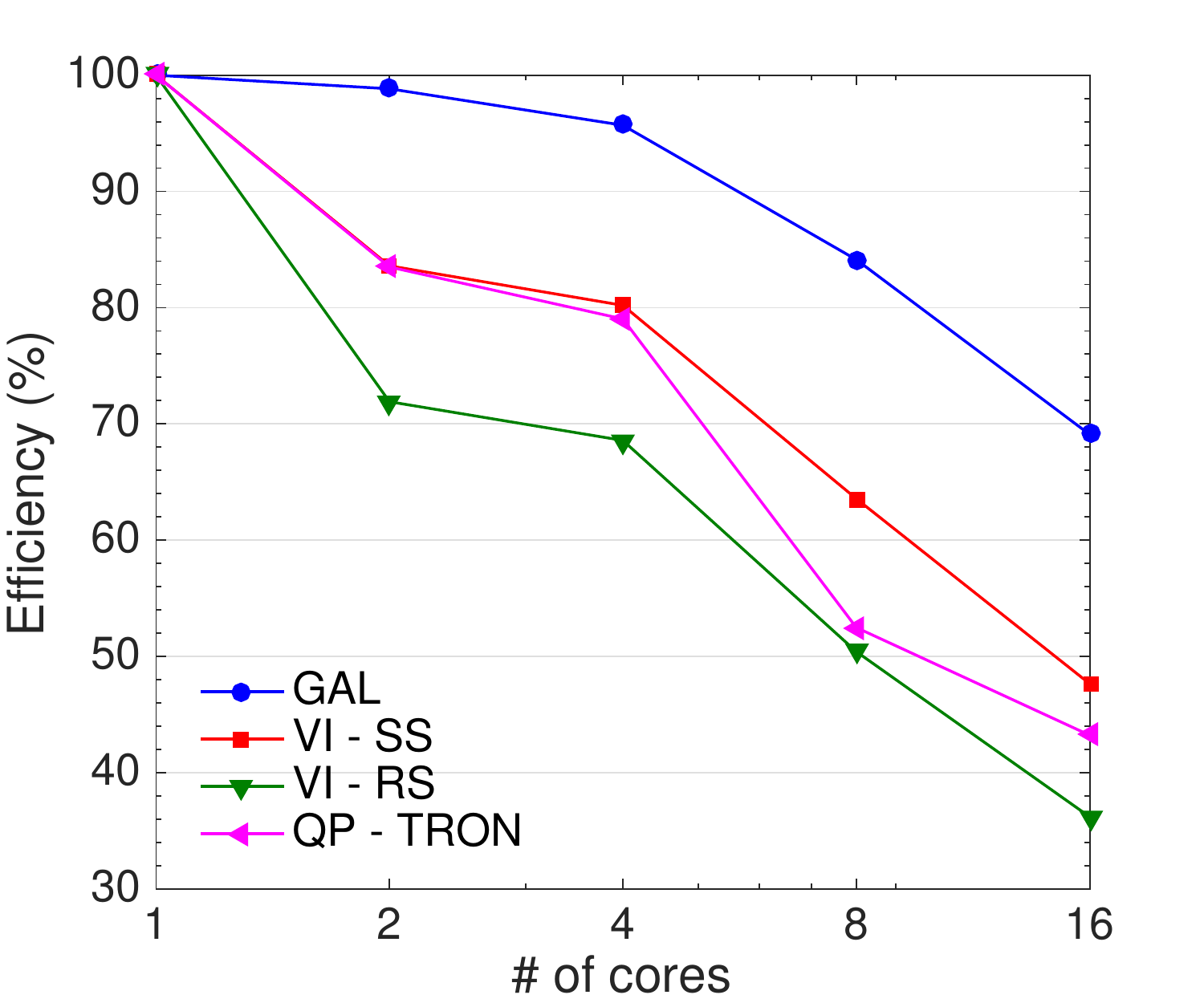}} \\
\subfloat[DG - solver time]{\includegraphics[scale=0.5]{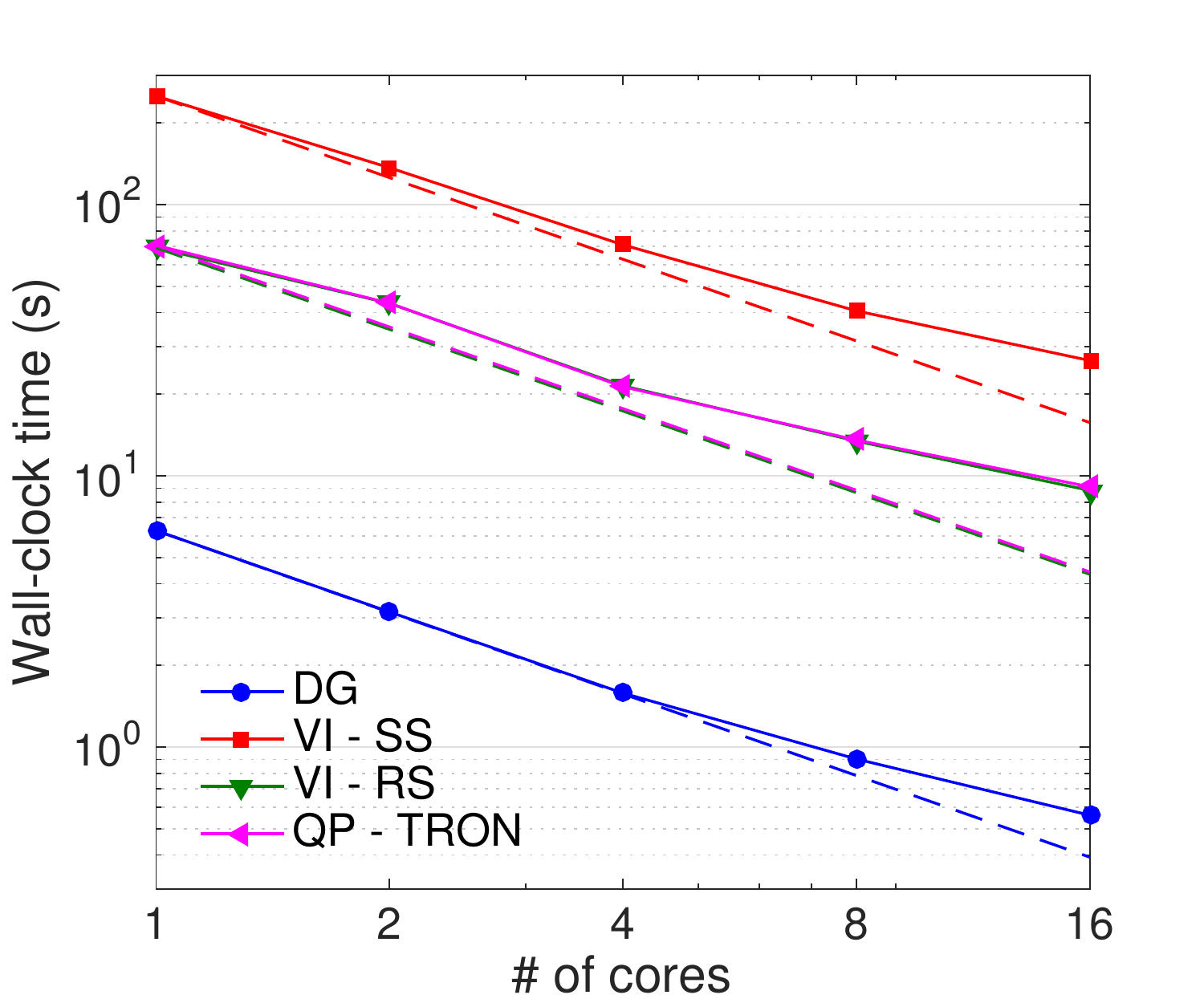}}
\subfloat[DG - parallel efficiency]{\includegraphics[scale=0.5]{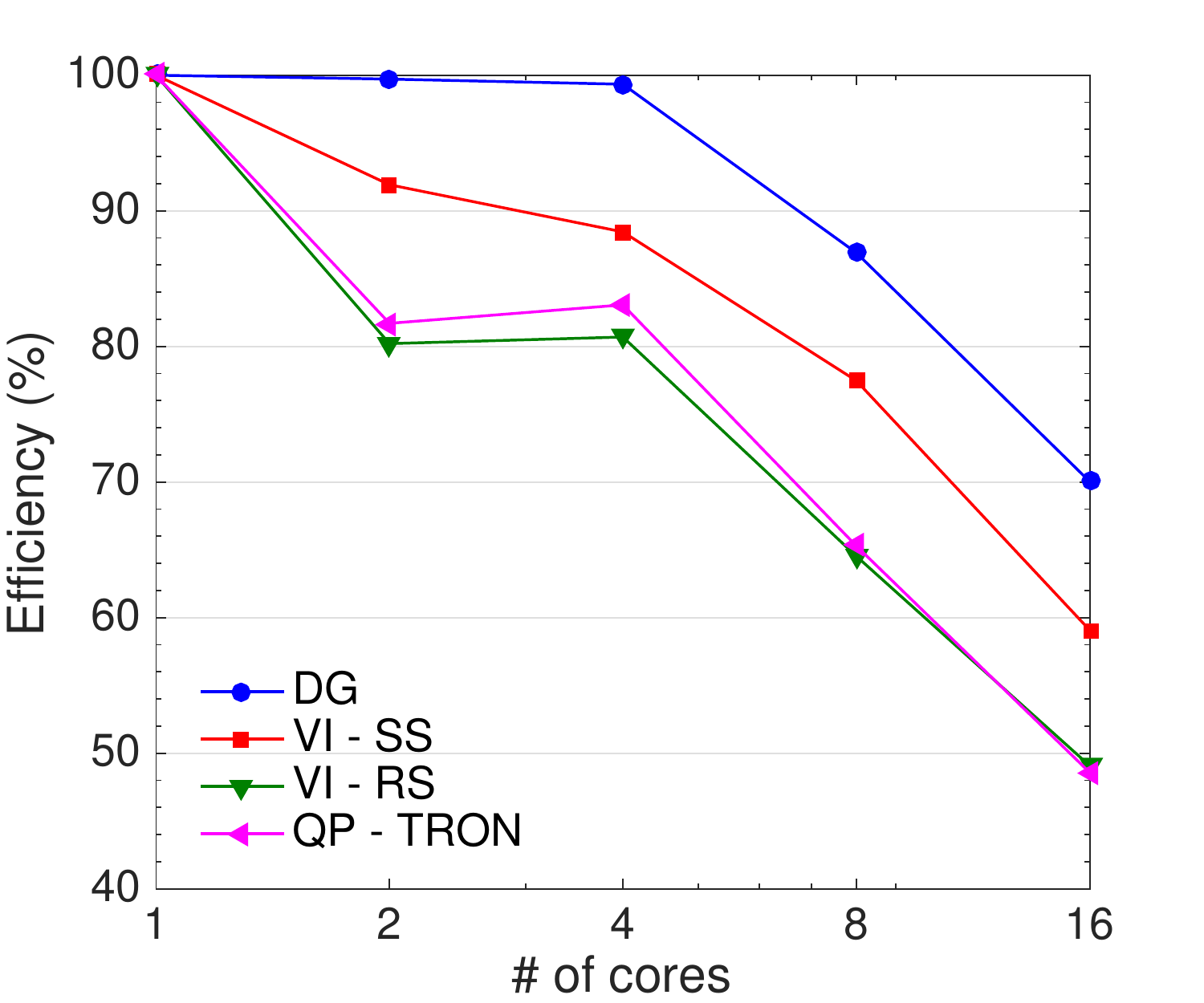}}
\caption{3D diffusion: strong-scaling plots for approximately 500k degrees of freedom ($h$-size = 1/80 and 1/40 for GAL and DG respectively) and the corresponding parallel efficiencies.}
\label{Fig:3D_D_strong}
\end{figure}

Next we perform weak-scaling studies to investigate how increasing both problem size 
and number of MPI processes affects the performance of the VI and QP solvers. 
Each MPI process will handle approximately 100k degrees-of-freedom so the $h$-sizes 
for the GAL case are 1/46, 1/58, 1/73, 1/92, and 1/116 for 1, 2, 4, 8, and 16 
processes respectively whereas the $h$-sizes for the DG case are 1/23, 1/29, 1/37, 1/46, and 1/58 for 1, 2, 4, 8, and 16 processes respectively. Figure \ref{Fig:3D_D_weak} 
contains the scaling plots as well as the parallel efficiencies in the weak sense under
the GAL and DG formulations. Generally speaking, the 
non-negative methodologies do not scale as well in the weak sense for two reasons: 1) 
the wall-clock time and solver iteration for the QP and VI methodologies are not 
linearly proportional to problem size (increasing solver iterations with $h$-size 
as seen from Tables \ref{Tab:D_solutions_ex1} and \ref{Tab:D_solutions_ex2}), 
and 2) the lower KSP relative tolerance for the
gradient descent computations make the overall solver more sensitive to the 
memory-bandwidth, meaning that speedup is reduced as the compute nodes become 
populated with more MPI processes (see Sections 4 and 5 of \citep{Chang_JOMP_2016} 
for a more thorough discussion).

However, the weak-scaling plots alone make it difficult to distinguish 
whether parallel performance deteriorates due to communication overhead 
or suboptimal algorithmic convergence. To better understand why parallel performance 
degrades as the number of MPI processes increases, we conduct strong-scaling 
studies by setting the $h$-size to 1/80 and 1/40 for
the GAL and DG formulations respectively (roughly 500k degrees-of-freedom) 
and study how increasing the number of MPI processes (hence communication overhead) 
affects the parallel performance. Figure \ref{Fig:3D_D_strong} contains the 
strong-scaling plots, and we see that the QP and VI solvers still do not scale as well. 
Regardless of the finite element formulation used, the QP and VI - RS methods 
have roughly the same strong-scaling performance whereas the VI - SS method 
has slightly better strong-scaling.  
\begin{figure}[t]
\centering
\subfloat[SUPG]{\includegraphics[scale=0.45]{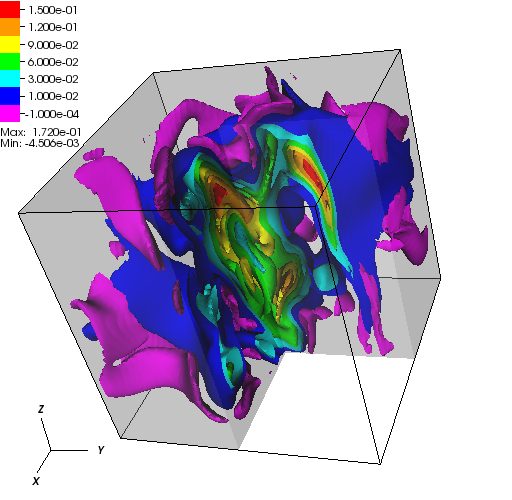}}
\subfloat[SUPG with VI - SS]{\includegraphics[scale=0.45]{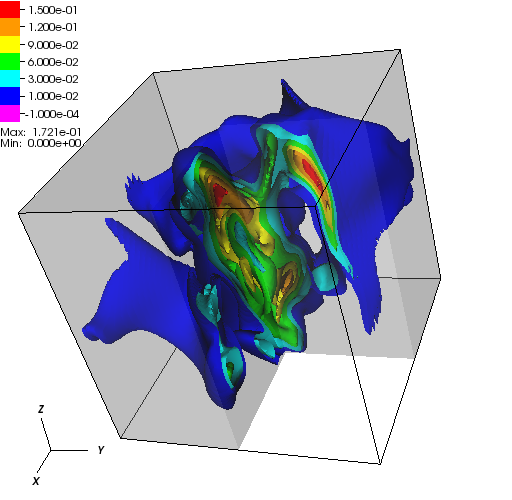}} \\
\subfloat[DG]{\includegraphics[scale=0.45]{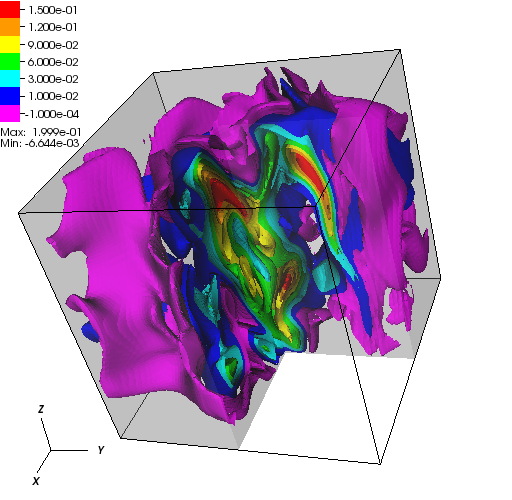}}
\subfloat[DG with VI - SS]{\includegraphics[scale=0.45]{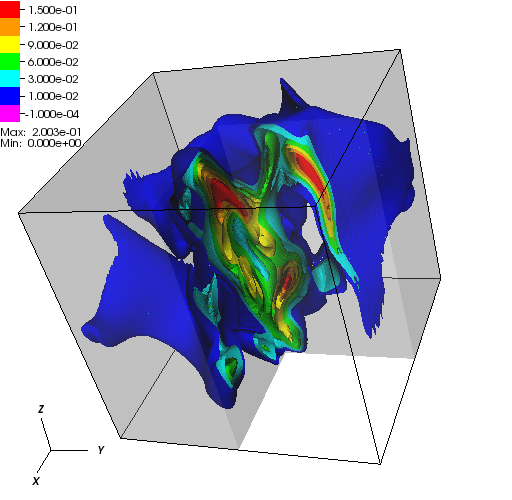}}
\caption{3D advection-diffusion: 3D contours of the concentrations for the SUPG and DG formulations with and without and VI - SS for $h$-size = 1/80, where the purple contours represent regions with negative concentrations (see online version for color figures.)}
\label{Fig:3D_AD_contours}
\end{figure}
\begin{table}[t]
  \centering
  \caption{3D advection-diffusion: minimum and maximum concentrations for various 
  level of mesh refinement under the SUPG formulation. \label{Tab:AD_violations_ex1}}
  \begin{tabu} to 1.0\textwidth{@{}X[0.3]X[0.6c]X[0.6c]X[c]@{}}
    \hline
    $h$-size & Min. concentration & Max. concentration & \% degrees-of-freedom violated \\
    \hline
    1/10 & -0.0135676 & 0.187489 & 212/1,331 $\rightarrow$ 15.9\% \\
    1/20 & -0.0068733 & 0.180922 & 2,323/9,261 $\rightarrow$ 25.1\% \\
    1/30 & -0.0091657 & 0.210942 & 7,964/29,791 $\rightarrow$ 26.7\% \\
    1/40 & -0.0055686 & 0.171690 & 18,235/68,921 $\rightarrow$ 26.5\% \\
    1/50 & -0.0064795 & 0.185440 & 35,221/132,651 $\rightarrow$ 26.6\% \\
    1/60 & -0.0063168 & 0.189047 & 61,171/226,981 $\rightarrow$ 26.9\% \\
    1/70 & -0.0053682 & 0.179675 & 99,668/357,911 $\rightarrow$ 27.8\% \\
    1/80 & -0.0045065 & 0.172049 & 147,462/531,441 $\rightarrow$ 27.7\%  \\
    \hline
   \end{tabu}
\end{table}
\begin{table}[t]
  \centering
  \caption{3D advection-diffusion: minimum and maximum concentrations for various level of mesh refinement under the DG formulation. \label{Tab:AD_violations_ex2}}
  \begin{tabu} to 1.0\textwidth{@{}X[0.3]X[0.6c]X[0.6c]X[c]@{}}
    \hline
    $h$-size & Min. concentration & Max. concentration & \% degrees-of-freedom violated \\
    \hline
    1/10 & -0.0151514 & 0.259127 & 4,976/8,000 $\rightarrow$ 62.2\% \\
    1/20 & -0.0162537 & 0.211295 & 37,464/64,000 $\rightarrow$ 58.5\% \\
    1/30 & -0.0137824 & 0.237722 & 120,296/216,000 $\rightarrow$ 55.7\% \\
    1/40 & -0.0067079 & 0.186956 & 276,832/512,000 $\rightarrow$ 54.1\% \\
    1/50 & -0.0057574 & 0.203852 & 526,080/1,000,000 $\rightarrow$ 52.6\% \\
    1/60 & -0.0065627 & 0.203093 & 891,768/1,728,000 $\rightarrow$ 51.6\% \\
    1/70 & -0.0069389 & 0.193418 & 1,410,208/2,744,000 $\rightarrow$ 51.4\% \\
    1/80 & -0.0066445 & 0.199912 & 2,069,752/4,096,000 $\rightarrow$ 50.5\% \\
    \hline
  \end{tabu}
\end{table}

For the advection-diffusion equation, the same problem is considering but 
advection due to the ABC flow is now taken into account. A Firedrake project 
implementation of the DG formulation can be found in Listing \ref{Code:ex3}.
Like the diffusion 
equation, the advection-diffusion equation also exhibits negative concentrations 
as seen from Figure \ref{Fig:3D_AD_contours}. Table 
\ref{Tab:AD_violations_ex1} depicts violations
under the SUPG formulation to be no greater than 30\% whereas the DG formulation 
exhibits huge violations as seen from Table \ref{Tab:AD_violations_ex2}. 
Moreover, the single MPI process metrics shown in Tables \ref{Tab:AD_solutions_ex1} 
and \ref{Tab:AD_solutions_ex2} clearly indicate that the advection-diffusion 
equations are generally more expensive to solve than its diffusion counterpart. 
These metrics tell us that the VI iterations also begin to stabilize as the 
problem size increases and that the VI - RS method is faster than 
the VI - SS method for the advection-diffusion equation. 

The weak-scaling plots, as seen from Figure \ref{Fig:3D_AD_weak}, indicate
that the VI solvers are identical in performance, but the 
strong-scaling plots from Figure \ref{Fig:3D_AD_strong} suggest that 
VI - SS has slightly better scaling than VI - RS. These steady-state 
numerical experiments suggest that the VI - RS is the preferred methodology for 
solving large-scale advection-diffusion equations. However, if advection becomes
negligible thus reducing the system to a symmetric diffusion problem,
one could use either QP - TRON or VI - RS as these solvers are equal in 
both parallel and algorithmic performance.
\begin{remark}
These parallel performance studies do not indicate
how truly efficient the PETSc and TAO implementations 
of the QP and VI solvers are in the context of high performance 
computing. It should be noted that a serially efficient
algorithm will likely have poor parallel efficiency due to 
dominating effects from communication overhead and memory 
latencies. An approximation of the hardware performance
for these solvers can be made through the perfect cache 
model described in Section 4 of \citep{Chang_JOMP_2016}. 
However, a more thorough and detailed performance model 
is needed to fully understand the performance of important 
memory-bandwidth limited algorithms such as the ones 
studied in this paper and is part of our future work.
\end{remark}
\begin{table}[t]
  \centering
  \caption{3D advection-diffusion: wall-clock time and number of linear (KSP) or nonlinear (VI) solve iterations for various levels of mesh refinement under the SUPG 
  formulation. \label{Tab:AD_solutions_ex1}}
  \begin{tabu} to 1.0\textwidth{@{}X|X[c]X[c]|X[c]X[c]|X[c]X[c]@{}}
    \hline
    \multirow{2}{*}{$h$-size} & \multicolumn{2}{c|}{SUPG} & \multicolumn{2}{c|}{VI - SS} & \multicolumn{2}{c}{VI - RS} \\
   & time (s) & iters & time (s) & iters & time (s) & iters \\
    \hline
    1/10 & 0.003 & 9 & 0.028 & 5 & 0.009 & 2 \\
    1/20 & 0.045 & 16 & 0.504 & 11 & 0.146 & 4 \\
    1/30 & 0.221 & 22 & 2.525 & 14 & 0.710 & 5 \\
    1/40 & 0.768 & 29 & 10.21 & 20 & 2.592 & 6 \\
    1/50 & 2.019 & 35 & 27.27 & 23 & 8.842 & 10 \\
    1/60 & 4.530 & 43 & 65.61 & 30 & 19.20 & 10 \\
    1/70 & 8.178 & 47 & 117.3 & 28 & 34.54 & 11 \\
    1/80 & 14.70 & 55 & 229.4 & 34 & 66.07 & 12 \\
    \hline
  \end{tabu}
\end{table}
\begin{table}[t]
  \centering
  \caption{3D advection-diffusion: wall-clock time and number of linear (KSP) or nonlinear (VI) solve iterations for various levels of mesh refinement under the DG 
  formulation. \label{Tab:AD_solutions_ex2}}
  \begin{tabu} to 1.0\textwidth{@{}X|X[c]X[c]|X[c]X[c]|X[c]X[c]@{}}
    \hline
    \multirow{2}{*}{$h$-size} & \multicolumn{2}{c|}{DG} & \multicolumn{2}{c|}{VI - SS} & \multicolumn{2}{c}{VI - RS} \\
   & time (s) & iters & time (s) & iters & time (s) & iters \\
    \hline
    1/10 & 0.031 & 10 & 0.807 & 13 & 0.202 & 5 \\
    1/20 & 0.459 & 14 & 14.46 & 21 & 3.890 & 8 \\
    1/30 & 2.352 & 19 & 73.84 & 27 & 19.56 & 10 \\
    1/40 & 6.819 & 22 & 251.3 & 35 & 56.94 & 11 \\
    1/50 & 15.49 & 25 & 629.1 & 40 & 137.1 & 13 \\
    1/60 & 30.65 & 28 & 1387 & 47 & 292.3 & 15 \\
    1/70 & 57.56 & 32 & 2267 & 51 & 570.0 & 18 \\
    1/80 & 97.15 & 36 & 7105 & 60 & 917.2 & 18 \\
    \hline
  \end{tabu}
\end{table}
\begin{figure}[t]
\centering
\subfloat[SUPG - solve time]{\includegraphics[scale=0.5]{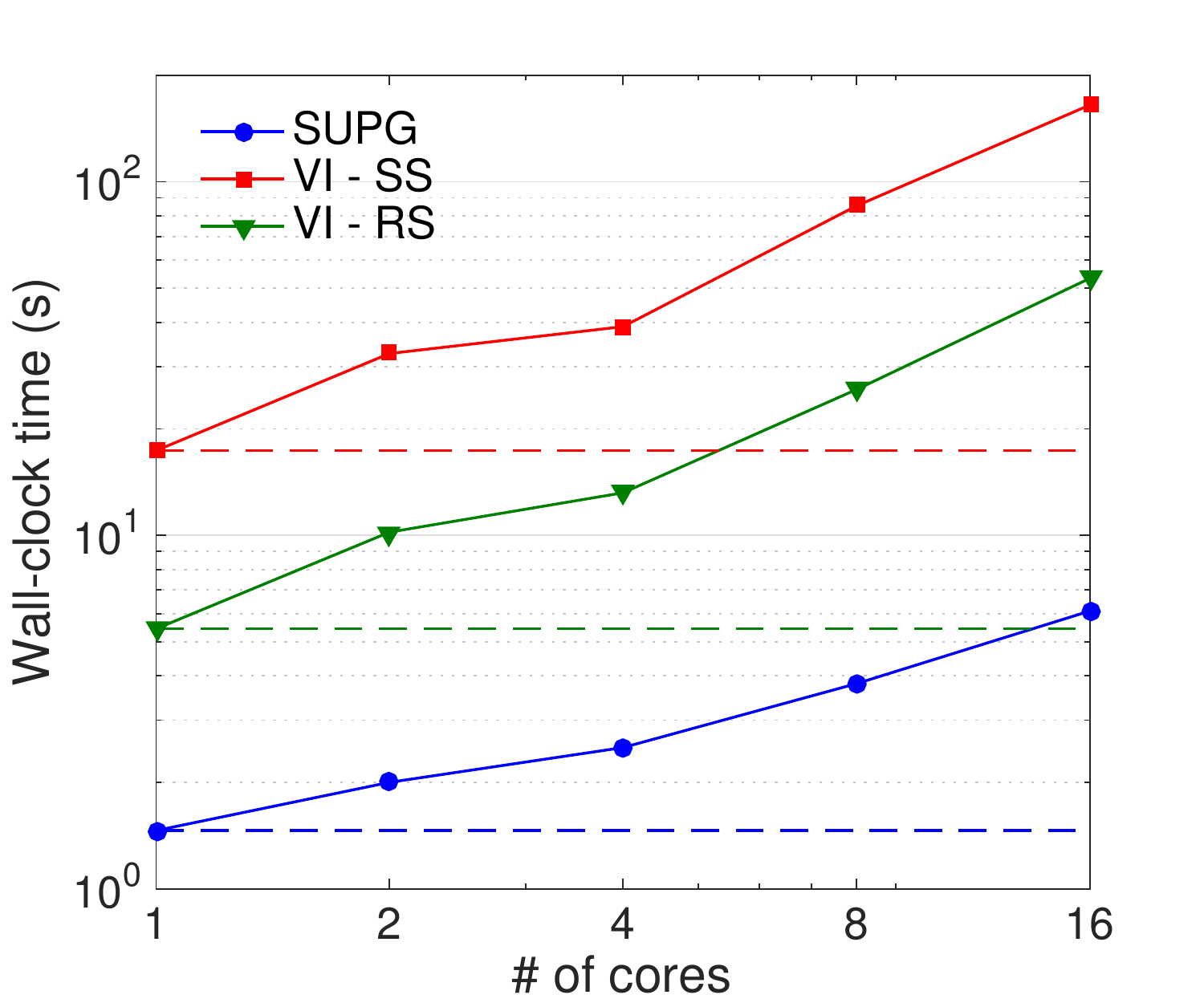}}
\subfloat[SUPG - parallel efficiency]{\includegraphics[scale=0.5]{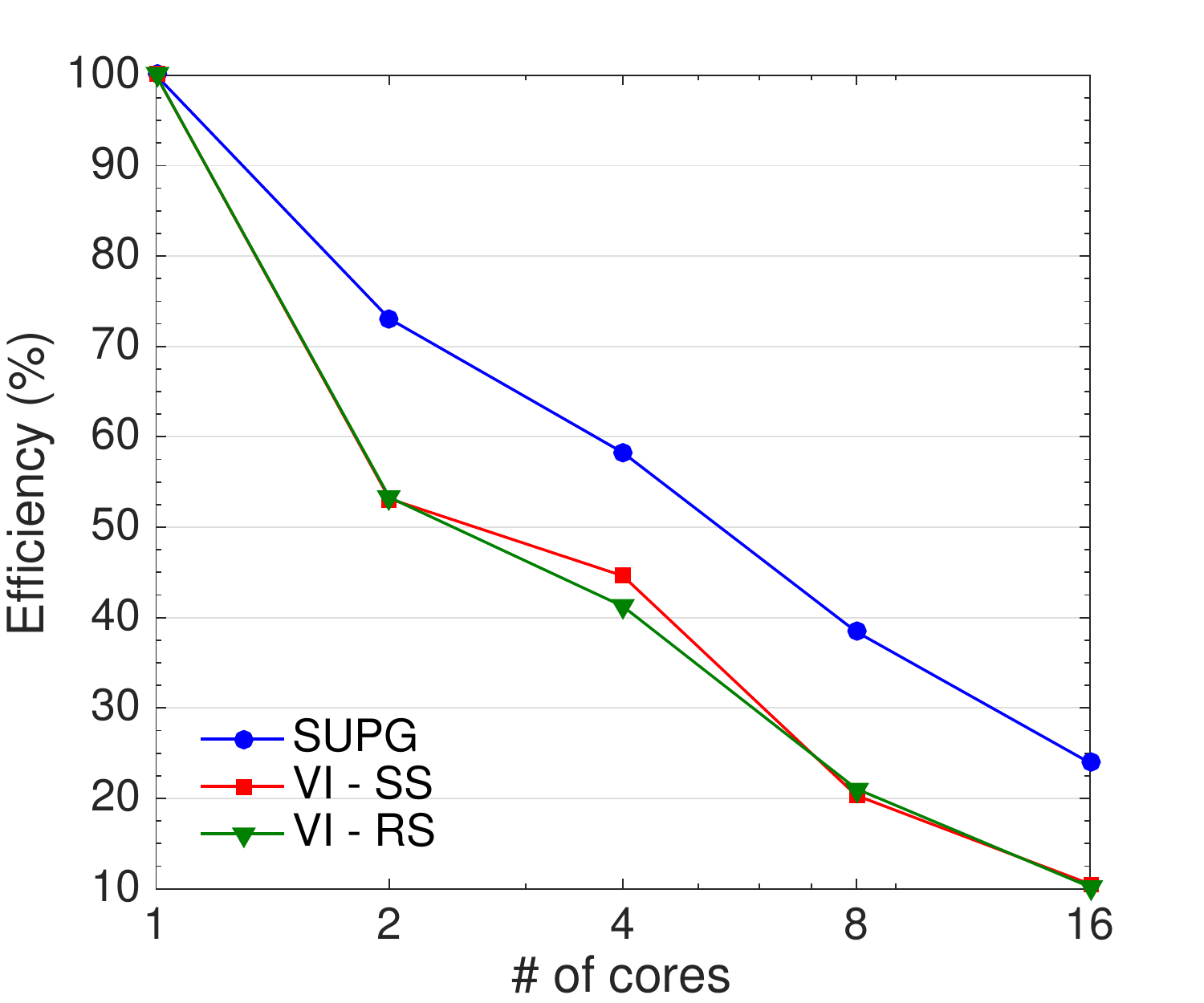}} \\
\subfloat[DG - solve time]{\includegraphics[scale=0.5]{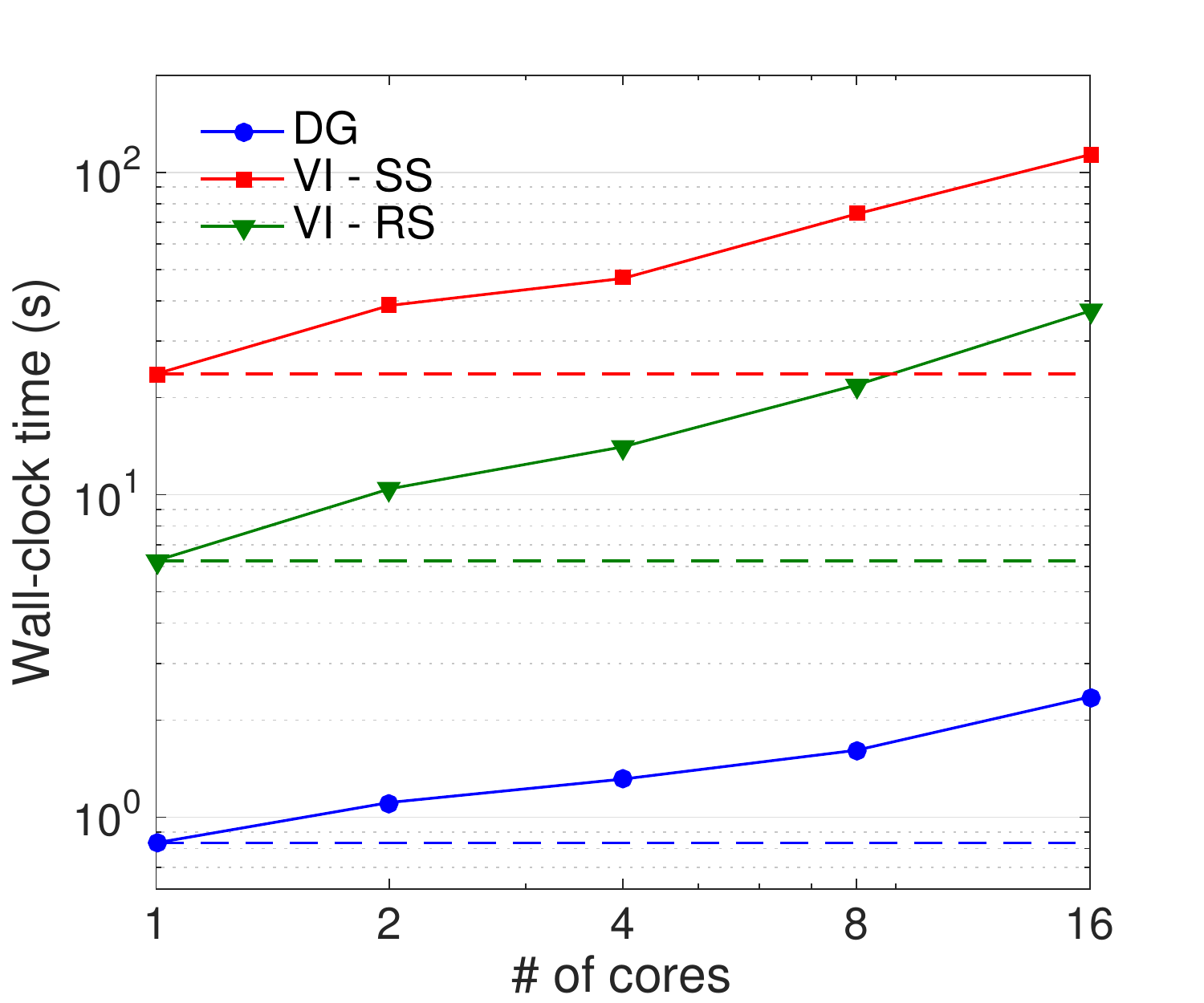}}
\subfloat[DG - parallel efficiency]{\includegraphics[scale=0.5]{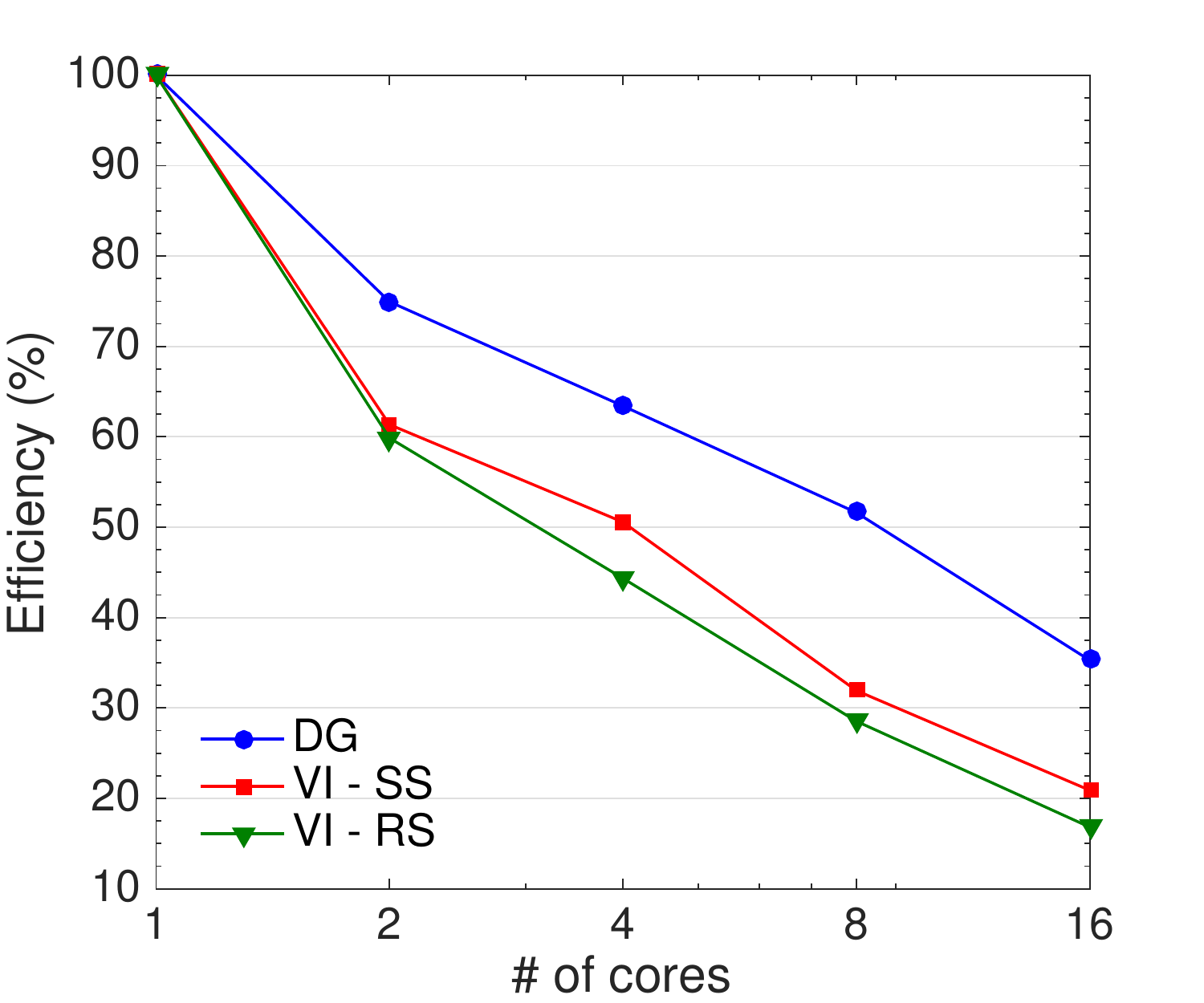}}
\caption{3D advection-diffusion: weak-scaling plots with approximately 100k degrees-of-freedom per core and the corresponding parallel efficiencies.}
\label{Fig:3D_AD_weak}
\end{figure}
\begin{figure}[t]
\centering
\subfloat[SUPG - solve time]{\includegraphics[scale=0.5]{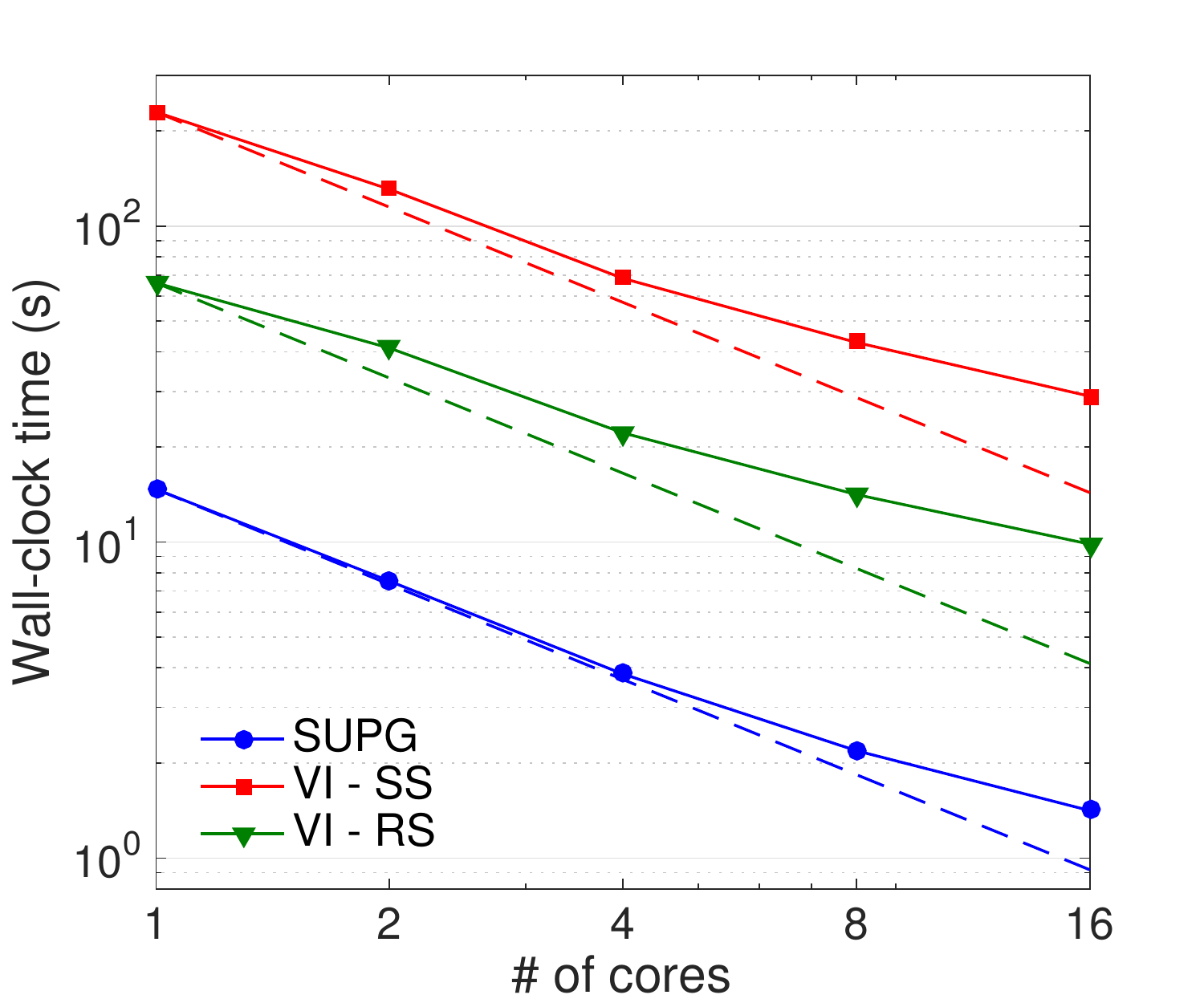}}
\subfloat[SUPG - parallel efficiency]{\includegraphics[scale=0.5]{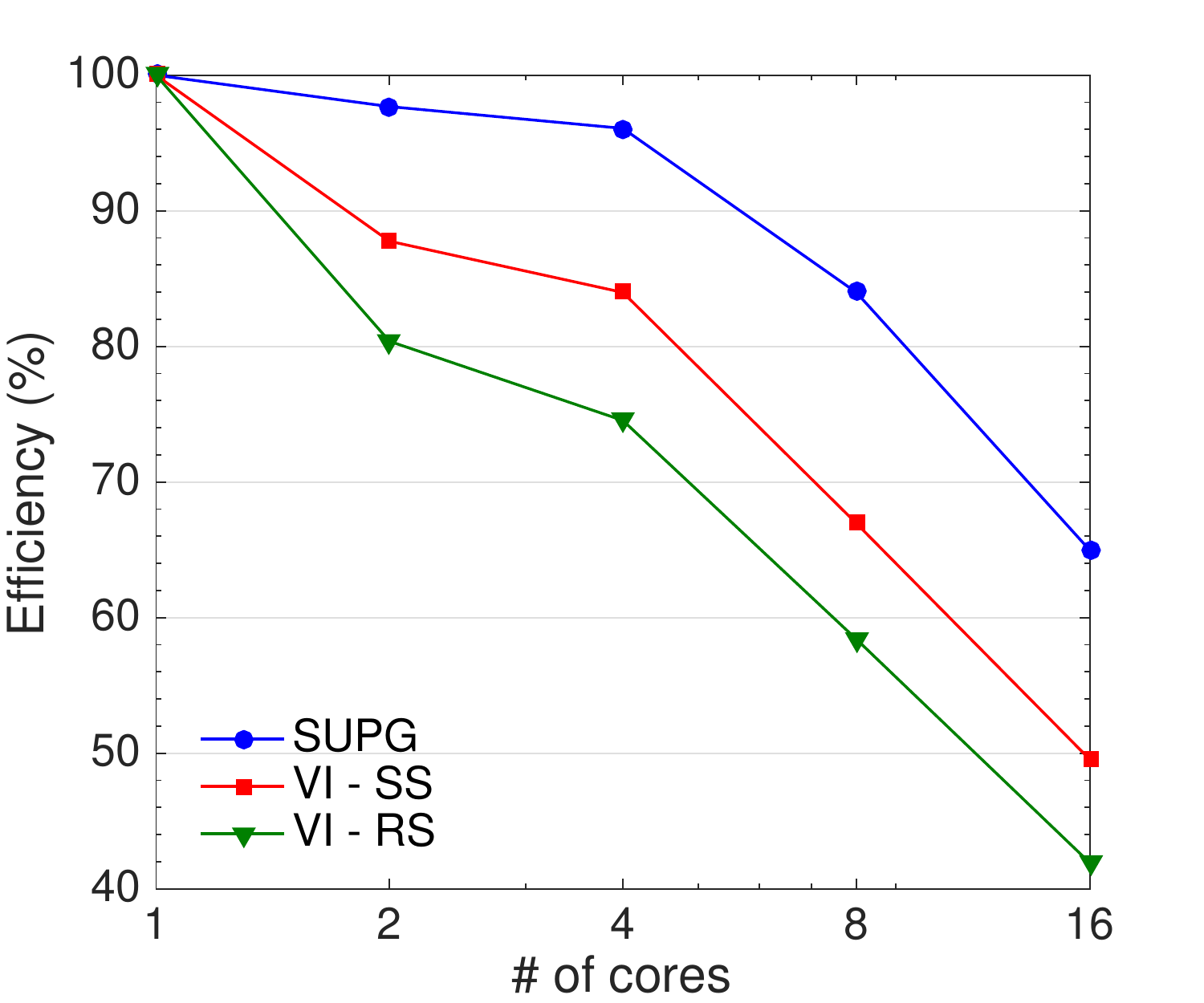}} \\
\subfloat[DG - solve time]{\includegraphics[scale=0.5]{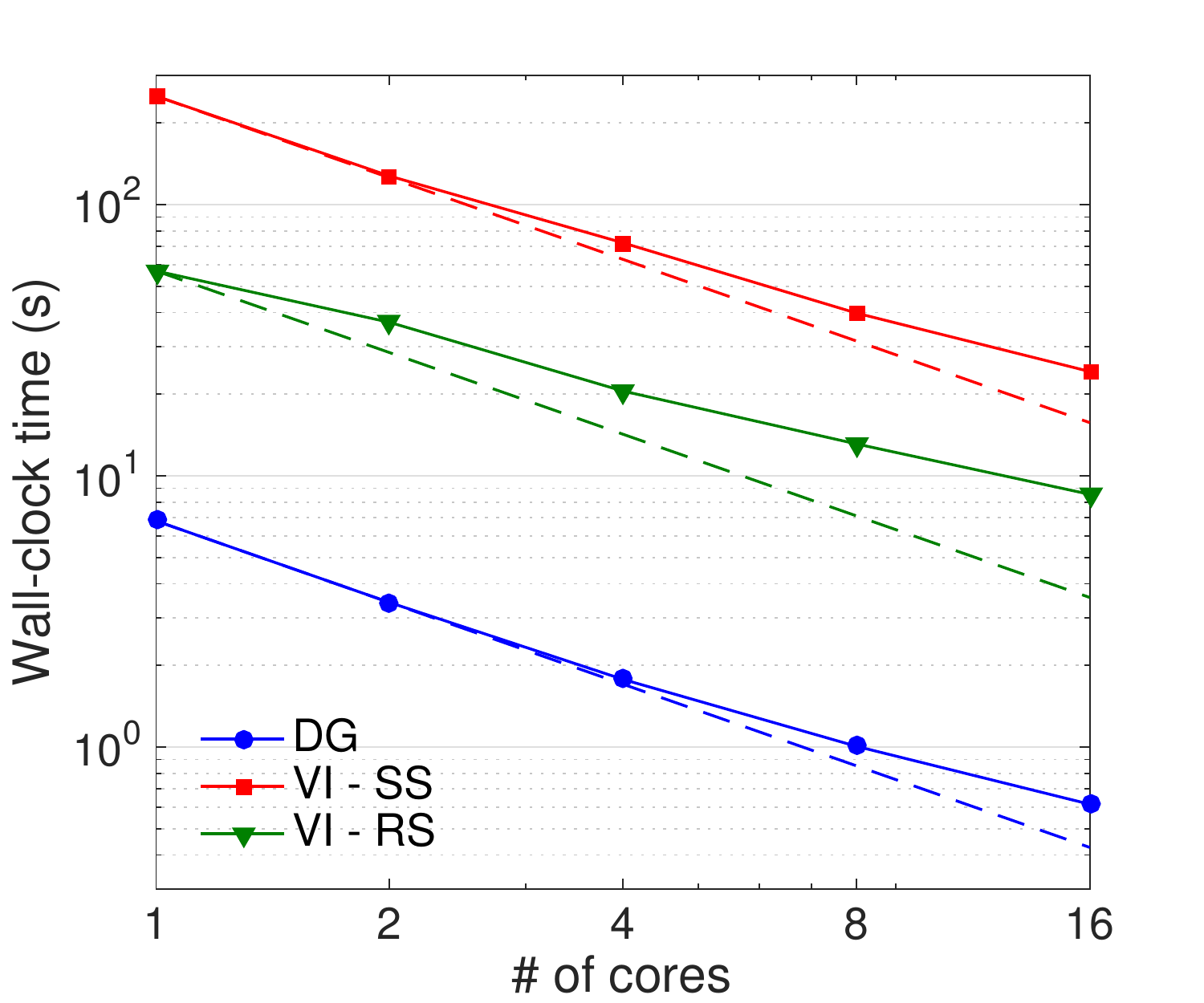}}
\subfloat[DG - parallel efficiency]{\includegraphics[scale=0.5]{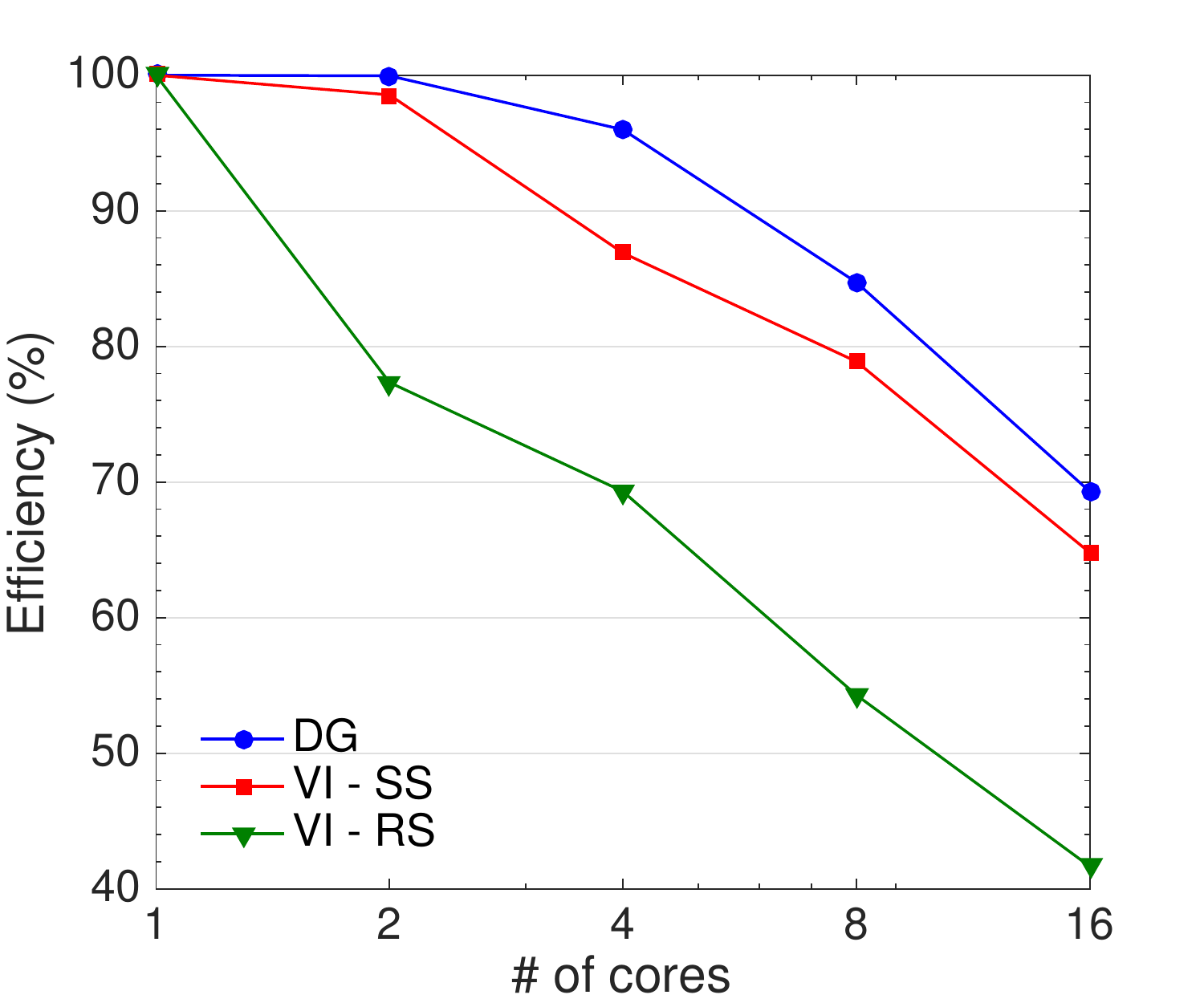}}
\caption{3D advection-diffusion: strong-scaling plots for approximately 500k degrees of freedom ($h$-size = 1/80 and 1/40 for GAL and DG respectively) and the corresponding parallel efficiencies.}
\label{Fig:3D_AD_strong}
\end{figure}

\section{EXTENSION TO TRANSIENT ANALYSIS AND COUPLED PROBLEMS}
\label{Sec:S6_VI_TR}
We now illustrate that the proposed framework, which
is based on variational inequalities, can be extended 
to perform a transient analysis. The resulting governing
equations will then be \emph{parabolic} variational 
inequalities. This extension will be illustrated by 
considering the displacement of miscible fluids in 
porous media wherein a fluid displaces a fluid 
with higher viscosity \citep{stalkup1983miscible}.  
Some of the applications of miscible displacement include 
oil recovery and carbon-dioxide sequestration 
\citep{chen1998miscible1,chen1998miscible2}. 
The phenomenon is commonly modeled using coupled flow and 
transport equations, which will be presented below. In this 
section, we will also show how negative concentrations 
can have serious ramifications when simulating non-linear 
transport phenomenon like the displacement of miscible fluids. 

\begin{figure}[t]
\centering
\subfloat[Problem description]{\includegraphics[scale=0.7]{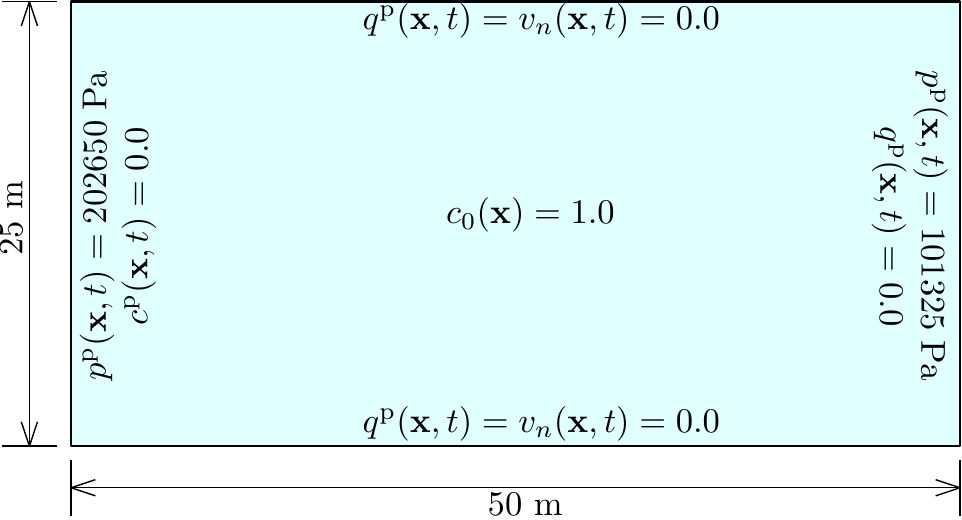}}
\subfloat[Log scale permeability field (m$^2$)]{\includegraphics[scale=0.45]{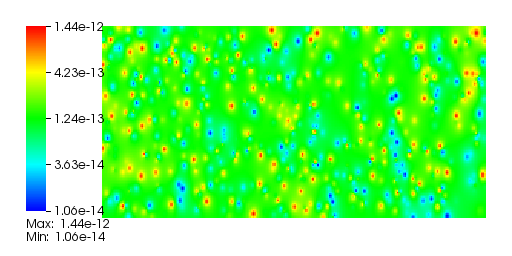}}
\caption{2D miscible displacement: Pictorial description of the boundary value problems for the coupled Darcy and advection-diffusion equations and the corresponding random permeability.}
\label{Fig:2D_transient_description}
\end{figure}
\subsection{Governing equations and temporal discretization}
We denote the time by $t \in [0,\mathcal{T}]$,
where $\mathcal{T}$ denotes the length of the time
interval of interest. For the Darcy equation, 
the boundary is divided into two parts:~$\Gamma^{p}$
and $\Gamma^{v}$, such that $\Gamma^{p} \cup
\Gamma^{v} = \partial \Omega$ and $\Gamma^{p}
\cap \Gamma^{v} = \emptyset$. $\Gamma^{p}$
and $\Gamma^{v}$ denote the parts of the boundary on 
which pressure and velocity boundary conditions are enforced
respectively. We shall denote time-dependent pressure by $p(\mathbf{x},t)$,
time-dependent velocity by $\mathbf{v}(\mathbf{x},t)$, 
concentration-dependent viscosity by $\mu(c(\mathbf{x},t))$,
permeability by $k(\mathbf{x})$, density by $\rho$, time-dependent 
specific body force by $\mathbf{b}(\mathbf{x})$,
time-dependent concentration by $c(\mathbf{x},t)$,
prescribed initial concentration by $c_0(\mathbf{x})$, 
time dependent volumetric source by $f(\mathbf{x},t)$, and
time-dependent diffusivity tensor by $\mathbf{D}(\mathbf{x},t)$. 
For the boundary conditions, the prescribed time-dependent 
concentration is denoted by $c^{\mathrm{p}}(\mathbf{x},t)$, 
prescribed time-dependent pressure by $p^{\mathrm{p}}(\mathbf{x},t)$,
prescribed time-dependent normal component of the velocity by 
$v_{n}(\mathbf{x},t)$, and prescribed time-dependent
flux by $q^{\mathrm{p}}(\mathbf{x},t)$.
The initial boundary value problem for the coupled flow and
advective-diffusive equations can be written as follows:
\begin{subequations}
  \label{Eqn:S6_GE_transient}
  \begin{alignat}{2}
    \label{Eqn:S6_GE_momentum}
    &\frac{\mu(c(\mathbf{x},t))}{k(\mathbf{x})}\mathbf{v}(\mathbf{x},t) + 
    \mathrm{grad}[p(\mathrm{x},t)] = \rho \mathbf{b}(\mathbf{x},t) 
    &&\quad \mathrm{in} \; \Omega \times (0,\mathcal{T})\\
    &\mathrm{div}[\mathbf{v}(\mathrm{x},t)] = 0
    &&\quad \mathrm{in} \; \Omega \times (0,\mathcal{T})\\
    &p(\mathbf{x},t) = p^{\mathrm{p}}(\mathbf{x},t)
    &&\quad \mathrm{on} \; \Gamma^{p} \times (0,\mathcal{T})\\
    \label{Eqn:S6_GE_velocity_bc}
    &\mathbf{v}(\mathbf{x},t)\cdot\widehat{\mathbf{n}} = v_{n}(\mathbf{x},t)
    &&\quad \mathrm{on} \; \Gamma^{v} \times (0,\mathcal{T})\\
    \label{Eqn:S6_GE_AD}
    &\frac{\partial c(\mathbf{x},t)}{\partial t} + 
    \mathbf{v}(\mathbf{x},t)\cdot\mathrm{grad}[c(\mathbf{x},t)] - 
    \mathrm{div}[\mathbf{D}(\mathbf{x},t)\mathrm{grad}[c(\mathbf{x},t)]] = 
    f(\mathbf{x},t)
    &&\quad \mathrm{in} \; \Omega \times (0,\mathcal{T})\\
    &c(\mathbf{x},t) = c^{\mathrm{p}}(\mathbf{x},t)
    &&\quad \mathrm{on} \; \Gamma^{\mathrm{D}} \times (0,\mathcal{T})\\
    &\widehat{\mathbf{n}}(\mathbf{x}) \cdot \left(\mathbf{v}(\mathbf{x},t)
    c(\mathbf{x},t) - \mathbf{D}(\mathbf{x},t)\mathrm{grad}[c(\mathbf{x},t)]\right) = 
    q^{\mathrm{p}}(\mathbf{x},t)
    &&\quad \mathrm{on} \; \Gamma^{\mathrm{N}}_{\mbox{\small inflow}} \times (0,\mathcal{T})\\
    -&\widehat{\mathbf{n}}(\mathbf{x}) \cdot \mathbf{D}(\mathbf{x},t)\mathrm{grad}[c(\mathbf{x},t)] = 
    q^{\mathrm{p}}(\mathbf{x},t)
    &&\quad \mathrm{on} \; \Gamma^{\mathrm{N}}_{\mbox{\small outflow}} \times (0,\mathcal{T}) \\
    \label{Eqn:S6_GE_initial_bc}
    &c(\mathbf{x},0) = c_{0}(\mathbf{x}) && \quad \mathrm{in} \; \Omega
  \end{alignat}
\end{subequations}
where equations \eqref{Eqn:S6_GE_momentum} through \eqref{Eqn:S6_GE_velocity_bc} 
represent the Darcy equation, and equations \eqref{Eqn:S6_GE_AD} through \eqref{Eqn:S6_GE_initial_bc}
represent the transient advection-diffusion equation. To complete the 
coupled problem, the viscosity is assumed to depend exponentially on
concentration:
\begin{subequations}
\begin{align}
  \label{Eqn:S6_viscosity1}
  &\mu(c(\mathbf{x},t)) = \mu_0\exp\big[R_cc(\mathbf{x},t)\big]\\
  \label{Eqn:S6_viscosity2}
  &\mu(c(\mathbf{x},t)) = \mu_0\exp\big[R_c(1-c(\mathbf{x},t))\big]
\end{align}
\end{subequations}
where $\mu_0$ is the base viscosity of the less viscous fluid and 
$R_c$ is the log-mobility ratio in an isothermal miscible displacement.

\begin{table}[b]
\centering
\caption{2D miscible displacement: problem parameters}
\label{Tab:2D_miscible}
\begin{tabular}{ll}
\hline
Parameter & Value \\ \hline
$\mathbf{b}(\boldsymbol{x})$ & $\left\{0,0\right\}$ m/s$^2$\\
$\mu(c(\mathbf{x},t))$ & \eqref{Eqn:S6_viscosity1} \\
$\mu_0$ & $3.95\cdot10^{-5}$ Pa s \\
$R_c$ & 3 \\
$k(\mathbf{x})$ & varies \\
$f(\mathbf{x},t)$ & 0 \\
$\rho$ & 479 kg/m$^3$ \\
$\alpha_L$ & 10$^{-1}$ m \\
$\alpha_T$ & 10$^{-5}$ m\\ 
$\alpha_D$ & 10$^{-9}$ m$^2$/s \\
Number of elements & 31,250\\
Darcy degrees-of-freedom & 94,125\\
Advection-diffusion degrees-of-freedom & 125,000\\
\hline
\end{tabular}
\end{table}
\begin{figure}[t]
\centering
\subfloat[$\boldsymbol{c}_{\mathrm{DG}}$ at $t = 0.5$ years]
{\includegraphics[scale=0.45]{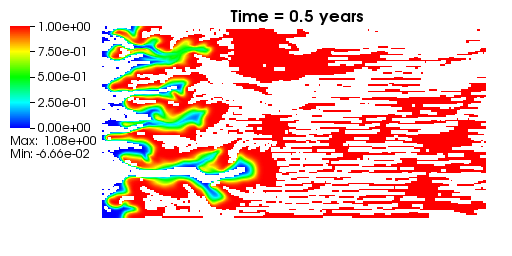}}
\subfloat[$\boldsymbol{c}_{\mathrm{DG}}$ at $t = 1.0$ years]
{\includegraphics[scale=0.45]{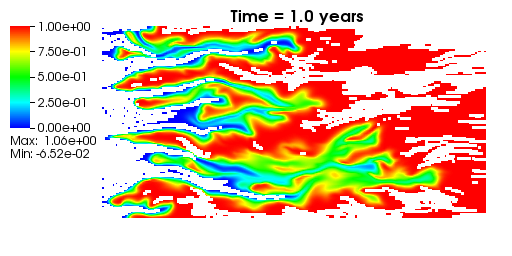}}\\
\subfloat[$\boldsymbol{c}_{\mathrm{RS}}$ at $t = 0.5$ years]
{\includegraphics[scale=0.45]{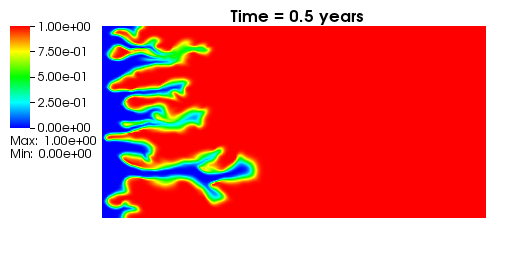}}
\subfloat[$\boldsymbol{c}_{\mathrm{RS}}$ at $t = 1.0$ years]
{\includegraphics[scale=0.45]{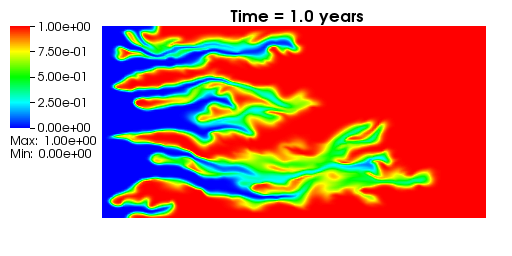}}
\caption{2D miscible displacement: Concentration fields under the DG 
($\boldsymbol{c}_{\mathrm{DG}}$) formulation and VI - RS ($\boldsymbol{c}_{\mathrm{RS}}$) 
method at various time levels. White regions denote violations of the maximum
principle and the non-negative constraint.}
\label{Fig:2D_transient_solution}
\end{figure}
\begin{figure}[t]
\centering
\subfloat[$\boldsymbol{c}_{\mathrm{CLIP}}$ - $\boldsymbol{c}_{\mathrm{RS}}$ at $t= 0.5$ years]
{\includegraphics[scale=0.45]{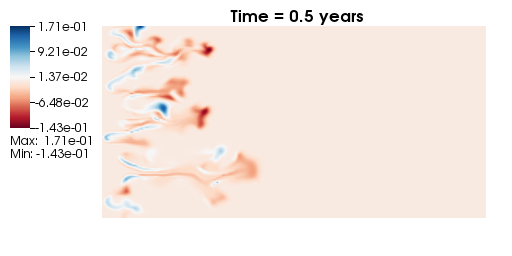}}
\subfloat[$\boldsymbol{c}_{\mathrm{CLIP}}$ - $\boldsymbol{c}_{\mathrm{RS}}$ at $t= 1.0$ years]
{\includegraphics[scale=0.45]{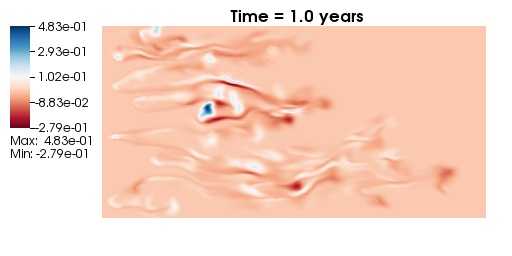}}\\
\subfloat[$\boldsymbol{c}_{\mathrm{DG}}$ - $\boldsymbol{c}_{\mathrm{RS}}$ at $t= 0.5$ years]
{\includegraphics[scale=0.45]{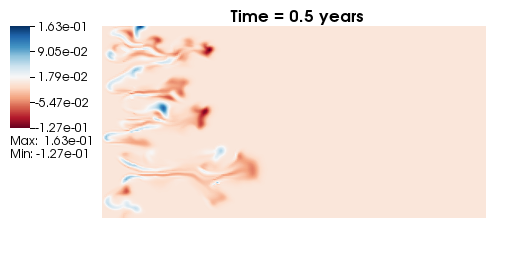}}
\subfloat[$\boldsymbol{c}_{\mathrm{DG}}$ - $\boldsymbol{c}_{\mathrm{RS}}$ at $t= 1.0$ years]
{\includegraphics[scale=0.45]{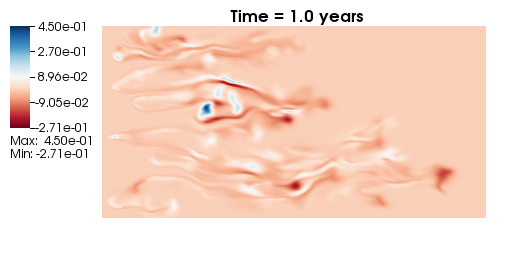}}
\caption{2D miscible displacement: Differences between concentration fields under the DG 
formulation ($\boldsymbol{c}_{\mathrm{DG}}$), VI - RS method ($\boldsymbol{c}_{\mathrm{RS}}$), 
and clipping procedure ($\boldsymbol{c}_{\mathrm{CLIP}}$).}
\label{Fig:2D_transient_difference}
\end{figure}
To solve the transient advection-diffusion equation, we employ the 
method of horizontal lines \citep{grossmann2007numerical}, which 
first discretizes the time derivatives, thereby giving rise to
time-independent equations. The time interval of interest 
is divided into $N$ sub-intervals. That is, 
\begin{align}
  [0,\mathcal{T}] := \bigcup_{n = 0}^{N} [t_n,t_{n+1}]
\end{align}
where $t_n$ denotes the $n$-th time-level. We 
assume that the time-step is uniform, which 
can be written as:
\begin{align}
  \Delta t = t_{n+1} - t_{n}
\end{align}
One can then employ the finite-dimensional 
VI solvers for these resulting equations, which were described earlier in this paper. 
This implies that we will still be solving elliptic VIs of first kind but 
at each time level. This procedure will be illustrated below using the backward Euler method. 
However, a detailed discussion on the effect of time-stepping schemes 
in meeting maximum principles can be found in \citep{Nakshatrala_CiCP_2016}.
For a transient analysis, the proposed framework outlined in 
Section \ref{SS4:proposed_framework} is modified as follows:
\begin{enumerate}[label=\textsf{Step \arabic*}:]
\item Set $t=0.0$, $n=0$, and $\boldsymbol{c}^{(n)} = \boldsymbol{c}_{0}$.
\item Solve Darcy equation:
\begin{enumerate}
\item Compute $\mu(\boldsymbol{c}^{(n)})$.
\item Assemble $\boldsymbol{K}_{vv}$, $\boldsymbol{K}_{vp}$, 
$\boldsymbol{K}_{pv}$, $\boldsymbol{K}_{pp}$, $\boldsymbol{f}_v$ and 
$\boldsymbol{f}_p$.
\item Solve for $\boldsymbol{v}^{(n)}$.
\end{enumerate}
\item Solve advection-diffusion equation:
\begin{enumerate}
\item Compute $\mathbf{D}^{(n)}$.
\item Assemble $\boldsymbol{K}_{c}$ and $\boldsymbol{f}_c$ 
using $\boldsymbol{c}^{(n)}$.
\item Solve for $\boldsymbol{c}_{\mathrm{DG}}^{(n+1)}$.
\item Clip $\boldsymbol{c}_{\mathrm{DG}}^{(n+1)}$ and obtain 
$\boldsymbol{c}^{(n+1)}_{\mathrm{CLIP}}$.
\item Solve the bounded constraint problem for $\boldsymbol{c}^{(n+1)}_{\mathrm{RS}}$
with $\boldsymbol{c}^{(n+1)}_{\mathrm{CLIP}}$ as the initial guess.
\end{enumerate}
\item Set $\boldsymbol{c}^{(n+1)}\longleftarrow\boldsymbol{c}^{(n+1)}_{\mathrm{RS}}$, $t \longleftarrow t + \Delta t$, and $n \longleftarrow n +1$.
\item If $n < N$ go to $\mathsf{Step 2}$.
\end{enumerate}
where $\boldsymbol{K}_{vv}$, $\boldsymbol{K}_{vp}$, 
$\boldsymbol{K}_{pv}$, $\boldsymbol{K}_{pp}$, 
$\boldsymbol{f}_{v}$, and $\boldsymbol{f}_{p}$ 
are the assembled  matrices and vectors for the Darcy equation, 
and $\boldsymbol{K}_{c}$ and $\boldsymbol{f}_{c}$ are for the 
transient advection-diffusion equation. The finite element 
discretization and solution strategy for the steady-state 
Darcy equations can be found in Appendix \ref{A2:darcy}.
\subsection{Numerical results}
\begin{figure}[t]
\centering
\subfloat[Problem description]{\includegraphics[scale=0.44]{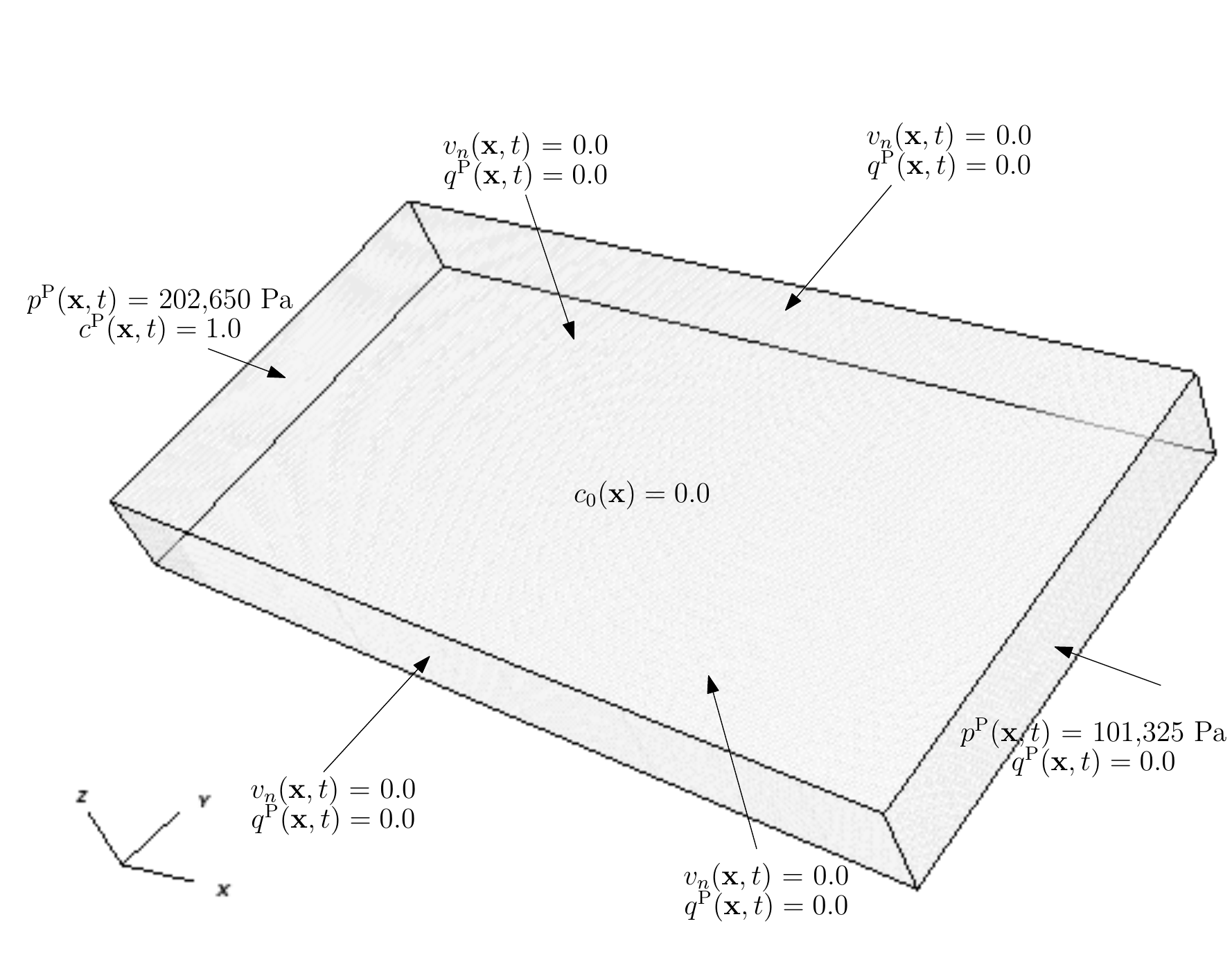}}
\subfloat[Log scale permeability field (m$^2$)]{\includegraphics[scale=0.44]{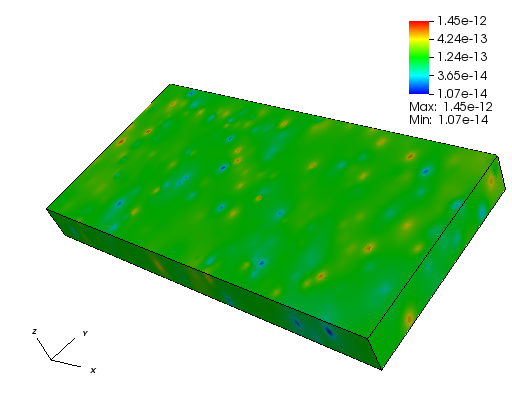}}
\caption{3D miscible displacement: Pictorial description of the boundary value problems (50m$\times$25m$\times$5m domain with 781,250 structured hexahedrons) for the coupled Darcy and advection-diffusion equations and the corresponding random permeability.}
\label{Fig:3D_transient_description}
\end{figure}

Consider a 50m$\times$25m rectangular domain with heterogeneous 
permeability, as shown in Figure \ref{Fig:2D_transient_description}.
The flow will be modeled using Darcy equations, in which the 
viscosity depends on the concentration of the attendant chemical 
species, and the transport of the chemical species will be modeled 
using advection-diffusion equations. 
For the flow subproblem, we prescribe the pressure boundary conditions on 
the left and right sides of the domain and no flow boundary 
conditions on the top and bottom. 
For the transport subproblem, an initial concentration of unity 
is prescribed everywhere in the domain, and a Dirichlet boundary condition
of zero along the left side and zero flux boundary conditions 
on the remaining sides are prescribed. 
A time-step $\Delta t = 1$ day is used 
to simulate the miscible displacement over a time interval 
$\mathcal{T}=1$ year. All other problem parameters and 
material properties can be found in Table \ref{Tab:2D_miscible}. 
Figure \ref{Fig:2D_transient_solution} depicts the concentration profiles
under the DG  and VI - RS methods at $t = 0.5$ and $t = 1.0$ years. 
It can be seen that violations in the maximum principles occur even under 
the coupled flow and transport computational framework. Furthermore, the
violations do not go away as the simulation progresses in time. As it 
may be difficult to distinguish between the VI - RS and DG or 
clipping procedures by directly plotting the solutions, we show 
the differences in the solutions in Figure \ref{Fig:2D_transient_difference}. It can be seen that there 
are significant discrepancies in the development of the plumes.

\begin{table}[b]
\centering
\caption{3D miscible displacement: problem parameters}
\label{Tab:3D_miscible}
\begin{tabular}{ll}
\hline
Parameter & Value \\ \hline
$\mathbf{b}(\boldsymbol{x})$ & $\left\{0,0,-9.81\right\}$ m/s$^2$\\
$\mu(c(\mathbf{x},t))$ & \eqref{Eqn:S6_viscosity2} \\
$\mu_0$ & $3.95\cdot10^{-5}$ Pa s \\
$R_c$ & 3 \\
$k(\mathbf{x})$ & varies \\
$f(\mathbf{x},t)$ & 0 \\
$\rho$ & 479 kg/m$^3$ \\
$\alpha_L$ & 10$^{-1}$ m \\
$\alpha_T$ & 10$^{-5}$ m\\ 
$\alpha_D$ & 10$^{-9}$ m$^2$/s \\
Number of elements & 781,250\\
Darcy degrees-of-freedom & 3,165,625\\
Advection-diffusion degrees-of-freedom & 6,250,000\\
\hline
\end{tabular}
\end{table}
\begin{figure}
\centering
\subfloat[$\boldsymbol{c}_{\mathrm{DG}}$ at $t = 0.4$ years]
{\includegraphics[scale=0.42]{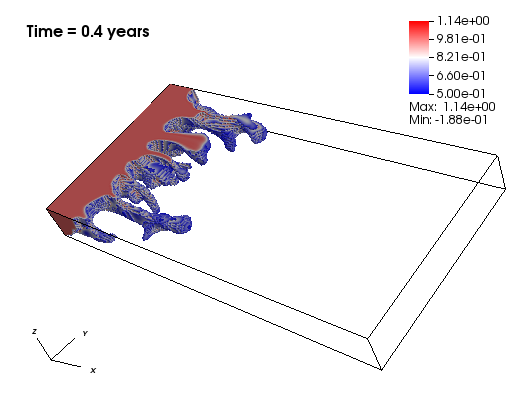}
\label{Fig:3D_transient_conc_DG1}}
\subfloat[$\boldsymbol{c}_{\mathrm{DG}}$ at $t = 1.0$ years]
{\includegraphics[scale=0.42]{Figures/3D_transient_orig_5.png}
\label{Fig:3D_transient_conc_DG2}}\\
\subfloat[$\boldsymbol{c}_{\mathrm{DG}}$ at $t = 0.4$ years]
{\includegraphics[scale=0.42]{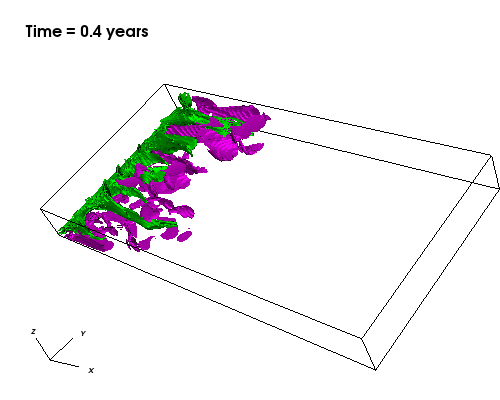}
\label{Fig:3D_transient_violations_DG1}}
\subfloat[$\boldsymbol{c}_{\mathrm{DG}}$ at $t = 1.0$ years]
{\includegraphics[scale=0.42]{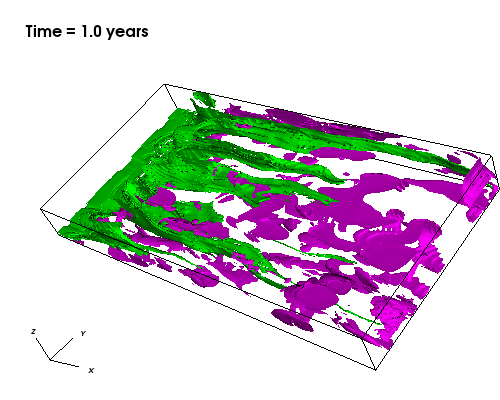}
\label{Fig:3D_transient_violations_DG2}}\\
\subfloat[$\boldsymbol{c}_{\mathrm{RS}}$ at $t = 0.4$ years]
{\includegraphics[scale=0.42]{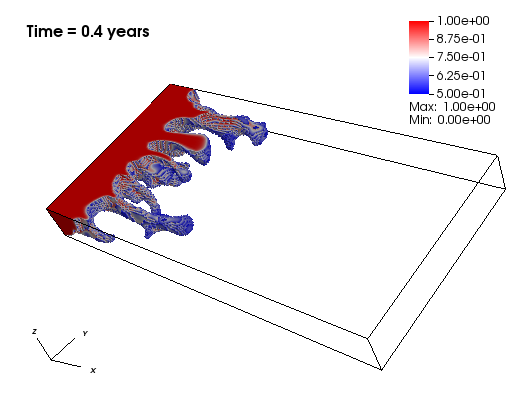}
\label{Fig:3D_transient_conc_RS1}}
\subfloat[$\boldsymbol{c}_{\mathrm{RS}}$ at $t = 1.0$ years]
{\includegraphics[scale=0.42]{Figures/3D_transient_virs_5.png}
\label{Fig:3D_transient_conc_RS2}}
\caption{3D miscible displacement: Top (DG) and bottom (VI) show regions with 
concentrations above 0.5. Middle figures show regions with concentrations above 
1.0 (green) and below 0.0 (purple) (see online version for color figures.)}
\label{Fig:3D_transient_concentrations}
\end{figure}
\begin{figure}[t]
\centering
\subfloat{\includegraphics[scale=0.45]{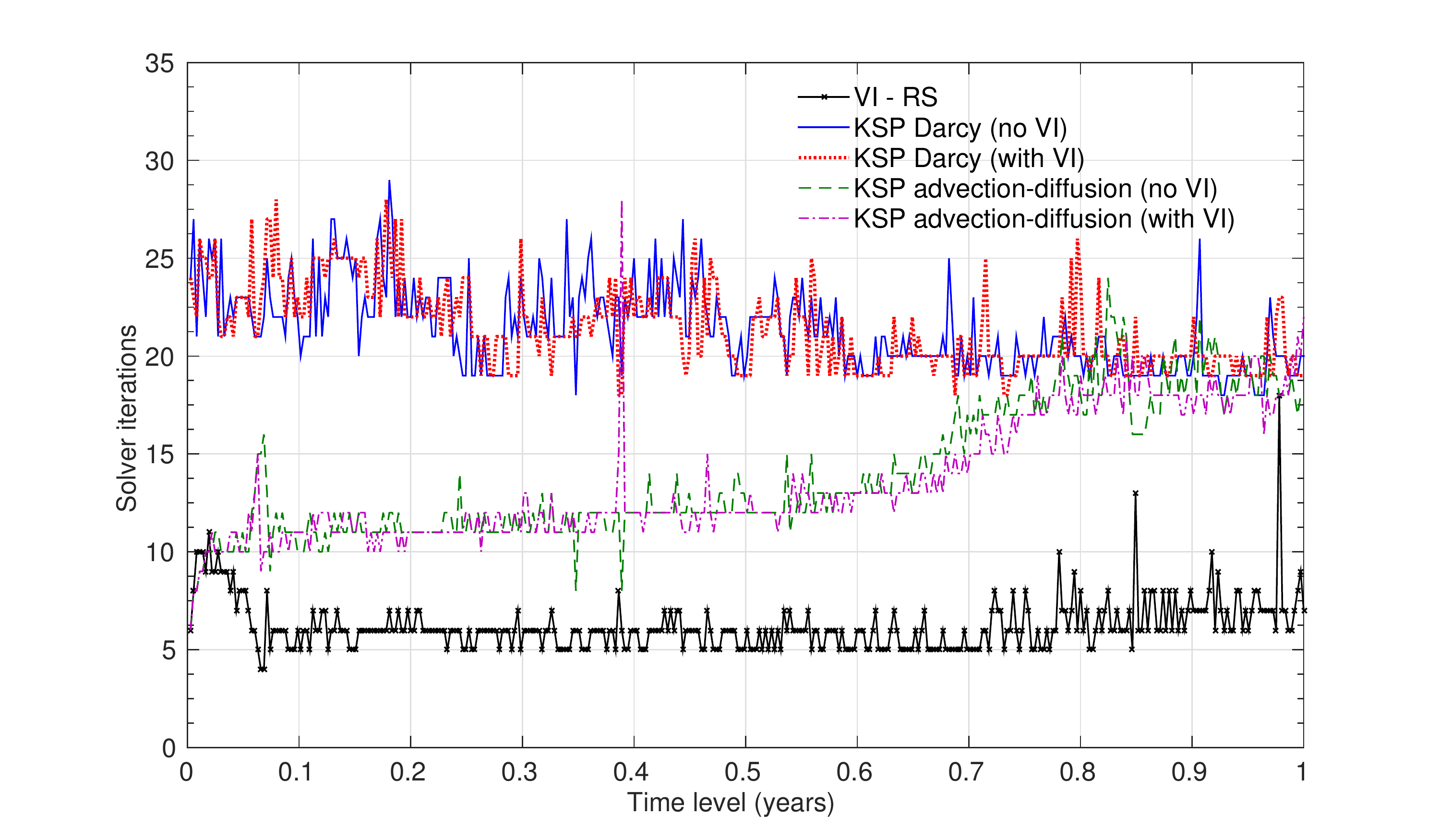}}
\caption{3D miscible displacement: Total number of KSP iterations at each time level for 
the Darcy and advection-diffusion equation with and without VI. Also shown is the 
total number of VI iterations at each time level.}
\label{Fig:3D_transient_iterations}
\end{figure}
\begin{figure}[t]
\centering
\subfloat{\includegraphics[scale=0.45]{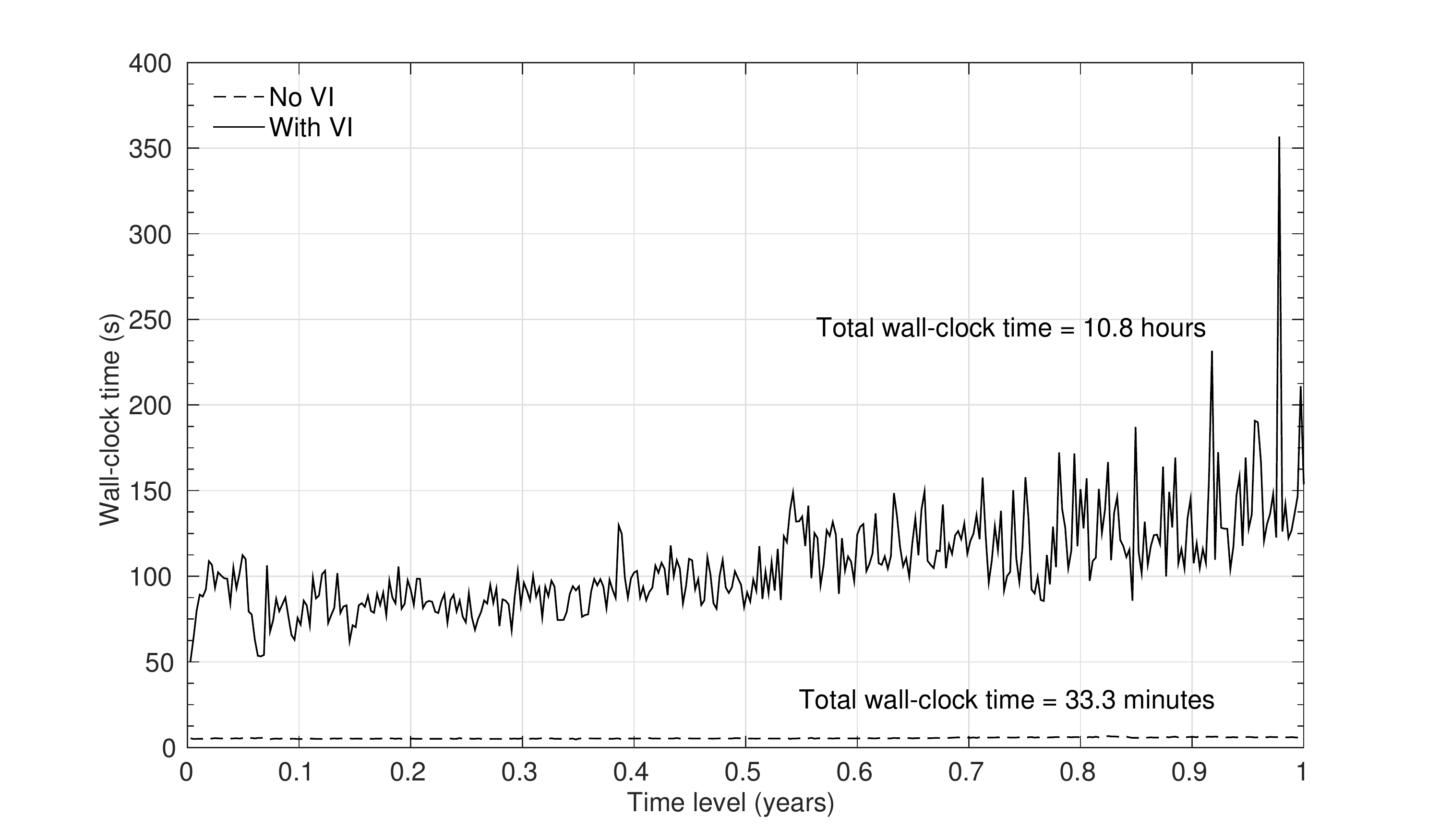}}
\caption{3D miscible displacement: The wall-clock time at each time level with and without 
VI across 40 cores. Also shown is the sum of the wall-clock time across all time levels.}
\label{Fig:3D_transient_time}
\end{figure}
To understand the performance of our VI-based solvers for 
large-scale versions of this problem, we now consider a 50m$\times$25m$\times$5m 
box domain with heterogeneous permeability as shown in Figure 
\ref{Fig:3D_transient_description}. Table \ref{Tab:3D_miscible} 
lists all the necessary problem parameters, and the same time-step and 
time interval from the previous problem is used. This problem is 
now solved in parallel across 40 MPI processes, and Figure 
\ref{Fig:3D_transient_concentrations} depicts the numerical results
under the DG formulation without VI - RS and DG formulation with VI - RS.
The exact regions where violations in the maximum principle and the non-negative
constraint occur are shown in Figures \ref{Fig:3D_transient_violations_DG1} and
\ref{Fig:3D_transient_violations_DG2}. First, we note that this proposed framework 
can successfully eliminate the violations that occur in a large-scale miscible 
displacement simulation. We also note that the development and 
displacement of the plumes is significant affected by whether VI - RS is applied or not; 
the differences between Figures \ref{Fig:3D_transient_conc_DG2} and \ref{Fig:3D_transient_conc_RS2}
are quite evident unlike the 2D example. Next, we note from Figure \ref{Fig:3D_transient_iterations} 
that enforcing the bounded constraints under the VI - RS method will not drastically 
increase the total number of KSP iterations needed for either the Darcy or advection-diffusion 
equations. However, the wall-clock time shown in Figure \ref{Fig:3D_transient_time} indicates 
that the VI - RS method is very expensive. The oscillatory behavior of both the solver 
iterations and wall-clock time at each time level is largely attributed to the heterogeneous 
nature of the problem as well as the number of maximum principle violating degrees-of-freedom 
that naturally arise out of the DG formulation. Although VI-based solvers like VI - RS 
can enforce maximum principles and the non-negative constraint, we have observed that applying 
such methodologies can make the overall advection-diffusion finite element simulation up 
to 20 times as expensive even in a parallel environment.

Before we draw any further conclusions of this paper, we acknowledge 
that we did not perform a numerical 
$h$-convergence study. This is due to, 
despite our best efforts, failure to find an  
advection-diffusion boundary value problem 
that considers anisotropy, has an analytical solution, 
and violates discrete maximum principles. We, therefore,
illustrated the performance of the proposed computational
framework through other means, as presented in the previous 
sections.

\section{CONCLUDING REMARKS}
\label{Sec:S7_VI_CR}
We presented a robust computational framework based on VIs 
for diffusion and advection-diffusion equations 
that satisfies the discrete maximum principles and the 
non-negative constraint. The framework is applicable to 
\emph{large-scale} and transient problems, and can be solved 
in a parallel setting. The main contributions of this paper and the 
salient features of the proposed formulation can be summarized as follows. 
\begin{enumerate}[label=(C\arabic*)]
\item Realizing and posing the advection-diffusion problem as 
a variational inequality (VI) to meet the discrete maximum
principles and the non-negative constraint.
\item For large-scale problems, we have demonstrated that 
QP solvers, which is a special 
case of VIs, are just as good as VI solvers for symmetric
and positive-definite problems like the diffusion equation. 
On the other hand, the proposed VI-based framework can also 
handle non-self-adjoint operators.
\item Unlike the non-negative framework proposed in \citep{Mudunuru_JCP_2016},
which is based on a mixed least-squares WF, the proposed 
framework can utilize any finite element formulation 
including single-field formulations, and these formulations need not 
result in symmetric and positive definite coefficient matrices.
\item The proposed framework allows one to leverage on existing 
state-of-the-art computational frameworks for solving VIs. In particular, 
the Firedrake project, which provides access to parallel solvers in PETSc 
and TAO libraries, can serve as a suitable platform for implementing the proposed 
framework, as illustrated in this paper.
\item This framework is suitable for many important
applications like miscible displacement, subsurface remediation, 
and transport of radionuclides. In these applications, one encounters not only 
highly anisotropic medium properties but also highly non-linear phenomena 
due to aqueous complexation and kinetic reactions.
\end{enumerate}
A logical next-step of our work is to extend
this computational framework to advective-diffusive-reactive 
equations.

\section*{ACKNOWLEDGMENTS}
The authors acknowledge the financial support
from the Department of Energy Office of Science
Graduate Student Research (SCGSR) award. The authors 
also acknowledge the use of the Opuntia Cluster from 
the Center of Advanced Computing and Data Systems (CACDS) 
at the University of Houston to carry out the research 
presented here. The opinions expressed in this paper 
are those of the authors and do not necessarily 
reflect that of the sponsors.

\appendix
\section{Firedrake Project}
\label{A1:code}
The Firedrake project \citep{Rathgeber_ACM_2015,Luporini_ACM_2016,
Luporini_ACMACO_2015} is a python-based library that provides an automated 
system for the solution of partial differential equations using the finite 
element method. Like the FEniCS Project \citep{logg2012automated,alnaes2015fenics}, 
it is also built upon several scientific packages and can employ 
parallel computing tools across either CPUs or GPUs to obtain the solution. 
Two of its main leveraged components are the Unified Form Language (UFL) 
\citep{Alnaes_UFL_2014}, used to declare finite element discretizations 
of variational forms, and the PyOP2 system \citep{Rathgeber_SC_2012,
Markall_ISC_2013}, used for the parallel assembly of the finite element 
discrete formulations. The main difference between the FEniCS and Firedrake 
project is that all data structures, linear solvers, non-linear solvers, 
and optimization solvers for the latter are provided entirely by the PETSc and TAO 
libraries. The mesh can either be generated internally or imported from third party 
mesh generators like GMSH \citep{Geuzaine_IJNME_2009}, and the parallel 
partitioning of the mesh is achieved through packages like Chaco 
\citep{chaco_SC_1995}. Another important feature utilized in this paper is
extruded meshes. The internal mesh algorithm generates and partitions a 
2D quadrilateral base mesh and is extruded into a hexahedron mesh using 
the algorithms listed in \citep{Homolya_SIAMJOMP_2016,
McRae_SIAMJOMP_2014}. To facilitate the readers to be able to reproduce
the results presented in this paper, we provided some useful
Firedrake-related files below.
\lstset{language=Python}
\begin{lstlisting}[caption=2D GAL diffusion example,label=Code:ex1,frame=single]
# Load firedrake environment
from firedrake import *

# Create mesh
mesh = UnitSquareMesh(200,200,quadrilateral=True)

# Function spaces
P = FunctionSpace(mesh, 'Lagrange', 1)
Q = TensorFunctionSpace(mesh, 'Lagrange', 1)
u = TrialFunction(P)
v = TestFunction(P)

# Bounds
cmin = 0.0
cmax = PETSc.INFINITY
lb = Function(P)
lb.assign(cmin)
ub = Function(P)
ub.assign(cmax)

# Diffusion tensor
eps = 1e-4
D = interpolate(Expression(('eps*x[0]*x[0]+x[1]*x[1]','-(1-eps)*x[0]*x[1]',
    '-(1-eps)*x[0]*x[1]','eps*x[1]*x[1]+x[0]*x[0]'),eps=eps),Q)

# Forcing function
lb = 0.375
ub = 0.625
f = interpolate(Expression(('x[0] >= lb && x[0]<= ub && x[1] >= lb &&
    x[1] <= ub ? 1.0 : 0.0'),lb=lb,ub=ub),P)

# GAL formulation
a = dot(D * grad(u), grad(v)) * dx
L = v * f * dx

# Homogeneous boundary conditions
bcs = DirichletBC(P, Constant(0.0), (1,2,3,4))

# Assemble coefficient matrix
A = assemble(a, bcs=bcs)

# Assemble forcing vector
tmp = Function(P)
bcs.apply(tmp)
b = Function(P)
bfree = assemble(L)
rhs_bcs = assemble(action(a,tmp))
b.assign(bfree - rhs_bcs)
bcs.apply(b)  

# Create PETSc solver
initial_solver = PETSc.KSP().create(PETSc.COMM_WORLD)
initial_solver.setOptionsPrefix("initial_")
initial_solver.setOperators(A.M.handle)
initial_solver.setFromOptions()
initial_solver.setUp()

# Solve problem
solution = Function(P)
with b.dat.vec_ro as b_vec, solution.dat.vec as sol_vec:
  initial_solver.solve(b_vec,sol_vec)
\end{lstlisting}
\begin{lstlisting}[caption=2D SUPG advection-diffusion example,label=Code:ex2,frame=single]
# Load firedrake environment
from firedrake import *

# Load GMSH file
mesh = Mesh('square_hole.msh')

# Function spaces
P = FunctionSpace(mesh, 'Lagrange', 1)
V = VectorFunctionSpace(mesh, 'Lagrange', 1)
u = TrialFunction(P)
v = TestFunction(P)

# Bounds
cmin = 0.0
cmax = 1.0
lb = Function(P)
lb.assign(cmin)
ub = Function(P)
ub.assign(cmax)

# Velocity field
velocity = interpolate(Expression(('cos(2*pi*x[1]*x[1])',
    'sin(2*pi*x[0])+cos(2*pi*x[0]*x[0])')),V)

# Diffusion tensor
alphaT = Constant(1e-5)
alphaL = Constant(1e-1)
alphaD = Constant(1e-9)
normv = sqrt(dot(velocity,velocity))
Id = Identity(mesh.geometric_dimension())
D = (alphaD + alphaT*normv)*Id + 
    (alphaL - alphaT)*outer(velocity,velocity)/normv

# Forcing function
f = Constant(0.0)

# SUPG weak form
h = CellSize(mesh)
Pe = h/(2*normv)*dot(velocity,grad(v))
ar = Pe*(dot(velocity,grad(u)) - div(D*grad(u)))*dx
a = ar + v*dot(velocity,grad(u))*dx + dot(grad(v),D*grad(u))*dx
a = dot(D*grad(u), grad(v))*dx
L = (v + Pe)*f*dx

# Boundary conditions
bc1 = DirichletBC(P, Constant(0.0), (12,13,14,15)) # Outer square
bc2 = DirichletBC(P, Constant(1.0), (16,17,18,19)) # Inner square
bcs = [bc1,bcs2]

# Homogeneous boundary conditions
bcs = DirichletBC(P, Constant(0.0), (1,2,3,4))

# Assemble coefficient matrix
A = assemble(a, bcs=bcs)

# Assemble forcing vector
tmp = Function(P)
for bc in bcs:
  bc.apply(tmp)
b = Function(P)
bfree = assemble(L)
rhs_bcs = assemble(action(a,tmp))
b.assign(bfree - rhs_bcs)
for bc in bcs:
  bc.apply(b)   

# Create PETSc solver
initial_solver = PETSc.KSP().create(PETSc.COMM_WORLD)
initial_solver.setOptionsPrefix("initial_")
initial_solver.setOperators(A.M.handle)
initial_solver.setFromOptions()
initial_solver.setUp()

# Solve problem
solution = Function(P)
with b.dat.vec_ro as b_vec, solution.dat.vec as sol_vec:
  initial_solver.solve(b_vec,sol_vec)
\end{lstlisting}
\begin{lstlisting}[caption=3D DG advection-diffusion example,label=Code:ex3,frame=single]
# Load firedrake environment
from firedrake import *

# Number of elements in each spatial dimension
seed = 40

# 2D base mesh
mesh = UnitSquareMesh(seed,seed,quadrilateral=True)
# Extruded mesh
mesh = ExtrudedMesh(meshbase,seed)

# Function spaces
P = FunctionSpace(mesh, 'DG', 1)
V = VectorFunctionSpace(mesh, 'DG', 1)
u = TrialFunction(P)
v = TestFunction(P)

# Bounds
cmin = 0.0
cmax = PETSc.INFINITY
lb = Function(P)
lb.assign(cmin)
ub = Function(P)
ub.assign(cmax)

# Velocity field
velocity = interpolate(Expression(('0.3*sin(2*pi*x[2])+cos(3*pi*x[1])',
    '0.65*sin(2*pi*x[0])+0.3*cos(5*pi*x[2])',
    'sin(4*pi*x[1])+0.65*cos(6*pi*x[0])')),V)

# Diffusion tensor
alphaT = Constant(1e-5)
alphaL = Constant(1e-1)
alphaD = Constant(1e-9)
normv = sqrt(dot(velocity,velocity))
Id = Identity(mesh.geometric_dimension())
D = (alphaD + alphaT*normv)*Id + 
    (alphaL - alphaT)*outer(velocity,velocity)/normv

# Forcing function
f1 = interpolate(Expression(('x[0] >= 0.4 && x[0] <= 0.5 && x[1] >= 0.2 && 
    x[1] <= 0.3 && x[2] >= 0.1 && x[2] <= 0.2 ? 1.0 : 0.0')),P)
f2 = interpolate(Expression(('x[0] >= 0.8 && x[0] <= 0.9 && x[1] >= 0.4 && 
    x[1] <= 0.5 && x[2] >= 0.2 && x[2] <= 0.3 ? 1.0 : 0.0')),P)
f3 = interpolate(Expression(('x[0] >= 0.5 && x[0] <= 0.6 && x[1] >= 0.7 && 
    x[1] <= 0.8 && x[2] >= 0.3 && x[2] <= 0.4 ? 1.0 : 0.0')),P)
f4 = interpolate(Expression(('x[0] >= 0.3 && x[0] <= 0.4 && x[1] >= 0.5 && 
    x[1] <= 0.6 && x[2] >= 0.2 && x[2] <= 0.3 ? 1.0 : 0.0')),P)
f5 = interpolate(Expression(('x[0] >= 0.5 && x[0] <= 0.6 && x[1] >= 0.2 && 
    x[1] <= 0.3 && x[2] >= 0.6 && x[2] <= 0.7 ? 1.0 : 0.0')),P)
f6 = interpolate(Expression(('x[0] >= 0.6 && x[0] <= 0.7 && x[1] >= 0.5 && 
    x[1] <= 0.6 && x[2] >= 0.7 && x[2] <= 0.8 ? 1.0 : 0.0')),P)
f7 = interpolate(Expression(('x[0] >= 0.4 && x[0] <= 0.5 && x[1] >= 0.7 && 
    x[1] <= 0.8 && x[2] >= 0.8 && x[2] <= 0.9 ? 1.0 : 0.0')),P)
f8 = interpolate(Expression(('x[0] >= 0.1 && x[0] <= 0.2 && x[1] >= 0.4 && 
    x[1] <= 0.5 && x[2] >= 0.7 && x[2] <= 0.8 ? 1.0 : 0.0')),P)
f = f1 + f2 + f3 + f4 + f5 + f6 + f7 + f8

# Parameters
h = Constant(1/float(seed)) # h-size
gamma = Constant(8/3) # Penalty term
n = FacetNormal(mesh) # Unit outward normal
vn = 0.5*(dot(velocity,n) + abs(dot(velocity,n))) # Upwinding term

# DG weak formulation
a = inner(D * grad(u), grad(v)) * dx(degree=(3,3)) \
    - dot(jump(v,n),avg(D*grad(u)))*(dS_h + dS_v) \
    - dot(avg(D*grad(v)),jump(u,n))*(dS_h + dS_v) \
    + gamma/h*dot(jump(v,n),jump(u,n))*(dS_h + dS_v) \
    - dot(grad(v),velocity*u)*dx(degree=(3,3)) \
    + dot(jump(v),vn('+')*u('+')-vn('-')*u('-'))*(dS_h+dS_v) \
    + dot(v, vn*u)*(ds_v+ds_t+ds_b)
L = v * f * dx(degree=(3,3))

# Boundary conditions
bc1 = DirichletBC(V, Constant(0.0), (1,2,3,4), method="geometric")
bc2 = DirichletBC(V, Constant(0.0), "bottom", method="geometric")
bc3 = DirichletBC(V, Constant(0.0), "top", method="geometric")
bcs = [bc1, bc2, bc3]

# Homogeneous boundary conditions
bcs = DirichletBC(P, Constant(0.0), (1,2,3,4))

# Assemble coefficient matrix
A = assemble(a, bcs=bcs)

# Assemble forcing vector
tmp = Function(P)
for bc in bcs:
  bc.apply(tmp)
b = Function(P)
bfree = assemble(L)
rhs_bcs = assemble(action(a,tmp))
b.assign(bfree - rhs_bcs)
for bc in bcs:
  bc.apply(b)   

# Create PETSc solver
initial_solver = PETSc.KSP().create(PETSc.COMM_WORLD)
initial_solver.setOptionsPrefix("initial_")
initial_solver.setOperators(A.M.handle)
initial_solver.setFromOptions()
initial_solver.setUp()

# Solve problem
solution = Function(P)
with b.dat.vec_ro as b_vec, solution.dat.vec as sol_vec:
  initial_solver.solve(b_vec,sol_vec)
\end{lstlisting}
\begin{lstlisting}[caption=GMSH geometry file for Listing \ref{Code:ex2},label=Code:gmsh,frame=single]
Point(1) = {0, 0, 0, 1.0};
Point(2) = {1, 0, 0, 1.0};
Point(3) = {1, 1, 0, 1.0};
Point(4) = {0, 1, 0, 1.0};
Point(5) = {4/9, 4/9, 0, 1.0};
Point(6) = {5/9, 4/9, 0, 1.0};
Point(7) = {5/9, 5/9, 0, 1.0};
Point(8) = {4/9, 5/9, 0, 1.0};
Line(1) = {1, 2}; 
Line(2) = {2, 3}; 
Line(3) = {3, 4}; 
Line(4) = {4, 1}; 
Line(5) = {5, 6}; 
Line(6) = {6, 7}; 
Line(7) = {7, 8}; 
Line(8) = {8, 5}; 
Line Loop(9) = {4, 1, 2, 3}; 
Line Loop(10) = {8, 5, 6, 7}; 
Plane Surface(11) = {9, 10};
Physical Line(12) = {4};
Physical Line(13) = {1};
Physical Line(14) = {2};
Physical Line(15) = {3};
Physical Line(16) = {7};
Physical Line(17) = {6};
Physical Line(18) = {5};
Physical Line(19) = {8};
Physical Surface(20) = {11};
\end{lstlisting}
\begin{lstlisting}[caption=Semi-smooth (VI - SS) method,label=Code:ss,frame=single]
# Create TAO object
ss_solver = PETSc.TAO().create(PETSc.COMM_WORLD)
ss_solver.setOptionsPrefix("ss_")

# Semi-smooth call-backs
def ss_formJac(tao, petsc_x, petsc_J, petsc_JP, A=None, a=None, bcs=None):
  A = assemble(a, bcs=bcs, tensor=A)
  A.M._force_evaluation()  
def ss_formFunc(tao, petsc_x, petsc_g, A=None, a=None, b=None, bcs=None):
  A = assemble(a, bcs=bcs, tensor=A)
  with b.dat.vec as b_vec:
    A.M.handle.mult(petsc_x, petsc_g)
    petsc_g.axpy(-1.0,b_vec) 
    
# Setup
ss_con = Function(solution.function_space())
with ss_con.dat.vec as con_vec, lb.dat.vec as lb_vec, ub.dat.vec as ub_vec:
  ss_solver.setConstraints(ss_formFunc, con_vec, kargs={'A':A,'a':a,'b':b,'bcs':bcs})
  ss_solver.setJacobian(ss_formJac,A.M.handle, kargs={'A':A,'a':a,'bcs':bcs})
  ss_solver.setType(PETSc.TAO.Type.SSFLS) # can also be ASFLS/SSILS/ASILS
  ss_solver.setVariableBounds(lb_vec,ub_vec)
  ss_solver.setFromOptions()

# Solve problem
def viss():
  with solution.dat.vec as sol_vec:
    ss_solver.solve(sol_vec)
\end{lstlisting}
\begin{lstlisting}[caption=Reduced-space active set (VI - RS) method,label=Code:rs,frame=single]
# Create SNES object
rs_solver = PETSc.SNES().create(PETSc.COMM_WORLD)
rs_solver.setOptionsPrefix("rs_")
  
# Reduced-space active-set call-backs
def rs_formJac(snes, petsc_x, petsc_J, petsc_JP):
  pass 
def rs_formFunc(snes, petsc_x, petsc_g, A=None, b=None):
  with b.dat.vec as b_vec:
    A.M.handle.mult(petsc_x, petsc_g)
    petsc_g.axpy(-1.0,b_vec)  
rs_con = Function(solution.function_space())

# Solve problem
def virs():
  with solution.dat.vec as sol_vec, lb.dat.vec as lb_vec, ub.dat.vec as ub_vec:
    with rs_con.dat.vec as con_vec:
      rs_solver.setFunction(rs_formFunc, con_vec, kargs={'A':A,'b':b})
    rs_solver.setJacobian(rs_formJac,A.M.handle)
    rs_solver.setType(PETSc.SNES.Type.VINEWTONRSLS)
    rs_solver.setVariableBounds(lb_vec,ub_vec)
    rs_solver.setFromOptions()
    rs_solver.solve(None,sol_vec)
    rs_solver.reset()
\end{lstlisting}
\begin{lstlisting}[caption=Trust region Newton (QP - TRON) 
method,label=Code:tron,frame=single]
# Create TAO object
tron_solver = PETSc.TAO().create(PETSc.COMM_WORLD)
tron_solver.setOptionsPrefix("tron_")
  
# TRON call-backs
def tron_formHess(tao, petsc_x, petsc_H, petsc_HP):
  pass
def tron_formObjGrad(tao, petsc_x, petsc_g, A=None, b=None):
  with b.dat.vec_ro as b_vec:
    A.M.handle.mult(petsc_x, petsc_g)
    xtHx = petsc_x.dot(petsc_g)
    xtf = petsc_x.dot(b_vec)
    petsc_g.axpy(-1.0,b_vec)
    return 0.5*xtHx - xtf

# Setup
with lb.dat.vec as lb_vec, ub.dat.vec as ub_vec:
  tron_solver.setVariableBounds(lb_vec,ub_vec)
tron_solver.setObjectiveGradient(tron_formObjGrad, kargs={'A':A,'b':b})
tron_solver.setHessian(tron_formHess,A.M.handle)
tron_solver.setType(PETSc.TAO.Type.TRON)
tron_solver.setFromOptions()
  
# Solve problem
def qptron():
  with solution.dat.vec as sol_vec:
    tron_solver.solve(sol_vec)
\end{lstlisting}

\section{Solution strategy for the Darcy equation}
\label{A2:darcy}
\subsection{Weak formulation}
Let $\mathbf{w}(\mathbf{x})$ and $q(\mathbf{x})$ represent
the weighting functions for velocity and pressure respectively.
The relevant function spaces read as follows:
\begin{subequations}
\begin{align}
    &\mathcal{V} := \left\{\mathbf{v} 
    \in \left(L_2(\Omega)\right)^{d} \; \Bigm\vert \; 
    \mathrm{div}[\mathbf{v}] \in L_2(\Omega), \; 
    \mathbf{v} \cdot \widehat{\mathbf{n}}
    = v_n \; \mathrm{on} 
    \; \Gamma^{v} \right\} \\
    &\mathcal{W} := \left\{\mathbf{w}
    \in \left(L_2(\Omega)\right)^{d} \; \Bigm\vert \; 
    \mathrm{div}[\mathbf{w}] \in L_2(\Omega), \; 
    \mathbf{w} \cdot \widehat{\mathbf{n}}  = 0 \; 
    \mathrm{on} \; \Gamma^{v} 
    \right\}  \\
    &\mathcal{P} :=  L_2(\Omega)
\end{align}
\end{subequations}
where $L_2(\Omega)$ is the space of square integrable functions. 
The WF under the classical mixed formulation for the Darcy equations 
\eqref{Eqn:S6_GE_momentum} through \eqref{Eqn:S6_GE_velocity_bc} 
reads:~Find $\mathbf{v}(\mathbf{x})\in\mathcal{V}$ and 
$p(\mathbf{x})\in\mathcal{P}$ such that we have:
\begin{align}
  \mathcal{B}(\mathbf{w},q;\mathbf{v},p) = \mathcal{L}(\mathbf{w},q)\quad
  \forall \mathbf{w}(\mathbf{x}) \in \mathcal{W},\;q(\mathbf{x})\in\mathcal{P}
\end{align}
where the bilinear form and linear functional are:
\begin{align}
  \label{Eqn:AppendixB_bilinear}
  \mathcal{B}(\mathbf{w},q;\mathbf{v},p) &:= \left(\mathbf{w}(\mathbf{x});\;
  \frac{\mu(c(\mathbf{x}))}{k(\mathbf{x})}\mathbf{v}(\mathbf{x})\right)_{\Omega}
  -\left(\mathrm{div}[\mathbf{w}(\mathbf{x})];\;p(\mathbf{x})\right)_{\Omega}
  -\left(q(\mathbf{x});\;\mathrm{div}[\mathbf{w}(\mathbf{x})]\right)_{\Omega}\\
  \label{Eqn:AppendixB_linear}
  \mathcal{L}(\mathbf{w},q) &:= \left(\mathbf{w}(\mathbf{x});\;\rho\mathbf{b}(\mathbf{x})
  \right)_{\Omega} - \left(\mathbf{w}(\mathbf{x})\cdot\widehat{\mathbf{n}}(\mathbf{x});\;
  p^{\mathrm{P}}\right)_{\Gamma^p}
\end{align}
The lowest order Raviart-Thomas (RT0) space \citep{raviart1977mixed} is employed 
because it ensures element-wise mass conservation. To map the RT0 element onto
quadrilateral and extruded hexahedrons, contravariant Piola mapping is used (see 
\citep{rognes2009efficient,Bercea2016} for further details).
The discrete formulations may be assembled into the following 
block format:
\begin{align}
\label{Eqn:AppendixB_system}
\begin{pmatrix}
\boldsymbol{K}_{vv} & \boldsymbol{K}_{vp}\\
\boldsymbol{K}_{pv} & \boldsymbol{K}_{pp}
\end{pmatrix}
\begin{pmatrix}
\boldsymbol{v} \\
\boldsymbol{p}
\end{pmatrix} =
\begin{pmatrix}
\boldsymbol{f}_{v} \\
\boldsymbol{f}_{p}
\end{pmatrix}
\end{align}
where the terms in equation \eqref{Eqn:AppendixB_bilinear} 
respectively correspond to $\boldsymbol{K}_{vv}$, $\boldsymbol{K}_{vp}$, 
and $\boldsymbol{K}_{pv}$, and equation \eqref{Eqn:AppendixB_linear} 
corresponds to $\boldsymbol{f}_{v}$. It should be noted that 
$\boldsymbol{K}_{pp}$ and $\boldsymbol{f}_{p}$ are a zero 
matrix and zero vector, respectively.

\subsection{Preconditioning methodology} Equation \eqref{Eqn:AppendixB_system}
is a saddle-point system which is tricky to precondition effectively 
for large-scale problems. Several classes of iterative solvers and preconditioning
strategies exist for these types of problems \citep{benzi2005numerical,
elman2006finite,murphy2000note}. One could alternatively employ 
hybridization techniques \citep{cockburn2009unified} which introduces
Lagrange multipliers to reduce the difficulty of solving such problems. 
In this paper, we employed a Schur complement approach 
to precondition the saddle-point system. Conceptually, the problem 
at hand is a 2$\times$2 block 
matrix:
\begin{align}
\boldsymbol{K} = \begin{pmatrix}
\boldsymbol{K}_{vv} & \boldsymbol{K}_{vp}\\
\boldsymbol{K}_{pv} & \boldsymbol{0}
\end{pmatrix}
\end{align}
which admits a full factorization of
\begin{align}
\boldsymbol{K} = \begin{pmatrix}
\boldsymbol{I} & \boldsymbol{0}\\
\boldsymbol{K}_{pv}\boldsymbol{K}_{vv}^{-1} & \boldsymbol{I}
\end{pmatrix}
\begin{pmatrix}
\boldsymbol{K}_{vv} & \boldsymbol{0}\\
\boldsymbol{0} & \boldsymbol{S}
\end{pmatrix}
\begin{pmatrix}
\boldsymbol{I} & \boldsymbol{K}_{vv}^{-1}\boldsymbol{K}_{vp}\\
\boldsymbol{0} & \boldsymbol{I}
\end{pmatrix}
\end{align}
where $\boldsymbol{I}$ is the identity matrix and 
\begin{align}
\boldsymbol{S}=-\boldsymbol{K}_{pv}\boldsymbol{K}_{vv}^{-1}\boldsymbol{K}_{vp}
\end{align}
is the Schur complement. The inverse can therefore be written as:
\begin{align}
\boldsymbol{K}^{-1} = \begin{pmatrix}
\boldsymbol{I} & -\boldsymbol{K}_{vv}^{-1}\boldsymbol{K}_{vp}\\
\boldsymbol{0} & \boldsymbol{I}
\end{pmatrix}
\begin{pmatrix}
\boldsymbol{K}_{vv}^{-1} & \boldsymbol{0}\\
\boldsymbol{0} & \boldsymbol{S}^{-1}
\end{pmatrix}
\begin{pmatrix}
\boldsymbol{I} & \boldsymbol{0}\\
-\boldsymbol{K}_{pv}\boldsymbol{K}_{vv}^{-1} & \boldsymbol{I}
\end{pmatrix}
\end{align}
The task at hand is to approximate $\boldsymbol{K}_{vv}^{-1}$ and
$\boldsymbol{S}^{-1}$. Since $\boldsymbol{K}_{vv}$ is a mass matrix
for the Darcy equation, we can invert it using the 
ILU(0) (incomplete lower upper) solver. We note that the 
Schur complement is spectrally a Laplacian, so we can employ
a diagonal mass-lumping of $\boldsymbol{K}_{vv}$ to give
a good approximation to $\boldsymbol{K}_{vv}^{-1}$. That is,
we can use
\begin{align}
\boldsymbol{S}_p = -\boldsymbol{K}_{pv}\mathrm{diag}\left(
\boldsymbol{K}_{vv}\right)^{-1}\boldsymbol{K}_{vp}
\end{align}
to precondition the inner solver inverting $\boldsymbol{S}$.
For this block we employ the multi-grid V-cycle on $\boldsymbol{S}_p$ 
using the Trilinos ML package (\citep{ml_users_guide}).
These blocks are symmetric and positive-definite so 
one could employ the CG solvers to obtain the inverses. 
When the inverses are obtained, only a single sweep of
flexible GMRES is needed to obtain the full solution.
However, instead of individually solving for 
$\boldsymbol{K}_{vv}^{-1}$ and $\boldsymbol{S}_p$, 
we could alternatively apply a single sweep of ILU(0) and multi-grid, respectively, 
and rely on GMRES to solve the entire block system. 
By providing less accurate approximations of the
inner individual blocks, the number of GMRES iterations for the
overall system increases but the numerical accuracy
remains the same. We have found that this methodology 
is computationally less expensive and more practical for 
large-scale computations. One could alternatively employ 
one of the factorizations (either lower or upper) to 
decrease the computational cost associated with setting 
up the preconditioner.
Below are some necessary PETSc command-line options for
the described Schur complement approach.
\begin{lstlisting}[caption=PETSc solver options for the Schur complement approach,label=Code:schur,frame=single]
parameters = {
  # Outer solver
  "ksp_type": "gmres",
  
  # Schur complement with full factorization
  "pc_type": "fieldsplit",
  "pc_fieldsplit_type": "schur",
  "pc_fieldsplit_schur_fact_type": "full",
    
  # Diagonal mass lumping
  "pc_fieldsplit_schur_precondition": "selfp",
  
  # Single sweep of ILU(0) for the mass matrix
  "fieldsplit_0_ksp_type": "preonly",
  "fieldsplit_0_pc_type": "ilu",
  
  # Single sweep of multi-grid for the Schur complement
  "fieldsplit_1_ksp_type": "preonly",
  "fieldsplit_1_pc_type": "ml"
}
\end{lstlisting}

\bibliographystyle{plainnat}
\bibliography{references}

\end{document}